\definecolor{NiColor}{RGB}{77,77,255}
\definecolor{NiColoRed}{RGB}{255,77,77}
\definecolor{NiCitation}{RGB}{77,255,77}
\def\emptyset{\varnothing} 
\def\b1{{1\!\!1}}
\def\cB{\mathscr{B}}
\def\cD{{\ca D}}
\def\cF{\mathscr{F}}
\def\cL{\mathscr{L}}
\def\cM{\mathscr{M}}
\def\cO{{\ca O}}
\def\cS{\mathscr{S}}
\def\sH{{\mathsf H}}
\def\bC{{\mathbb C}}           
\def\bE{{\mathbb E}}
\def\bN{{\mathbb N}}
\def\bR{{\mathbb R}}
\def\bP{{\mathbb P}}
\def\gB{{\mathfrak B}}
\def\beq{\begin{eqnarray}}
\def\eeq{\end{eqnarray}}
\newcommand{\ca}[1]{{\cal #1}}         
\newtheoremstyle{plain}
{5pt}
{9pt}
{\itshape}
{}
{\itshape\bfseries}
{}
{1em}
{}
\theoremstyle{thm}
\newtheorem{theorem}{\em Theorem}[section]
\newtheorem{lemma}[theorem]{\em Lemma}
\newtheorem{proposition}[theorem]{\em Proposition}
\newtheorem{definition}[theorem]{\em Definition}
\newtheorem{remark}[theorem]{\em Remark}
\newtheorem{example}[theorem]{\em Example}
\begin{document}


\par
\bigskip
\large
\noindent
{\bf  An operational construction of the sum of two non-commuting observables in quantum theory and related constructions}
\bigskip
\par
\rm
\normalsize


\noindent  {\bf Nicol\`o Drago$^{a}$}, {\bf Sonia Mazzucchi$^{b}$},  {\bf Valter Moretti$^{c}$ (Corresponding author)}\\
\par

\noindent 
  Department of  Mathematics, University of Trento, and INFN-TIFPA \\
 via Sommarive 14, I-38123  Povo (Trento), Italy.\\

\noindent $^a$nicolo.drago@unitn.it,  $^b$sonia.mazzucchi@unitn.it,  $^c$valter.moretti@unitn.it\\

 \normalsize

\par

\rm\normalsize

\noindent {\small August 2020}

\rm\normalsize


\par
\bigskip

\noindent
\small
{\bf Abstract}.   The existence of a real linear-space structure on  the set of observables of a quantum system -- i.e., the requirement that
 the linear combination of two generally non-commuting observables $A,B$ is an observable as well -- is a fundamental 
postulate  of  the quantum theory yet before introducing any  structure of algebra. 
However, it is by no means clear how to choose the measuring instrument of a general observable of the form
 $aA+bB$ ($a,b\in \bR$) if such measuring instruments are given for the addends observables $A$ and $B$
 when they are  incompatible observables. A mathematical version of this dilemma is how to construct the 
spectral measure of $f(aA+bB)$ out of the spectral measures of $A$ and $B$.
We present such a construction with a formula which is valid for general unbounded selfadjoint operators 
$A$ and $B$, whose spectral measures may not commute,  and a wide class of  functions $f: \bR \to \bC$.  
In the bounded case,  we prove that  the Jordan product of $A$ and $B$ (and suitably symmetrized polynomials of $A$ and $B$) can be constructed with the same
 procedure out of the spectral measures of $A$ and $B$. The formula turns out to have an interesting 
 operational interpretation and, in particular cases, a nice interplay with the theory of Feynman path  
integration and the Feynman-Kac formula.
\normalsize

\tableofcontents

\section{Introduction}\label{Sec: introduction} 

The elementary formulation of Quantum Theory  in a {\em complex separable} Hilbert  space $\sH$ can be described  as a {\em non-Boolean} probability theory. There, 
{\em quantum states} $\mu$ are {\em probability measures} over the orthomodular separable  $\sigma$-complete lattice
 $\cL(\sH)$ of orthogonal projectors (the order relation being the standard inclusion of projection subspaces)
 which generalizes the notion of $\sigma$-algebra (see, e.g., \cite{vonNeumann,Mackey,varadarajan,BC,landsman,Moretti2017}).
 The Hilbert space $\sH$ depends on the physical system $S$ one intends to study  and the elements of $\cL(\sH)$ physically represent {\em elementary propositions} of $S$,  also known as {\em YES-NO elementary observables} 
or {\em tests} about $S$. They  are therefore assumed to have a definite operational meaning in terms of experimental devices.
{\em Commutativity} of  $P,Q \in \cL(\sH)$ has the physical meaning of {\em  simultaneous measurability}
 of the elementary observables  represented by $P$ and $Q$. In that case  $P$
and $Q$ are said to be {\em compatible}.

%


The issue  we want to address in this work concerns the set $\cO_S$ of   {\em observables} of $S$.
An {\em observable} $A$ is  primarily a collection of elementary propositions $P^{(A)}_E$ labelled by real 
 Borel sets $E\in \cB(\bR)$.  The operator $P^{(A)}_E$ admits an explicit operational interpretation in terms of the statement ``the outcome of the measurement of the observable $A$ belongs to $E$.''  We stress that there must be one (or several)  measuring instrument(s) associated to the class $\{P^{(A)}_E\}_{E \in \cB(\mathbb{R})}$.
Sound physical and logical arguments  (see, e.g., \cite{varadarajan,landsman,Moretti2017}) require that the collection 
$\{P^{(A)}_E\}_{E \in \cB(\mathbb{R})}$ define a $\sigma$-Boolean algebra homomorphism form $\cB(\bR)$ to $\cL(\sH)$. In terms more familiar to physicists,  $\{P^{(A)}_E\}_{E \in \cB(\mathbb{R})}$ is a {\em projector-valued measure}
 (PVM) also known as {\em spectral measure}.
Among them,   {\em compatible} observables $A$, $B$ are by definition those whose associated PVMs are made of pairwise compatible projectors, i.e.,  $P_E^{(A)}P_F^{(B)}=P_F^{(B)}P_E^{(A)}$ for all $E,F \in \cB(\bR)$.

At this juncture,  the spectral machinery establishes that there is a one-to-one correspondence between PVMs over $\sH$  and 
(generally unbounded) selfadjoint operators  $A: D(A) \to \sH$.
As a matter of fact,  the following holds
\begin{equation}
	A = \int_{\bR} \lambda dP^{(A)}(\lambda)\:.\label{st}
\end{equation}
We conclude that $\cO_S$  is made of {\em some} selfadjoint operators $A: D(A) \to \sH$ and  the natural question emerges about how large
$\cO_S$ is  in  the whole  set of selfadjoint operators  over $\sH$.

As a matter of fact,  it is generally assumed {\em as a further postulate}  that  $\cO_S^{(b)}:=\cO_S\cap\mathfrak{B}(\mathsf{H})$  is  a linear subspace
\footnote{It is finally assumed that, in absence of {\em superselection rules} and {\em gauge symmetries}, the complex span of the elements in $\cO_S^{(b)}$ amounts to the whole $\gB(\sH)$, otherwise it is a von Neumann algebra in $\sH$.}
of the real linear space of bounded
\footnote{
	Notice that, $\cO_S^{(b)}$ suffices to recover all physical information carried by $\cO_S$.
	Indeed for all $A\in\mathcal{O}_S$ one has $A_N := \int_{[-N,N]} \lambda dP^{(A)}(\lambda)\in\mathfrak{B}(\mathsf{H})$ ($N\in\mathbb{N}$) as well as $(A_N-A)\psi\to 0$ as $N\to+\infty$ for all $\psi\in D(A)$.
	The operator $A_N$ can be interpreted as the same observable $A$ measured with instruments whose range is restricted to $[-N,N]$.}
selfadjoint operators of $\gB(\sH)$.
This structure is  later enriched \cite{Emch,strocchi,landsman,Moretti2017})  by adding  in particular the so-called (non-associative) {\em Jordan  product}, 
 and assuming suitable (weak) topological features compatible with the spectral machinery. The final construction turns out to be a concrete
 {\em Jordan $W$-algebra} in   $\gB(\sH)$ or, more generally, a concrete {\em von Neumann algebra}  in   $\gB(\sH)$ when complex combination
 of observables are permitted.  These structures  are actually difficult to physically  justify {\em a priori}. If assumed, they however promote
 the theory to a very high level of effectiveness in physics {\em a posteriori}.

Already sticking to the linear structure of the set of bounded observables, we argue that a long standing (see, e.g., \cite{gudder,streater}) open issue  pops out, with a twofold nature, both  physical and mathematical.
Suppose that  $A,B \in \cO_S^{(b)}$ and that we know their  respective   measuring instruments.
If  we pick out $a,b \in \bR$, in spite of the postulate about the linear structure of $\cO_S^{(b)}$,  there is no general way  to associate a measuring instrument to $aA+bB$ out of those of $A$ and $B$.
This obstruction is valid unless the observables $aA$ and $bB$ are compatible
\footnote{
	The problem tackled in this work is a consequence of the existence of incompatible observables so that it does {\em not} arise in classical  physics.
	Given a classical system $S$, for instance described in its phase space, there is a minimal set of observables such that all remaining observables are functions of them.
	Therefore, measuring the first ones exactly amounts  to measuring {\em all} observables of $S$.
}.
In that case, the instruments corresponding to the {\em joint PVM} of $P^{(aA)}$ and $P^{(bB)}$ can be exploited.

The mathematical version of the  issue raised above is that {\em no}  general formula is known  determining the operator  $f(aA+bB)$ -- and every  projector
$P^{(aA+bB)}_E$   in particular -- as a function of  $P^{(A)}$ and $P^{(B)}$ when these PVMs do not commute.
This work addresses  the outlined problem from the mathematical side.
We will prove -- \textit{cf.} (\ref{mainid2}) and theorem \ref{teorapgen} -- that for a suitable class of continuous functions $f: \bR \to \bR$ the following formula holds true:
\begin{equation}\label{formula1}
	f\left(\overline{aA+bB}\right)
	=\mbox{s-}\lim_{N\to +\infty} \int_{\bR^{2N}} f\left( \frac{1}{N}\sum^N_{n=1}  
	(a\lambda_{n} + b\mu_{n})\right) dP^{(A)}_{\lambda_1}dP^{(B)}_{\mu_1}\cdots dP^{(A)}_{\lambda_N}dP^{(B)}_{\mu_N}\:.
\end{equation}
This way the selfadjoint operator $f(aA+bB)$ can be computed out of the PVMs of the selfadjoint operators $A$ and $B$ .
Equation \eqref{formula1} is valid for generally non-commuting $P^{(A)}$ and $P^{(B)}$,  also in the case of unbounded selfadjoint  operators $A$ and $B$ under the condition that $aA+bB$ is essentially selfadjoint over $D(A)\cap D(B)$.
The class of functions $f$ is sufficiently large to include all polynomials when $A$ and $B$ are bounded, in particular $aA+bB$ itself. 

As one of the various  byproducts of the above-mentioned result for the case of $A$ and $B$ bounded, we also find an identity connecting the {\em Jordan product} $A\circ B := \frac{1}{2}(AB+BA)$ to the generally non-commuting PVMs $P^{(A)}$ and $P^{(B)}$ -- \textit{cf.} theorem \ref{teopolynomials}.

\paragraph{Comparison with existing literature.}
The assumption of the existence of a real linear space structure over the set of observables is crucial also in much more abstract, and apparently more operational, formulations of the quantum theory from \cite{Mackey}to \cite{strocchi}.
Therein the primary object is an abstract {\em Jordan Banach $^*$-algebra} (or directly a $C^*$-algebra), out of which the Hilbert space formulation is recovered a posteriori through the GNS construction.
Observables are defined in terms of their {\em expectation values} (here properly interpreted as {\em states}) and it is {\em postulated} that, given a pair of observables $A$ and $B$ and a corresponding pairs of reals $a,b$,  there is an observable  denoted by $aA+bB$ whose expectation values are the linear combination of the expectation values of $A$ and $B$ with  coefficients $a$ and $b$.
Though this observable is proved to be unique (because the states are reasonably supposed to separate the observables), no ideas are supplied to solve the problem
we pointed out: how we can measure $aA+bB$ if we know how to measure $A$ and $B$ when they are incompatible.

From the mathematical side our results are entangled with the construction of a functional calculus for non-commuting operators.
Several rigorous results exist in the literature about this subject.
However, to the best of our knowledge, almost all are concentrated on two physically well-motivated approaches: the generalization of {\em Feynman's functional calculus for $T$-ordered products}  of operators and extensions of {\em Weyl calculus} -- see \cite{JL,JoLaNi,GiZa} and \cite{NSS} for exhaustive reviews on these approaches -- with several original contributions from outstanding authors \cite{Nelson,Araki,Maslov}.
The mathematical technology, the types of functional spaces and the interactions with some other mathematical objects like the Feynman-Kac integral  share  some similarities with the content of in this work -- \textit{cf.} section \ref{Sec: relation with Feynman integration}.
However, to authors' knowledge, none of those papers  presents   an integral expansion like  our formula (\ref{formula1})  where (a) a joint integral with respect to the non-commuting spectral measures of the involved operators shows up and (b)  the physically suggestive averaged value of spectral parameters takes place -- \textit{cf.} section \ref{Sec: physical interpretation}.

\paragraph{Structure of the work.}
The paper is organized as follows.
After recalling some general facts and the general notation used throughout,  in section \ref{Sec: extension to infinite dimensional Hilbert spaces} 
we  tackle the problem raised above first in the finite-dimensional case and next in the infinite dimensional case.
Dealing with a pair of selfadjoint operators $A$, $B$ in a complex Hilbert space and a continuous bounded function $f: \bR \to \bC$ of a certain class,
our main result consists of a formula where the operator $f(aA+bB)$ is written as the (strong) limit of a sequence  of certain operator-valued integrals 
 whose measure is an alternating  product of spectral measures of $A$ and $B$.
As a matter of fact, we prove  Eq. (\ref{prototypeC}) in the finite-dimensional case and Eq. (\ref{mainid2}) in the much harder infinite-dimensional case.
Section \ref{Sec: some applications} is devoted to study some applications: In particular we present a (fairly explicit) construction for the PVM of $aA+bB$ and the Jordan product $A\circ B$ in terms of the PVM of $A$ and $B$.
Eventually we discuss the nice interplay with both the theory of Feynman integration in the phase space as well as the so called Feynman-Kac formula.
Section \ref{Sec: counterexamples} contains some counterexamples which demonstrate how part of the structure introduced in the
 infinite dimensional case cannot be improved. 
In section \ref{Sec: physical interpretation} we discuss the physical interpretation of the constructed formalism in the finite dimensional case where no technical issues take place, allowing us to focus directly on physics. A short summary ends the paper.

\paragraph{General conventions and notation.}
We adopt throughout the paper  the standard definition of {\em complex measure} \cite{Rudin87}, as a map  $\mu :\Sigma \to \bC$ 
which is  {\em unconditionally $\sigma$-additive}  over the $\sigma$-algebra $\Sigma$.
With this definition the {\em total variation} $||\mu|| :=|\mu|(\Sigma)$ turns out to be  finite. 
By $\cB(X)$ we denote the Borel $\sigma$-algebra over a topological space $X$ and $L^p(X, \nu)$ denotes the standard  $L^p$ space with respect to a positive 
{\em Borel measure} $\nu$   \cite{Rudin87} (in case $\nu$ is complex, the associated $L^p$ spaces are those referred to $|\nu|$).
By $L^p(\bR, dx)$  (also with $dy$ in place of $dx$) we mean the $L^p$ space with respect to  Lebesgue measure (viewed as a Borel measure) 
on $\bR$.
In case the measure $\nu$ is defined on a generic $\sigma$-algebra $\Sigma$  over the set $X$, we use the more precise notation $L^p(X, \Sigma, \nu)$. 
If $E\subset X$, for any given set $X$, then the {\em  indicator function} $1_E$ of $E$ is defined as $1_E(x):= 1$ if $x\in E$ and $1_E(x)=0$ if 
$x\in X \setminus E$.

A {\em Hilbert space} is always assumed to be {\em complex}. We assume the usual convention concerning  {\em standard domains} of
 composition of operators over a Hilbert space $\sH$,
$A:D(A) \to \sH$ and $B: D(B) \to \sH$ with $D(A), D(B) \subset \sH$ linear subspaces:
(i) $D(BA) := \{\psi \in D(A) \:|\: A\psi \in D(B)\}$;
(ii) $D(A+B):= D(A)\cap D(B)$;
(iii) $D(aA):=D(A)$ for $a\in \bC \setminus \{0\}$ and $D(0A):= \sH$.
These requirements yield in particular  $D(aA+bB) = D(A)\cap D(B)$ if $ab \neq 0$.
We define $A^k := A \cdots (\mbox{$k$ times})\cdots A$
if $k=1,2,\ldots$ and $A^0:=I$. By $A\subset B$ we mean that $D(A) \subset D(B)$ and $B|_{D(A)}=A$. 

We denote by $A^*:D(A^*) \to \sH$ the {\em adjoint operator} of a densely-defined operator $A: D(A) \to \sH$  and by $\overline{A}$ its {\em closure}.  
A densely-defined operator $A$ is {\em selfadjoint} if $A^*=A$. 
The operator $A$ is {\em essentially selfadjoint} if $A\subset A^*$ and $\overline{A}$ is selfadjoint.
We denote by $\gB(\sH)$  the unital $C^*$-algebra of bounded everywhere defined operators over the Hilbert space $\sH$, and $\cL(\sH)\subset \gB(\sH)$
 denotes the  lattice of {\em orthogonal 
projectors}: $P=P^*=PP$.

The {\em spectrum} of an operator $A$ is always denoted by $\sigma(A)$.
According to the {\em spectral theory} (details, e.g., in  \cite{RS2,Moretti2017}), if $T: D(T) \to \sH$ is 
a generally unbounded selfadjoint operator in $\sH$ and $g: \bR \to \bC$ is measurable, then one has
\begin{align}\label{Eq: function of self-adjoint operator}
	g(T):= \int_{\bR} g(\tau)  \: dP^{(T)}(\tau)\,,\qquad
	D(g(T)) := \left\{\psi \in \sH \:\left|\: \int_\bR |g(\tau)|^2 d \mu^{(T)}_{\psi,\psi}(\tau) <+\infty\right.\right\}\,.
\end{align}
Above,  $ \mu^{(T)}_{\phi, \psi}(E) :=
\langle \phi|P^{(T)}_E\psi \rangle$ if $E\in \cB(\bR)$ defines a  regular complex Borel measure, which is positive  if $\psi=\phi$,
 and  $\{P^{(T)}_E\}_{E\in \cB(\bR)}$ is the {\em projection-valued measure} (PVM) -- also known as a  {\em spectral measure} -- uniquely associated to $T$.
The right-hand side of (\ref{Eq: function of self-adjoint operator}) can be defined \cite{Rudin93}\footnote{Alternatively, (\ref{spectral2}) 
is a consequence of a different but equivalent definition of the right-hand side of  (\ref{Eq: function of self-adjoint operator}), 
which uses the strong operator topology instead of the weak one. See, e.g., \cite{Moretti2017}}  as  the unique operator $T$ on $D(g(T))$ such that
\begin{equation}\langle \phi| g(T)\psi \rangle = \int_{\bR} g(\tau) d\mu_{\phi,\psi}(\tau) \quad 
\mbox{for all $\phi \in \sH$ and $\psi \in D(g(T))$.}\label{spectral2}\end{equation}
The support of $P^{(T)}$  coincides with the spectrum $\sigma(T)$ and the integration in (\ref{Eq: function of self-adjoint operator})
 can be restricted accordingly without affecting the identity. The {\em spectral decomposition} of $T$ is nothing but
 (\ref{Eq: function of self-adjoint operator}) with $g : \bR \ni \tau \mapsto \tau \in \bR$ and the {\em spectral theorem} 
  just  states that this decomposition exists and $P^{(T)}$ is uniquely determined by $T$.

\section{From $\bC^n$ to  an infinite dimensional Hilbert space}\label{Sec: extension to infinite dimensional Hilbert spaces}

\subsection{The finite-dimensional Hilbert space case}\label{finite}
The finite-dimensional case is a comfortable  arena where building up our formalism, since no problems related 
with domains and choices of topologies take place.
Let us therefore suppose that $\sH$ is finite-dimensional.
Without loss of generality, we can assume $\sH= \bC^k$.  From the elementary spectral theory \cite{Schmu}, we define
\begin{equation}g(T) := \sum_{\tau \in \sigma(T)} g(\tau) P_\tau^{(T)}\label{elemspectr}\end{equation}
for every function $g: \bR \to \bR$ and every selfadjoint operator $T$, where $T$  is  a $k\times k$  Hermitian matrix
 $T: \sH \to \sH$.  The spectrum $\sigma(T)\subset \bR$ of $T$ denotes its set of eigenvalues and $P^{(T)}_\tau$ indicates
 the orthogonal projector onto the eigenspace of $\tau \in \sigma(T)$, so that $\{P_\tau^{(T)}\}_{\tau\in \sigma(T)}$
 completely determines the PVM $P^{(T)}$ of $T$.
No technical snags pop out with the sum in (\ref{elemspectr}), since $\sigma(T)$ is a non-empty  set of at most 
 $k= \dim (\sH) <+\infty$ elements.

We now focus on a couple of selfadjoint operators $A: \sH \to \sH$, $B: \sH\to \sH$ with, in general,  $AB \neq BA$. We intend to write $f(aA+bB)$
 in terms of the PVMs $P^{(A)}$ and $P^{(B)}$ for $a,b\in \bR$ and where 
$f: \bR \to \bC$ is a sufficiently regular function.  For technical reasons which will be evident shortly, we henceforth assume that:
(i) $f : \bR \to \bC$ is continuous;
(ii) $f$ belongs to $L^1(\bR, dx)$;
(iii) the inverse Fourier transform $\check{f}(t) := \frac{1}{2\pi}\int_\bR f(x) e^{-itx} dx$ belongs to
$L^1(\bR, dx)$.
Every $f\in \cS(\bR)$ in particular satisfies the said three hypotheses.
Assuming (i),(ii) and (iii),  the {\em pointwise inversion theorem} of the Fourier transform is valid: $f(x) = \int_\bR \check{f}(t) e^{itx} dt$ for 
every $x\in \mathbb{R}$.

The pivotal tool underpinning our result is the celebrated {\em Trotter formula},
\begin{equation}e^{it(aA+bB)} = \lim_{N\to +\infty} \left(e^{i\frac{t}{N}aA} e^{i\frac{t}{N}aB}\right)^N\:, \quad t \in \bR\:\label{Trotter0},\end{equation}
whose convergence is  here referred to every normed topology on $\gB(\sH)$, since that (Banach) space is finite dimensional.
Spectrally decomposing  the exponentials on both sides into a finite linear combinations of matrices according to (\ref{elemspectr}) and using
  of Lebesgue's dominated 
convergence theorem 
term by term 
while  integrating both sides against  $\check{f}(t)$,  we have, 
\begin{equation}\label{6-bis}\int_\bR\check{f}(t) e^{it(aA+bB)} dt = \lim_{N\to +\infty} \int_{\bR} \check{f}(t) \underbrace{e^{i\frac{t}{N}aA} e^{i\frac{t}{N}aB}\cdots 
e^{i\frac{t}{N}aA} e^{i\frac{t}{N}aB}}_{N \: \mathrm{times}} dt\:.\end{equation}
Exploiting (\ref{elemspectr}), the  identity \eqref{6-bis} can be expanded as
\begin{align}&\sum_{\tau \in \sigma(aA+bB)}\left(\int_\bR\check{f}(t) e^{it\tau} dt \right)\: P_\tau^{(aA+bB)} \nonumber \\&=
 \lim_{N\to +\infty} \sum_{\lambda_j \in \sigma(A), \mu_j\in \sigma(B)} \left(\int_{\bR} \check{f}(t) e^{i\frac{t}{N}a\lambda_1} e^{i\frac{t}{N}b\mu_1}\cdots e^{i\frac{t}{N}a\lambda_N} e^{i\frac{t}{N}b\mu_N} dt \right) P^{(A)}_{\lambda_1} 
 P^{(B)}_{\mu_1}\cdots P^{(A)}_{\lambda_N}  
P^{(B)}_{\mu_N}\:.\nonumber \end{align}
Since $f(x) = \int_\bR \check{f}(t) e^{itx} dt$  and taking (\ref{elemspectr}) into account on the left-hand side, we  obtain the 
prototype of our general formula
\begin{equation}\label{Eq: f(aA+bB) in the finite dimensional case}
	f(aA+bB)=
	\lim_{N\to +\infty}\sum_{\lambda_j \in \sigma(A), \mu_j\in \sigma(B)}
	f\left( \frac{1}{N}\sum_{j=1}^N (a\lambda_j+ b\mu_j)\right)
	P^{(A)}_{\lambda_1}P^{(B)}_{\mu_1}\cdots P^{(A)}_{\lambda_N}P^{(B)}_{\mu_N}\:.
\end{equation}
 In  $\sH= \bC^k$, it is not difficult to prove that $AB-BA=0$  implies
$\big(e^{i\frac{t}{N}aA} e^{i\frac{t}{N}aB}\big)^N = e^{it(aA+bB)}\:. $ Hence,  following our derivation of (\ref{Eq: f(aA+bB) in the finite dimensional case}),
the limit on the right-hand side of (\ref{Eq: f(aA+bB) in the finite dimensional case}) is not necessary since its 
 argument  turns out to be constant in $N$.
Therefore, for commuting $A$ and $B$, equation (\ref{Eq: f(aA+bB) in the finite dimensional case}) reduces to
\begin{equation}\label{prototypeC}f(aA+bB)=\sum_{\lambda \in \sigma(A), \mu\in \sigma(B)} f( a\lambda+ b\mu) P^{(A)}_{\lambda} 
 P^{(B)}_{\mu}\:,\end{equation} 
which is nothing but the usual spectral decomposition of the left-hand side with respect to the {\em joint PVM} of $A$ and $B$.

A physically important  case of (\ref{prototypeC}) would be  obtained by choosing $f= 1_{[\alpha,\beta)}$  (the indicator function of the set $[\alpha,\beta)$). 
As a matter of fact,   $1_{[\alpha,\beta)}(aA+bB) = P^{(aA+bB)}_{[\alpha,\beta)}$ is the spectral projection of $[\alpha,\beta)$ and this family of 
projectors, when $\alpha < \beta$,  includes the full information of the PVM of $aA+bB$. Unfortunately such $f$ does not satisfy (i) and (iii), 
hence a further regularization procedure  is necessary by means of a family $1^{(\epsilon)}_{[\alpha,\beta)} \in \cS(\mathbb{R})$
  suitably converging to 
$1_{[\alpha,\beta)}$ as $\epsilon \to 0^+$. In this way,  the identity
\begin{equation}
		\label{PVMAB2}
		P^{(aA+bB)}_{[\alpha,\beta)}
		=\lim_{\epsilon \to 0^+} \lim_{N\to +\infty}\sum_{\lambda_j \in \sigma(A), \mu_j\in \sigma(B)} 1^{(\epsilon)}_{[\alpha,\beta)}\left( \frac{1}{N}\sum_{j=1}^N (a\lambda_j+ b\mu_j)\right) P^{(A)}_{\lambda_1} 
 P^{(B)}_{\mu_1}\cdots P^{(A)}_{\lambda_N}  
P^{(B)}_{\mu_N}
	\end{equation}
 can be established. We shall prove it later in a much more general context as identity (\ref{PVMAB}).

The requirement $f, \check{f} \in L^2(\bR,dx)$  can be relaxed  making stronger the regularity requirement on $f$ as a consequence of the following argument.  If $f,g$ satisfy (i),(ii),(ii) and also  $f(x)=g(x)$ for $x \in [-\|aA\|-\|bB\|, \|aA\|+ \|bB\|]$, where the norm is the operator norm, then 
the well-known general estimate for selfadjoint  operators  that $\sigma(T) \subset [-\|T\|,\|T\|]$, yields both
 \begin{equation} \label{idfg}f(aA+bB)=
g(aA+bB)\end{equation} from (\ref{elemspectr}),  and $$f\left( \frac{1}{N}\sum_{j=1}^N (\lambda_j+ \mu_j)\right) =
 g\left( \frac{1}{N}\sum_{j=1}^N (\lambda_j+ \mu_j)\right)\quad \mbox{for}\quad \lambda_j \in \sigma(A), \mu_j\in \sigma(B)\:.$$
 It is consequently safe to extend the validity of (\ref{Eq: f(aA+bB) in the finite dimensional case})   to every complex valued $f\in C^\infty(\bR)$ simply smoothly changing $f$ to a function of $\cS(\bR)$ 
 outside the interval  $[-\|aA\|-\|bB\|, \|aA\|+ \|bB\|]$. In particular, $f$ can be chosen as a polynomial of a single real variable.
In this way, we also obtain the elementary  {\em Jordan-algebra operations}  on $A$ and $B$ written in terms of the PVMs of $A$ and $B$ as the second identity below. If $a,b \in \bR$,
\begin{equation}
aA+bB =  
\lim_{N\to +\infty}\sum_{\lambda_j \in \sigma(A), \mu_j\in \sigma(B)}\left( a\sum_{j=1}^N \frac{\lambda_j}{N} 
+ b \sum_{j=1}^N \frac{\mu_j}{N}\right)\:\:  P^{(A)}_{\lambda_1} 
 P^{(B)}_{\mu_1}\cdots P^{(A)}_{\lambda_N}  
P^{(B)}_{\mu_N}\end{equation}
and
\begin{equation}
\frac{1}{2}(AB+BA) =  
\lim_{N\to +\infty}\sum_{\lambda_j \in \sigma(A), \mu_j\in \sigma(B)}\left(\sum_{j=1}^N \frac{\lambda_j}{N}\right)
\left(\sum_{j=1}^N \frac{\mu_j}{N}\right) \:\:  P^{(A)}_{\lambda_1}
 P^{(B)}_{\mu_1}\cdots P^{(A)}_{\lambda_N}  
P^{(B)}_{\mu_N}\:.\label{jordan2}\end{equation}
 The former is simply obtained from (\ref{Eq: f(aA+bB) in the finite dimensional case}) choosing $f(s)=s$. The latter
 arises from  a more involved though elementary procedure. The left-hand side of (\ref{jordan2})
can be written as $$\frac{1}{2}[(A+B)^2 -A^2-B^2] = f(aA+bB)+g(a'A+b'B) +g(a''A+b''B)$$
 where  $f(s):=s^2$, $g(s)= -s^2$, $a=b=1$, $a'=1$, $b'=0$, $a''=0$, $b''=1$.
 Linearity in $f$ of the right-hand side of (\ref{Eq: f(aA+bB) in the finite dimensional case}) 
(also with different constants $a,b$)  immediately gives rise to (\ref{jordan2}).
A more detailed discussion will appear in section \ref{Sec: some applications}, where (\ref{jordan2}) is extended to polynomials (Theorem \ref{teopolynomials} ).

With an argument similar to the one leading to (\ref{idfg}) and  using the fact that $\sigma(aA+bB)$ is  a finite discrete set
of reals, we achieve another useful result.  If $\beta \not \in \sigma(aA+bB)$, identity (\ref{PVMAB2}) can be alternatively written
\begin{equation}
		\label{PVMAB3}
		P^{(aA+bB)}_{[\alpha,\beta)}
		=\lim_{N\to +\infty}\sum_{\substack{\lambda_j \in \sigma(A)\\\mu_j\in \sigma(B)}} 1'_{[\alpha,\beta)}\left( \frac{1}{N}\sum_{j=1}^N (a\lambda_j+ b\mu_j)\right) P^{(A)}_{\lambda_1} 
 P^{(B)}_{\mu_1}\cdots P^{(A)}_{\lambda_N}  
P^{(B)}_{\mu_N}
	\end{equation}
where $1'_{[\alpha,\beta)} \in \cD(\bR)$ is a map which attains the constant value $1$ over $[\alpha, \beta)$ and smoothly 
vanishes on $\sigma(aA+bB) \setminus [\alpha,\beta)$.

When trying to extend (\ref{Eq: f(aA+bB) in the finite dimensional case}) to the case of a general (separable) Hilbert space $\sH$
 for generally unbounded selfadjoint operators $A$ and $B$, several evident technical issues arise. First of all,
 usual domain problems have to be fixed. These problems are tantamount  to corresponding domain  issues
with Trotter's formula which are well known and definitely  fixed \cite[Theorem VIII.31]{RS1}. A sufficiently general  setup consists 
of  assuming that $aA+bB$
 is essentially selfadjoint over its natural domain $D(A)\cap D(B)$ (if $ab\neq 0$). A much harder problem is the interpretation of the operator-valued
 integral appearing in the right-hand side of \eqref{Eq: f(aA+bB) in the finite dimensional case}, especially in case the spectrum
 of either $A$ or $B$ includes a continuous part.  Actually,  problems of a similar nature arise also when $A$ and $B$ have pure 
 point spectra, but $\sH$ is infinite dimensional.
In the following sections we will address those technical problems and other related ones, ending up with a wide generalization 
of \eqref{Eq: f(aA+bB) in the finite dimensional case} -- \textit{cf.} Theorem \ref{teorapgen}.

\subsection{The unital Banach algebra of complex Borel measures}

Before entering the technical details of the core of the paper 
it is necessary to introduce the  space of functions we will use on both sides of the extension of 
(\ref{Eq: f(aA+bB) in the finite dimensional case}).

\begin{definition} {\em The {\bf space of complex measures on $\bR$} denoted by $\cM(\bR)$ is the complex linear space of 
$\mathbb{C}$-valued $\sigma$-additive maps $\nu : \cB(\bR) \to \bC$.  The  space of {\bf Fourier transforms of complex measures on $\bR$}  consists of 
the complex linear space $\mathscr{F}(\mathbb{R})$ of  functions of the form
\begin{equation}\label{f-muf}f_\nu(x):= \int_\bR e^{ixy}d\nu(y), \qquad x\in \bR,\end{equation}
for some  $\nu\in \cM(\bR)$. \hfill $\blacksquare$ }
\end{definition}
\noindent The linear map
\begin{align}\label{Eqn: Fourier transform for measure}
	F\colon \cM(\bR) \ni \nu \mapsto f_\nu \in \cF(\bR)\,,
\end{align}
is injective (this is a straightforward extension of \cite[Theorem 26.2]{Bill}) so that it defines a linear isomorphism. From the standard 
properties of Fourier transform, it is not difficult to prove that  $\cF$ includes  the maps satisfying (i),(ii),(iii) we used  in section \ref{finite}. Furthermore
$$\cF(\bR) \subset C_b(\bR) \:,\quad  \cS(\bR) \subset \cF(\bR) \subset \cS'(\bR)\:, \quad \cD(\bR) \subset \cF(\bR)\subset \cD'(\bR)\:.$$
Above, $C_b(\bR)$ is the commutative unital Banach algebra  (with norm $\|\cdot \|_\infty$) of bounded continuous complex-valued functions over 
$\bR$,  $\cS(\bR)$ and $\cD(\bR)$
are respectively the {\em space of Schwartz (complex) functions} and the {\em space of the (complex) test functions} over $\bR$, and 
 $\cS'(\bR)$ and $\cD'(\bR)$
denote the  associated  spaces of distributions.\\
 The linear space $\cM(\bR)$ turns out to be a {\em commutative unital  Banach algebra}  where 
(a) the product of two measures $\nu$ and $\nu'$ is  their {\em convolution} $\nu*  \nu'$, (b)
 the unit is the Dirac point mass $\delta _0$ concentrated at $0$, (c)
 the norm of a measure $\nu$ is defined as its total variation $\|\nu\|=|\nu|(\bR)$. $\cM(\bR)$ also admits a norm-preserving  unit-preserving
  antilinear {\em involution} given by the complex conjugation of measure $\nu^* := \overline{\nu}$.
The linear isomorphism  (\ref{Eqn: Fourier transform for measure}) induces   a unital commutative  Banach algebra structure over $\cF(\bR)$ when 
defining  $\|f_\nu\|_\cF:=\|\nu\|$. More precisely, the map
$F$ promoted to an  isomorphism of Banach algebras transforms the convolution of measures into the pointwise product of corresponding 
functions and the unit of $\cM(\bR)$ to the constant function $1$.  The involution of $\cM(\bR)$ becomes the norm-preserving  unit-preserving  
antilinear involution over $\cF(\mathbb{R})$, given by $f^*(x) := \overline{f(-x)}$,  the bar denoting the complex conjugation.
The definition  of $\|f\|_\cF$ easily implies 
\begin{equation}\|f\|_\infty \leq \|f\|_\cF\:, \quad f \in \cF(\bR)\:.\end{equation}
If $f \in \cS(\bR)$ (or $\cD(\bR)$), then $f= f_\nu$ where $d\nu = (2\pi)^{-1} \check{f} dx$ and $\check{f}(y) = \int_\bR e^{-ixy} f(x) dy$ is
 the {\em Fourier anti-transform} of $f$. In this case
 \begin{equation}\label{normS} \|f\|_\cF =  (2\pi)^{-1}  \|\check{f}\|_{L^1(\bR,dx)}\:, \quad f \in \cS(\bR) \:.\end{equation}

\subsection{Regularized products of  non-commuting PVMs of   selfadjoint operators}

This section and the subsequent one are devoted to extend formula (\ref{Eq: f(aA+bB) in the finite dimensional case}) to the case of an infinite-dimensional,
though separable, Hilbert space $\sH$. We will suppose that the operators $A$ and $B$ are  
unbounded selfadjoint operators with domains $D(A)$ and $D(B)$ respectively, and that their  linear combination $aA+bB$, for suitable
 $a,b \in \bR$, is essentially selfadjoint on its {\em standard 
domain}  according to section \ref{Sec: introduction}.

 In other words, we want  to construct a functional calculus and the spectral  measure of the selfadjoint operator given by the  closure $\overline{aA+bB}$ 
out of the PVMs $P^{(A)}, P^{(B)}$ of $A$ and $B$ respectively, 
 proving the following suggestive formula if $\psi \in \sH$ and  $f \in \cF(\bR)$:
\begin{equation}\label{mainid}
f\left(\overline{aA+bB}\right) \psi = \lim_{N\to +\infty} \int_{\bR^{2N}} f\left( \frac{1}{N}\sum^N_{n=1} (a\lambda_{n} + b\mu_{n})\right)
 dP^{(A)}_{\lambda_1}dP^{(B)}_{\mu_1}\cdots dP^{(A)}_{\lambda_N}dP^{(B)}_{\mu_N}\psi\:.
\end{equation}
 The operator $f\left(\overline{aA+bB}\right)$ on the left-hand side is independently  defined by (\ref{Eq: function of self-adjoint operator})
 and its domain is the whole $\sH$
according to (\ref{Eq: function of self-adjoint operator}) since $f$ is bounded.

As a first step, we address the problem of the  interpretation of the operators 
\begin{equation}\label{mainid-1}\int_{\bR^{2N}} f\left( \frac{1}{N}\sum^N_{n=1} (a\lambda_{n} + b\mu_{n})\right) dP^{(A)}_{\lambda_1}dP^{(B)}_{\mu_1}
\cdots dP^{(A)}_{\lambda_N}dP^{(B)}_{\mu_N}\end{equation} after the symbol of  limit on the right-hand side of \eqref{mainid}. In fact,  
 we cannot generally interpret the integration appearing in \eqref{mainid-1} like the one in \eqref{prototypeC}, i.e. as  referred to the
 {\em joint PVM}  of $N$ copies of $P^{(A)}$ and $P^{(B)}$ 
(see, e.g., \cite{RS2,Moretti2017}). This is because  we are focusing on the generic situation where $P^{(A)}$ and $P^{(B)}$ do {\em not} commute.
In the general case, the operators \eqref{mainid-1} are  defined out of a regularized natural quadratic form we are going to
 introduce with the following crucial technical result.  We stress that we will not use the hypothesis of essential selfadjointness 
of $aA+bB$ at this stage of the construction.

\begin{lemma}\label{proptec} Let $A$ and $B$   be a pair of selfadjoint operators over the separable Hilbert space $\sH$ 
as in (\ref{Eq: function of self-adjoint operator}), and  $Q:=\{Q_n\}_{n\in \bN}$ a sequence of orthogonal projectors over respective
 finite-dimensional subspaces such that $Q_n \to I$ strongly as $n\to +\infty$.\\
 If $\phi, \psi \in \sH$ and $N\in\mathbb{N}$, then  the following facts are valid.
\begin{itemize}
\item[{\bf (a)}]  For every $n\in \bN$,  there is a unique complex Borel measure $\nu_{\phi,\psi,Q}^{(N,n)}$ over $\bR^{2N}$ such that
\begin{equation}\label{reg-measures}
	\nu^{(N,n)}_{\phi,\psi,Q}(\times_{\ell=1}^NI_\ell\times J_\ell):=
	\langle\phi | P^{(A)}_{I_1}Q_{n}P^{(B)}_{J_1}\cdots P^{(A)}_{I_N}Q_{n}P^{(B)}_{J_N}\psi\rangle,
	\end{equation}
	with $I_\ell, J_\ell$ arbitrary Borel sets in $\bR$. The support of the measure satisfies
\begin{equation}\label{supp-reg-measures}\mbox{supp}(|\nu^{(N,n)}_{\phi,\psi,Q}|) \subset 
	\underbrace{\sigma(A)\times \sigma(B) \times \cdots \times \sigma(A) \times \sigma(B)}_{2N \ \mathrm{ times}}\:.
	\end{equation}

\item[{\bf (b)}]  For every $f \in \cF(\bR)$  and, independently of the choice of the sequence $\{Q_n\}_{n\in \bN}$, the following
 limit holds  for $a,b \in \bR$:
\begin{multline}\label{limnuNn}
	\lim_{n\to \infty }
	\int _{\bR^{2N}} f\left(\frac{1}{N}\sum_{l=1}^{N}(a\lambda _l+b\mu_l)\right)d\nu^{(N,n)}_{\phi,\psi,Q}(\lambda_1,\mu_1,\dots,\lambda _N,\mu_N)\\
	=\int_\bR\left\langle \phi \left|  (e^{i \frac{t}{N}aA}e^{i \frac{t}{N}bB})^N\psi\right.\right\rangle d\nu _f(t)\,,
 \end{multline}
with $\nu_f := F^{-1}(f)$ according to (\ref{f-muf}-\ref{Eqn: Fourier transform for measure}).
\end{itemize}
\end{lemma}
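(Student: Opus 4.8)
Everything hinges on the fact that each $Q_n$ has \emph{finite} rank. The plan is to fix $n$, choose an orthonormal basis $\{e_1,\dots,e_{d_n}\}$ of $\mathrm{Ran}\,Q_n$, write $Q_n=\sum_{c=1}^{d_n}|e_c\rangle\langle e_c|$, and insert this resolution at every occurrence of $Q_n$ between two consecutive spectral projectors in the operator on the right-hand side of \eqref{reg-measures}. Writing $\mu^{(A)}_{\xi,\eta}(E):=\langle\xi|P^{(A)}_E\eta\rangle$ and $\mu^{(B)}_{\xi,\eta}$ for the regular complex Borel measures on $\bR$ introduced after \eqref{Eq: function of self-adjoint operator} -- so that $\|\mu^{(A)}_{\xi,\eta}\|\le\|\xi\|\,\|\eta\|$, the analogue for $B$, and $\int g\,d\mu^{(A)}_{\xi,\eta}=\langle\xi|g(A)\eta\rangle$ for bounded Borel $g$ by \eqref{spectral2} -- this turns the right-hand side of \eqref{reg-measures} into
\begin{equation}\label{eq:pf-expansion}
\sum_{c_1,\dots,c_{2N-1}}\mu^{(A)}_{\phi,e_{c_1}}(I_1)\;\mu^{(B)}_{e_{c_1},e_{c_2}}(J_1)\;\mu^{(A)}_{e_{c_2},e_{c_3}}(I_2)\cdots\mu^{(B)}_{e_{c_{2N-1}},\psi}(J_N)\,.
\end{equation}
The purpose of keeping a finite-rank $Q_n$ \emph{between} each pair of consecutive spectral projectors is exactly that it leaves, in every slot, a single-variable scalar spectral measure; without it one would be stuck with the bimeasures $(I,J)\mapsto\langle\xi|P^{(B)}_JP^{(A)}_I\eta\rangle$ in pairs of adjacent variables, which in general do \emph{not} extend to a Borel measure on $\bR^2$ when $P^{(A)}$ and $P^{(B)}$ do not commute. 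This is the structural observation on which the whole lemma rests.

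\paragraph{Establishing (a).} For each multi-index $\vec c=(c_1,\dots,c_{2N-1})$ the associated summand in \eqref{eq:pf-expansion} is the value on $\times_\ell I_\ell\times J_\ell$ of the product measure $\mu^{(A)}_{\phi,e_{c_1}}\otimes\mu^{(B)}_{e_{c_1},e_{c_2}}\otimes\cdots\otimes\mu^{(B)}_{e_{c_{2N-1}},\psi}$, which is a bona fide finite complex Borel measure on $\bR^{2N}$. I would therefore \emph{define}
\begin{equation}\label{eq:pf-def}
\nu^{(N,n)}_{\phi,\psi,Q}:=\sum_{c_1,\dots,c_{2N-1}}\mu^{(A)}_{\phi,e_{c_1}}\otimes\mu^{(B)}_{e_{c_1},e_{c_2}}\otimes\cdots\otimes\mu^{(B)}_{e_{c_{2N-1}},\psi}\,,
\end{equation}
a finite linear combination of complex Borel measures, hence itself a complex Borel measure on $\bR^{2N}$ with $\|\nu^{(N,n)}_{\phi,\psi,Q}\|\le d_n^{\,2N-1}\,\|\phi\|\,\|\psi\|<+\infty$; by construction it assumes on rectangles the values prescribed by \eqref{reg-measures}. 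Uniqueness is the standard $\pi$--$\lambda$ argument: the rectangles form a $\pi$-system generating $\cB(\bR^{2N})$ and containing $\bR^{2N}$, and a finite complex measure is determined by its restriction to such a family (argue on real and imaginary parts). Finally, since $\mathrm{supp}\,P^{(A)}=\sigma(A)$ and $\mathrm{supp}\,P^{(B)}=\sigma(B)$, each factor in \eqref{eq:pf-def} is supported in $\sigma(A)$, resp.\ $\sigma(B)$, whence $|\nu^{(N,n)}_{\phi,\psi,Q}|$ is supported in the $2N$-fold product set, i.e.\ \eqref{supp-reg-measures}.

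\paragraph{Establishing (b).} From \eqref{eq:pf-def} and $\int g\,d\mu^{(A)}_{\xi,\eta}=\langle\xi|g(A)\eta\rangle$ (and its $B$-analogue), the integral against $\nu^{(N,n)}_{\phi,\psi,Q}$ of a product of single-variable functions factorises, giving the moment identity
\begin{equation}\label{eq:pf-moments}
\int_{\bR^{2N}}\Big(\prod_{\ell=1}^{N}g_\ell(\lambda_\ell)\,h_\ell(\mu_\ell)\Big)d\nu^{(N,n)}_{\phi,\psi,Q}=\big\langle\phi\,\big|\,g_1(A)\,Q_n\,h_1(B)\,Q_n\,g_2(A)\,Q_n\cdots Q_n\,h_N(B)\,\psi\big\rangle
\end{equation}
for all bounded Borel $g_1,h_1,\dots,g_N,h_N:\bR\to\bC$ (equivalently one may obtain \eqref{eq:pf-moments} directly from \eqref{reg-measures} by a monotone-class argument from indicator functions). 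Next, since $f=f_{\nu_f}$ with $\nu_f=F^{-1}(f)\in\cM(\bR)$, one has $f\big(\tfrac1N\sum_\ell(a\lambda_\ell+b\mu_\ell)\big)=\int_\bR\prod_\ell e^{i\frac{t}{N}a\lambda_\ell}e^{i\frac{t}{N}b\mu_\ell}\,d\nu_f(t)$; both $|\nu_f|$ and $|\nu^{(N,n)}_{\phi,\psi,Q}|$ being finite and the integrand having modulus $1$, Fubini applies, and inserting \eqref{eq:pf-moments} with $g_\ell(\lambda)=e^{i\frac{t}{N}a\lambda}$, $h_\ell(\mu)=e^{i\frac{t}{N}b\mu}$ (so $g_\ell(A)=e^{i\frac{t}{N}aA}$ and $h_\ell(B)=e^{i\frac{t}{N}bB}$) yields
\begin{equation}\label{eq:pf-interm}
\int_{\bR^{2N}}f\Big(\tfrac1N\textstyle\sum_{\ell}(a\lambda_\ell+b\mu_\ell)\Big)d\nu^{(N,n)}_{\phi,\psi,Q}=\int_\bR\big\langle\phi\,\big|\,e^{i\frac{t}{N}aA}Q_n\,e^{i\frac{t}{N}bB}Q_n\cdots Q_n\,e^{i\frac{t}{N}bB}\,\psi\big\rangle\,d\nu_f(t)\,.
\end{equation}
Finally $Q_n\to I$ strongly with $\|Q_n\|\le1$, so -- replacing the $Q_n$'s by $I$ one junction at a time (a finite telescoping sum, each term being a fixed bounded operator applied to $(Q_n-I)$ applied to a fixed vector) -- the alternating product $e^{i\frac{t}{N}aA}Q_n e^{i\frac{t}{N}bB}Q_n\cdots$ applied to $\psi$ converges to $(e^{i\frac{t}{N}aA}e^{i\frac{t}{N}bB})^N\psi$. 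Hence the integrand in \eqref{eq:pf-interm} converges pointwise in $t$ to $\langle\phi\,|\,(e^{i\frac{t}{N}aA}e^{i\frac{t}{N}bB})^N\psi\rangle$ and is dominated by the constant $\|\phi\|\,\|\psi\|\in L^1(|\nu_f|)$; dominated convergence then gives \eqref{limnuNn}, whose right-hand side visibly does not involve the sequence $\{Q_n\}_{n\in\bN}$.

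\paragraph{Where the difficulty lies.} The single genuinely nontrivial step is making sure that the right-hand side of \eqref{reg-measures} really \emph{defines} a complex Borel measure on $\bR^{2N}$ (equivalently, has finite total variation) and is not merely a separately countably additive set function; this is precisely what the finite-rank regularisers $Q_n$ buy, by collapsing the potential bimeasures into an honest finite sum of product measures. Everything else -- the $\pi$--$\lambda$ uniqueness, Fubini, strong continuity of the PVMs, the bounded functional calculus for the exponentials, and the $n\to\infty$ dominated-convergence passage -- is routine once this reduction is in place; the only care needed is the bookkeeping that keeps exactly one resolution of the identity between every two consecutive spectral projectors.
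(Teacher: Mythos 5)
Your proof is correct, and its overall architecture coincides with the paper's: both hinge on expanding each finite-rank $Q_n$ over an orthonormal basis of its range so that only single-variable scalar spectral measures survive between consecutive projectors, both then represent $f$ as the Fourier transform of $\nu_f$ and use Fubini plus the strong convergence $Q_n\to I$ (with the same telescoping argument and dominated convergence) to obtain \eqref{limnuNn}. The genuine difference is in how part (a) packages the expansion into a measure on $\bR^{2N}$. The paper invokes a spectral representation theorem (its Appendix B) to realize $\sH$ as $L^2(\bN\times\bR,\nu^{(A)})$, resp.\ $L^2(\bN\times\bR,\nu^{(B)})$, writes the set function as the integral of an explicit $L^1$ density against $(\nu^{(A)}\otimes\nu^{(B)})^{\otimes N}$ on $(\bN\times\bR)^{2N}$, and then pushes the resulting measure forward along the projection onto $\bR^{2N}$; the support statement \eqref{supp-reg-measures} is read off from the supports of $\nu^{(A)},\nu^{(B)}$. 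You instead identify the object directly as the finite sum of product measures $\sum_{\vec c}\mu^{(A)}_{\phi,e_{c_1}}\otimes\mu^{(B)}_{e_{c_1},e_{c_2}}\otimes\cdots\otimes\mu^{(B)}_{e_{c_{2N-1}},\psi}$, which settles existence, the total-variation bound, and the support claim in one stroke and dispenses with the spectral representation machinery altogether. Your route is the more elementary and self-contained of the two; what the paper's detour buys is an explicit $L^1$-density realization of $\nu^{(N,n)}_{\phi,\psi,Q}$ on the direct-integral space, which it does not strictly need for this lemma. Your part (b) also runs in the opposite direction to the paper's (you go from the measure to the operator expression via the moment identity \eqref{eq:pf-moments}, the paper starts from $\int_\bR\langle\phi|(e^{i\frac{t}{N}aA}e^{i\frac{t}{N}bB})^N\psi\rangle d\nu_f(t)$ and works backwards), but this is purely presentational. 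Your closing remark correctly isolates the structural point: the finite-rank insertions are exactly what turn would-be non-extendable bimeasures into honest product measures.
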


\begin{proof} See Appendix \ref{App: proof of some propositions}.
\end{proof}

\noindent Analogously to  the  use of (\ref{spectral2}) to define the right-hand side in (\ref{Eq: function of self-adjoint operator}),
 the idea is now to define the operator in (\ref{mainid-1})  in order that  the associated quadratic form 
coincides to the  limit  of the regularized sequences as on the left-hand side of (\ref{limnuNn}). 

\begin{theorem}\label{defint}  Let $A$ and $B$   be a pair of selfadjoint operators over the separable
 Hilbert space $\sH$ as in (\ref{Eq: function of self-adjoint operator}), $f\in \cF$, and $N\in\mathbb{N}$. 
For $a,b \in \bR$, there is a unique  operator, denoted by
\begin{gather}\label{Eq: operator integral w.r.t. product of 2N-projectors}
	\int_{\bR^{2N}} f\left( \frac{1}{N}\sum^N_{n=1}  (a\lambda_{n} +b \mu_{n})\right)
	dP^{(A)}_{\lambda_1}dP^{(B)}_{\mu_1}\cdots dP^{(A)}_{\lambda_N}dP^{(B)}_{\mu_N} \in \gB(\sH)\:,
\end{gather}
such that 
\begin{multline}
\left\langle \phi \left| \int_{\bR^{2N}} f\left( \frac{1}{N}\sum^N_{n=1}  (a\lambda_{n} + b\mu_{n})\right) dP^{(A)}_{\lambda_1}dP^{(B)}_{\mu_1}\cdots dP^{(A)}_{\lambda_N}dP^{(B)}_{\mu_N}\psi \right. \right\rangle \\
=\lim_{n\to \infty } \int _{\bR^{2N}} f\left(\frac{1}{N}\sum_{l=1}^{N}(a\lambda _l+b\mu_l)\right)d\nu^{(N,n)}_{\phi,\psi,Q}(\lambda_1, \mu_1, \dots , \lambda _N,\mu_N),
 \quad \mbox{if $\phi, \psi \in \sH$.}\label{DEFint}
\end{multline}
Above,  for every   $n\in \bN$, the complex Borel  measure $\nu^{(N,n)}_{\phi,\psi,Q}$ is uniquely defined by
$$\nu^{(N,n)}_{\phi,\psi,Q}(\times_{\ell=1}^NI_\ell\times J_\ell):=
	\langle\phi | P^{(A)}_{I_1}Q_{n}P^{(B)}_{J_1}\cdots P^{(A)}_{I_N}Q_{n}P^{(B)}_{J_N}\psi\rangle\quad \mbox{with $I_l, J_l \in \cB(\bR)$}$$
 for a sequence of finite-dimensional orthogonal projectors $Q:=\{Q_n\}_{n\in \bN}$ with $Q_n \to I$ strongly as $n\to +\infty$. The limit in (\ref{DEFint})
is however independent of the choice of  $\{Q_n\}_{n\in \bN}$.\\
The following further facts are true. 
\begin{itemize}
\item[(i)]  The map $$\cF \ni f \mapsto \int_{\bR^{2N}} f\left( \frac{1}{N}
\sum^N_{n=1}  (a\lambda_{n} + b\mu_{n})\right) dP^{(A)}_{\lambda_1}dP^{(B)}_{\mu_1}\cdots dP^{(A)}_{\lambda_N}dP^{(B)}_{\mu_N}\in\mathfrak{B}(\mathsf{H})$$ is linear.
\item[(ii)] The following  inequality holds
\begin{equation}
\left|\left|   \int_{\bR^{2N}} f\left( \frac{1}{N}\sum^N_{n=1}  (a\lambda_{n} + b\mu_{n})\right) dP^{(A)}_{\lambda_1}dP^{(B)}_{\mu_1}
\cdots dP^{(A)}_{\lambda_N}dP^{(B)}_{\mu_N} \right|\right| \leq \|f\|_\cF\label{ineq2}\:.
\end{equation}
\item[(iii)] If $\phi,\psi \in \sH$, the restricted maps
\begin{align}
 &\cS(\bR)\ni f \mapsto \left\langle \phi \left| \int_{\bR^{2N}} f\left( \frac{1}{N}\sum^N_{n=1}  (a\lambda_{n} + b\mu_{n})\right) dP^{(A)}_{\lambda_1}dP^{(B)}_{\mu_1}\cdots dP^{(A)}_{\lambda_N}dP^{(B)}_{\mu_N}\psi \right. \right\rangle \label{dist1}\\
 &\cS(\bR)\ni f \mapsto \int_{\bR^{2N}} f\left( \frac{1}{N}\sum^N_{n=1}  (a\lambda_{n} + b\mu_{n})\right) dP^{(A)}_{\lambda_1}dP^{(B)}_{\mu_1}\cdots dP^{(A)}_{\lambda_N}dP^{(B)}_{\mu_N} \label{dist2}
\end{align}
are, respectively, a Schwartz distribution and a $\cB(\sH)$-valued Schwartz distribution.
The same facts hold if replacing $\cS(\bR)$ with $\cD(\bR)$ and $\cS'(\bR)$ with $\cD'(\bR)$.
\item[(iv)] If $A, B \in \gB(\sH)$ and 
$\mbox{supp}(f) \cap [-\|aA\| -\|bB\|, \|aA\| + \|bB\|] = \emptyset$, then
$$\int_{\bR^{2N}} f\left( 
\frac{1}{N}\sum^N_{n=1} (a\lambda_{n} + b\mu_{n})\right) dP^{(A)}_{\lambda_1}dP^{(B)}_{\mu_1}\cdots dP^{(A)}_{\lambda_N}dP^{(B)}_{\mu_N}=0\:.$$
\item[(v)]  If $\overline{z}$ denotes the complex conjugate of $z\in \mathbb{C}$, then
\begin{multline}
	\left[\int_{\bR^{2N}} f\left( 
	\frac{1}{N}\sum^N_{n=1} (a\lambda_{n} + b\mu_{n})\right) dP^{(A)}_{\lambda_1}dP^{(B)}_{\mu_1}\cdots dP^{(A)}_{\lambda_N}dP^{(B)}_{\mu_N}\right]^*\\
	=\int_{\bR^{2N}} \overline{f\left( 
	\frac{1}{N}\sum^N_{n=1} (a\lambda_{n} + b\mu_{n})\right)} dP^{(B)}_{\mu_1}dP^{(A)}_{\lambda_1}\cdots dP^{(B)}_{\mu_N}dP^{(A)}_{\lambda_N}\,.
\end{multline}
\end{itemize}
\end{theorem}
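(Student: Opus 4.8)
The plan is to build the operator in \eqref{Eq: operator integral w.r.t. product of 2N-projectors} directly from the quadratic form furnished by Lemma \ref{proptec}, and then verify the five listed properties as essentially formal consequences of the defining relation \eqref{DEFint} together with the explicit right-hand side of \eqref{limnuNn}. First I would fix $\phi,\psi\in\sH$, $f\in\cF(\bR)$, $N\in\bN$ and $a,b\in\bR$, and set
\[
B_{f,N}(\phi,\psi):=\int_\bR\bigl\langle\phi\,\big|\,(e^{i\frac{t}{N}aA}e^{i\frac{t}{N}bB})^N\psi\bigr\rangle\,d\nu_f(t),
\]
which by Lemma \ref{proptec}(b) equals the $n\to\infty$ limit on the right-hand side of \eqref{DEFint} and is therefore independent of the approximating sequence $Q=\{Q_n\}$. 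This form is clearly linear in $\psi$ and antilinear in $\phi$. The key estimate is that the integrand is bounded in modulus by $\|\phi\|\,\|\psi\|$ uniformly in $t$, since each $e^{i\frac{t}{N}aA}$ and $e^{i\frac{t}{N}bB}$ is unitary; hence $|B_{f,N}(\phi,\psi)|\le \|\nu_f\|\,\|\phi\|\,\|\psi\| = \|f\|_\cF\,\|\phi\|\,\|\psi\|$. By the Riesz representation theorem applied to the bounded sesquilinear form $B_{f,N}$ there is a unique bounded operator, which we name as in \eqref{Eq: operator integral w.r.t. product of 2N-projectors}, satisfying \eqref{DEFint}; the same bound simultaneously gives (ii).

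For (i) I would note that $f\mapsto\nu_f=F^{-1}(f)$ is linear (it is the inverse of the linear isomorphism $F$ of \eqref{Eqn: Fourier transform for measure}), and integration against $\nu_f$ depends linearly on $\nu_f$, so $f\mapsto B_{f,N}(\phi,\psi)$ is linear for every fixed $\phi,\psi$; uniqueness in the Riesz representation then transfers linearity to the operators themselves. For (iii), on $\cS(\bR)$ we have $\nu_f=(2\pi)^{-1}\check f\,dx$ by \eqref{normS}, so \eqref{dist1} reads $f\mapsto (2\pi)^{-1}\int_\bR\check f(t)\,\langle\phi|(e^{i\frac{t}{N}aA}e^{i\frac{t}{N}bB})^N\psi\rangle\,dt$; since $\check f\mapsto f$ is continuous on $\cS$ and the kernel is a bounded continuous function of $t$, this is continuous in the $\cS$-topology, i.e. a tempered distribution, and the $\gB(\sH)$-valued version \eqref{dist2} follows by the operator bound (ii) and the uniform boundedness principle (or simply by noting weak continuity plus the norm bound). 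The identical argument with $\cD(\bR)$ in place of $\cS(\bR)$ gives the $\cD'$ statement.

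For (iv), when $A,B\in\gB(\sH)$ the measure $\nu^{(N,n)}_{\phi,\psi,Q}$ has support in $\sigma(A)\times\sigma(B)\times\cdots$ by \eqref{supp-reg-measures}, hence the argument $\tfrac1N\sum_l(a\lambda_l+b\mu_l)$ ranges only over the interval $[-\|aA\|-\|bB\|,\|aA\|+\|bB\|]$ (using $\sigma(A)\subset[-\|A\|,\|A\|]$ and convexity); if $\operatorname{supp}(f)$ misses that interval the integrand in \eqref{DEFint} vanishes identically for every $n$, so the operator is $0$. For (v) I would compute the adjoint via \eqref{DEFint}: $\langle\phi|(\,\cdots\,)^*\psi\rangle=\overline{\langle\psi|(\,\cdots\,)\phi\rangle}=\overline{\lim_n\int f(\tfrac1N\sum(a\lambda_l+b\mu_l))\,d\nu^{(N,n)}_{\psi,\phi,Q}}$, and then observe that $\overline{\nu^{(N,n)}_{\psi,\phi,Q}(\times_\ell I_\ell\times J_\ell)}=\overline{\langle\psi|P^{(A)}_{I_1}Q_nP^{(B)}_{J_1}\cdots\psi\rangle}=\langle\phi|P^{(B)}_{J_N}Q_nP^{(A)}_{I_N}\cdots P^{(B)}_{J_1}Q_nP^{(A)}_{I_1}\phi\rangle$ (each $P$ and $Q_n$ is selfadjoint, so taking adjoints reverses the product), which is exactly the defining measure for the right-hand side of (v) with $A$ and $B$ swapped in each slot; passing the conjugate inside the integral against the conjugated integrand yields the claimed identity. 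The main obstacle I anticipate is purely bookkeeping: carefully matching the reversed index pattern $dP^{(B)}_{\mu_1}dP^{(A)}_{\lambda_1}\cdots dP^{(B)}_{\mu_N}dP^{(A)}_{\lambda_N}$ in (v) against the conjugated measure — everything else is a routine consequence of Lemma \ref{proptec} and Riesz representation, with the substantive analytic content (convergence, boundedness, $Q$-independence) already packaged in Lemma \ref{proptec}.
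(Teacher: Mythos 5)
Your proposal is correct and follows essentially the same route as the paper: the sesquilinear form $I^{(N)}_{\phi,\psi}(f)=\int_\bR\langle\phi|(e^{i\frac{t}{N}aA}e^{i\frac{t}{N}bB})^N\psi\rangle\,d\nu_f(t)$ from Lemma \ref{proptec}(b), the bound $\|f\|_\cF\|\phi\|\,\|\psi\|$ via unitarity, Riesz representation for existence and uniqueness, and the same arguments for (i)--(iv). Your explicit adjoint computation for (v) (modulo the harmless typo $\phi$ for $\psi$ in the final ket, and the relabelling $\ell\mapsto N+1-\ell$ justified by the symmetry of the integrand) is in fact more detailed than the paper, which dismisses (v) as immediate from the definitions.
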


\begin{proof} If $\nu_f := F^{-1}(f)$ according to (\ref{f-muf})-(\ref{Eqn: Fourier transform for measure}), the map appearing on the right-hand side of (\ref{limnuNn}), that is
$$\sH\times \sH \times \cF \ni (\phi, \psi, f) \mapsto I^{(N)}_{\phi,\psi}(f) := \int_\bR\left\langle \phi \left|  (e^{i \frac{t}{N}aA}e^{i \frac{t}{N}bB})^N\psi\right.\right\rangle d\nu _f(t)$$
is by construction linear in $f$ and $\psi$, antilinear in $\phi$ and satisfies 
\begin{equation} |I^{(N)}_{\phi,\psi}(f)| \leq \|\phi\|_\sH \|\psi\|_\sH \|f\|_\cF \label{ineq}\:.\end{equation}
As a consequence, identity 
(\ref{limnuNn}) permits us to exploit  Riesz' lemma defining the wanted operator as required and independently
 from the regularizing sequence $\{Q_n\}_{n\in \bN}$. Indeed, for any $\psi,f$, the map $\phi\mapsto I^{(N)}_{\phi,\psi}(f)$ defines a bounded antilinear functional, hence Riesz representation theorem allows to write the following identity
 $$  I^{(N)}_{\phi,\psi}(f)=\langle \phi, \xi_f\rangle $$
 for a suitable vector $ \xi_f\in \sH$ such that \begin{equation}\label{bound-1} \|\xi_f\|\leq  \|\psi\|_\sH \|f\|_\cF.\end{equation} The linear dependence of  $I^{(N)}_{\phi,\psi}(f)$ on $\psi$ yields the linearity of   the map $\psi \mapsto \xi_\psi:=L(\psi)$ as well. Finally, inequality \eqref{bound-1} allows to prove that the operator $L$ is bounded and by definition it coincides with \eqref{Eq: operator integral w.r.t. product of 2N-projectors}.
Linearity of $f \mapsto I^{(N)}_{\phi,\psi}(f)$ and (\ref{ineq})  prove (i) and (ii). \\
Regarding (iii), we observe that   from the continuity properties of the Fourier transform,
 $\cS(\bR) \ni f_n \to 0$ in  the $\cS$-topology  (as $n \to +\infty$) implies that the sequence of anti-transforms $\check{f}_n$ vanishes
 in the same topology, so that, in particular,   $\|\check{f}_n\|_{L^1(\bR, dx)} \to 0$.  Since $\|f_n\|_\cF = (2\pi)^{-1}  \|\check{f}_n\|_{L^1(\bR, dx)}$ from (\ref{normS}),  we have that
$\|f_n\|_\cF \to 0$.  The linear maps (\ref{dist1})  and (\ref{dist2}) are therefore 
$\cS$-continuous in view of inequality (\ref{ineq2}) so that they are Schwartz distributions by definition. Since Schwartz
 distributions are distributions over $\cD(\bR)$,
 the result is also valid when restricting our map to $\cD(\bR)$.\\(iv) First observe that   $\sigma(A) \subset [-\|A\|,\|A\|]$ 
and $\sigma(B) \subset [-\|B\|, \|B\|]$. At this juncture, 
 inclusion (\ref{supp-reg-measures}) and the definition (\ref{DEFint}) imply that the integration over $\bR^{2N}$  on the right-hand side of  (\ref{DEFint})
can actually be restricted to the set determined by imposing 
$\lambda_j \in [-\|A\|,\|A\|]$ and $\mu_j\in [-\|B\|,|B\|]$ for $j=1,\ldots, N$, 
without affecting the final result.  If the support of $f$ is disjoint from  $ [-\|aA\|-\|bB\|, \|aA\|+\|bB\|]$, then  
$f\left(\frac{1}{N}\sum_{j=1}^N (a\lambda_j+b\mu_j)\right)$ vanishes 
when $\lambda_j \in [-\|A\|,\|A\|]$ and $\mu_j\in [-\|B\|,\|B\|]$ 
so that  the integral in  (\ref{DEFint}) vanishes as well for every $n$,  producing $0$ as limit for $n\to +\infty$.  \\ 
The proof of (v) follows at once from the definitions.
\end{proof}

\begin{remark}{\em $\null$\\
{\bf(1)} If $\sH$ is finite-dimensional, all $Q_n$ on the right-hand side of  (\ref{reg-measures}) can be removed,  simply choosing 
$Q_n=I$, finding the formulas achieved in section \ref{finite}.
For  infinite-dimensional $\sH$,  there are cases where no complex measure satisfying the identity (\ref{reg-measures})  exists
if $Q_n=I$ for all $n$ -- \textit{cf.} \ref{Sec: The Q_n-regularization  is generally necessary}.\\
{\bf (2)} The existence of  sequences $\{Q_n\}_{n\in \bN}$ strongly approximating the identity operator with {\em finite-dimensional}
 projection  spaces  is equivalent to  {\em separability} of $\sH$. Hence, separability hypothesis  cannot be relaxed.
}\hfill 
$\blacksquare$ 
\end{remark}

\subsection{Main result}
We have reached  a position to establish  our main theorem when, in addition to the already assumed  hypotheses on $A$ and $B$,
 we use the requirement that $aA+bB$
is essentially selfadjoint.

\begin{theorem}\label{teorapgen}
	Let $A,B$ be self-adjoint operators on a separable   Hilbert space, $a,b\in \bR$ and let us assume that 
  $aA+bB$ is essentially self-adjoint on $D(aA)\cap D(bB)$. Then the following facts hold for $f\in \cF(\bR)$.
\begin{itemize}
\item[{\bf (a)}]
Denote the strong operator limit by   $s-\lim$. Then
\begin{equation}\label{mainid2}
f\left(\overline{aA+bB}\right)  = \mbox{s-}\lim_{N\to +\infty} \int_{\bR^{2N}} f\left( \frac{1}{N}\sum^N_{n=1}  
(a\lambda_{n} + b\mu_{n})\right) dP^{(A)}_{\lambda_1}dP^{(B)}_{\mu_1}\cdots dP^{(A)}_{\lambda_N}dP^{(B)}_{\mu_N}\:,
\end{equation}
and a similar identity is valid if, keeping the left-hand side, we swap $dP^{(A)}_{\lambda_j}$ and $dP^{(B)}_{\mu_j}$ for $j=1,2,\ldots, N$ on the right-hand side.
\item[{\bf (b)}] If  the PVMs $P^{(aA)}$ and $P^{(bB)}$ commute (this fact holds in particular if either $a=0$ or $b=0$), then
\begin{equation}\label{mainid22}
f\left(\overline{aA+bB}\right)  =  \int_{\bR^{2N}} f\left( \frac{1}{N}\sum^N_{n=1}  (a\lambda_{n} + b\mu_{n})\right) dP^{(A)}_{\lambda_1}dP^{(B)}_{\mu_1}\cdots dP^{(A)}_{\lambda_N}dP^{(B)}_{\mu_N}\:,
\end{equation}
for every  $N\in\mathbb{N}$.  

\item[{\bf (c)}] If the PVMs $P^{(A)}$ and $P^{(B)}$ commute, then 
$$\nu^{(1,n)}_{\phi,\psi,Q}(I\times J)=
	\langle\phi | P^{(A)}_{I}Q_{n}P^{(B)}_{J}\psi\rangle \to \langle\phi| P^{(A,B)}_{I\times J}\psi \rangle \quad \mbox{as $n\to +\infty$}
$$ where   $P^{(A,B)} : \cB(\bR^2) \to \cL(\sH)$ is the joint PVM of $A$ and $B$ and the right-hand side of (\ref{mainid22}) -- defined as 
in (\ref{DEFint}) -- with $N=1$ 
can be interpreted as an integration with respect to that joint PVM.
\end{itemize}
\end{theorem}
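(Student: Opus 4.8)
The plan is to deduce all three parts from Lemma~\ref{proptec}(b) and Theorem~\ref{defint} together with the Trotter--Kato product formula, the substantial work having already been done in establishing those two results. Abbreviate $T_N := \int_{\bR^{2N}} f\!\left(\frac1N\sum_{n=1}^N(a\lambda_n+b\mu_n)\right)dP^{(A)}_{\lambda_1}dP^{(B)}_{\mu_1}\cdots dP^{(A)}_{\lambda_N}dP^{(B)}_{\mu_N}\in\gB(\sH)$, set $U_N(t):=\big(e^{i\frac{t}{N}aA}e^{i\frac{t}{N}bB}\big)^N$, $T:=\overline{aA+bB}$, $U(t):=e^{itT}$, and $\nu_f:=F^{-1}(f)$, a finite complex Borel measure with $|\nu_f|(\bR)=\|f\|_\cF$. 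Chaining (\ref{DEFint}) with (\ref{limnuNn}) yields, for every $\phi,\psi\in\sH$, the quadratic-form identity $\langle\phi|T_N\psi\rangle=\int_\bR\langle\phi|U_N(t)\psi\rangle\,d\nu_f(t)$. Since $t\mapsto U_N(t)\psi$ is strongly continuous with $\|U_N(t)\psi\|=\|\psi\|$ and $\nu_f$ is finite, the Bochner integral $\int_\bR U_N(t)\psi\,d\nu_f(t)$ exists in $\sH$; pairing it against arbitrary $\phi$ and using the identity above identifies it with $T_N\psi$, so $T_N\psi=\int_\bR U_N(t)\psi\,d\nu_f(t)$. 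The companion representation $f(T)\psi=\int_\bR U(t)\psi\,d\nu_f(t)$ for the left-hand side of (\ref{mainid2}) comes from the functional calculus (\ref{spectral2}), the pointwise inversion $f(\tau)=\int_\bR e^{it\tau}d\nu_f(t)$, and Fubini's theorem applied to the two finite measures $\nu_f$ and $\mu^{(T)}_{\phi,\psi}$.

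Granting this, part~(a) is a single dominated-convergence argument: for fixed $\psi\in\sH$,
$$\big\|T_N\psi-f(T)\psi\big\|=\Big\|\int_\bR\big(U_N(t)-U(t)\big)\psi\,d\nu_f(t)\Big\|\le\int_\bR\big\|\big(U_N(t)-U(t)\big)\psi\big\|\,d|\nu_f|(t).$$
Because $aA+bB$ is essentially selfadjoint on $D(aA)\cap D(bB)$, the Trotter--Kato product formula \cite[Theorem~VIII.31]{RS1} (applied to $aA$ and $bB$) gives $U_N(t)\psi\to U(t)\psi$ as $N\to+\infty$ for each fixed $t$; the integrand is bounded by the $|\nu_f|$-integrable constant $2\|\psi\|$, so the last integral tends to $0$, which is precisely the asserted strong-operator convergence. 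The variant with $dP^{(A)}_{\lambda_j}$ and $dP^{(B)}_{\mu_j}$ interchanged is obtained word for word: Lemma~\ref{proptec} and Theorem~\ref{defint} hold just as well for the products of projectors taken in the reversed order, with $U_N(t)$ replaced by $\big(e^{i\frac{t}{N}bB}e^{i\frac{t}{N}aA}\big)^N$, whose Trotter--Kato limit is again $e^{itT}$.

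For part~(b), commutativity of $P^{(aA)}$ and $P^{(bB)}$ makes $e^{i\frac{t}{N}aA}$ and $e^{i\frac{t}{N}bB}$ commute for all $t,N$, hence $U_N(t)=e^{itaA}e^{itbB}=e^{itT}=U(t)$ for every $N$ (strongly commuting selfadjoint operators having a selfadjoint closed sum that generates the product group); plugging this into $\langle\phi|T_N\psi\rangle=\int_\bR\langle\phi|U_N(t)\psi\rangle d\nu_f(t)$ gives $\langle\phi|T_N\psi\rangle=\langle\phi|f(T)\psi\rangle$ independently of $N$, i.e.\ (\ref{mainid22}); the cases $a=0$ or $b=0$ are trivial instances. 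For part~(c), commutativity of the full PVMs produces the joint PVM $P^{(A,B)}$ characterised by $P^{(A,B)}_{I\times J}=P^{(A)}_IP^{(B)}_J$; since $Q_n\to I$ strongly and $P^{(A)}_I$ is bounded, $P^{(A)}_IQ_nP^{(B)}_J\psi\to P^{(A)}_IP^{(B)}_J\psi=P^{(A,B)}_{I\times J}\psi$, which is the claimed convergence of $\nu^{(1,n)}_{\phi,\psi,Q}(I\times J)$. Running the computation behind the quadratic-form identity with $N=1$ and $Q_n\to I$ (Fubini plus $f(a\lambda+b\mu)=\int_\bR e^{ita\lambda}e^{itb\mu}d\nu_f(t)$) then shows that the operator in (\ref{mainid22}) for $N=1$ equals $\int_{\bR^2}f(a\lambda+b\mu)\,dP^{(A,B)}(\lambda,\mu)$, a genuine spectral integral against the joint PVM. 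The one point that could require care is obtaining strong rather than merely weak operator convergence in~(a); representing $T_N\psi$ and $f(T)\psi$ as Bochner integrals of unitary families from the outset is exactly what makes the strong limit fall out of a single dominated-convergence estimate, rather than an additional argument on $T_N^*T_N$.
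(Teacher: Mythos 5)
Your proposal is correct and follows essentially the same route as the paper: both identify the operator $T_N$ with $\int_\bR \big(e^{i\frac{t}{N}aA}e^{i\frac{t}{N}bB}\big)^N d\nu_f(t)$ via Lemma \ref{proptec} and Theorem \ref{defint}, identify $f(\overline{aA+bB})$ with $\int_\bR e^{it(\overline{aA+bB})}d\nu_f(t)$ via Fubini, and then conclude by the Trotter product formula plus dominated convergence with respect to the finite measure $\nu_f$, with parts (b) and (c) handled identically. The only (cosmetic) difference is in how strong convergence is extracted in (a): you bound $\|T_N\psi-f(T)\psi\|$ directly by a Bochner-integral triangle inequality, whereas the paper expands the squared norm as a double integral against $d\overline{\nu_f}\otimes d\nu_f$ and applies dominated convergence there.
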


\begin{proof} 
(a) From now on, $\nu_f := F^{-1}(f)$ according to (\ref{f-muf})-(\ref{Eqn: Fourier transform for measure}).   For $N\in \bN$, we define the operator
$$\int_\bR \left(e^{i \frac{t}{N}aA}e^{i \frac{t}{N}bB}\right)^N d\nu _f(t)$$
as the unique operator  $S\in \gB(\sH)$ such that
$$\langle \phi| S\psi\rangle = \int_\bR\left\langle \phi \left|  (e^{i \frac{t}{N}aA}e^{i \frac{t}{N}bB})^N\psi\right.\right\rangle d\nu _f(t)\quad \mbox{if $\phi,\psi\in \sH$}\:.$$
The operator
\begin{equation}\int_\bR e^{i t(\overline{aA+bB})}d\nu _f(t)\label{ininin}\end{equation}
is analogously defined. \\
The former operator is already known. Indeed,  from (\ref{limnuNn}) and  (\ref{DEFint}), we find
\begin{equation}\int_\bR \left(e^{i \frac{t}{N}aA}e^{i \frac{t}{N}bB}\right)^N d\nu _f(t)= \int_{\bR^{2N}} f\left( \frac{1}{N}\sum^N_{n=1}  
(a\lambda_{n} + b\mu_{n})\right) dP^{(A)}_{\lambda_1}dP^{(B)}_{\mu_1}\cdots dP^{(A)}_{\lambda_N}dP^{(B)}_{\mu_N}\:.\label{ON}\end{equation}
We now turn our attention to the operator in (\ref{ininin}). Let $P^{(\overline{aA+bB})}$ be the spectral measure of the self-adjoint operator $\overline{aA+bB}$ 
and  set 
$\mu_{\phi, \psi}(E) = \langle \phi|P^{(\overline{aA+bB})}(E)\psi \rangle$, $E \in \cB(\bR)$, the complex Borel measure associated to  $\phi$ and $\psi$.
By standard functional calculus, by writing $f\in \cF $ as in \eqref{f-muf} 
and exploiting Fubini's theorem (all measures are finite and the integrand is bounded), we obtain:
\begin{align*}
	\langle \phi| f(\overline{aA+bB}) \psi \rangle&=\int_\bR f(\lambda)d\mu_{\phi,\psi}(\lambda)
	=\int_\bR \int_\bR e^{i\lambda t}d\nu_f (t) d\mu_{\phi,\psi}(\lambda)
	=\int_\bR \int_\bR e^{i\lambda t}d\mu_{\phi,\psi}(\lambda) d\nu _f(t)\\
	&=\int_\bR \langle \phi| e^{i t(\overline{A+B})}\psi\rangle d\nu_f (t).
\end{align*}
In other words,
\begin{equation} f(\overline{aA+bB}) = \int_\bR  e^{i t(\overline{aA+bB})} d\nu_f (t)\label{bbbb}
\end{equation}
Taking advantage of (\ref{ON}) and (\ref{bbbb}), the thesis of the theorem can be rephrased to
\begin{equation}\left|\left| \int_\bR e^{i t(\overline{aA+bB})} d\nu_f (t) \psi  - \int_\bR \left(e^{i \frac{t}{N}aA}e^{i \frac{t}{N}bB}\right)^N d\nu_f (t) \psi\right|\right|^2 \to 0 \quad \mbox{for $N\to +\infty$ if $\psi \in \sH$.}\label{thes2}\end{equation}
Representing the left-hand side as an inner product and expanding it, taking the previous definitions into account, we find 
that the squared norm above equals
$$\int_\bR \left( \int_\bR \left\langle  \left. e^{i t(\overline{aA+bB})} \psi - \left(e^{i \frac{t}{N}aA}e^{i \frac{t}{N}bB}\right)^N\psi   \right|          
e^{i s(\overline{aA+bB})} \psi - \left(e^{i \frac{s}{N}aA}e^{i \frac{s}{N}bB}\right)^N \psi  \right\rangle d\overline{\nu_f}(t)\right) d\nu_f(s)\:.$$
Fubini's theorem permits us to re-write the above integral as
$$\int_{\bR^2} \left\langle  \left. e^{i t(\overline{aA+bB})}\psi  - \left(e^{i \frac{t}{N}aA}e^{i \frac{t}{N}bB}\right)^N\psi   \right|          
e^{i s(\overline{aA+bB})} \psi - \left(e^{i \frac{s}{N}aA}e^{i \frac{s}{N}bB}\right)^N\psi  \right\rangle d\overline{\nu_f}(t) \otimes d\nu_f(s)\:,$$
where  $d\overline{\nu_f}(t) \otimes d\nu_f(s)$ is the product measure.
By the {\em Trotter product formula}  \cite[Theorem VIII.31]{RS1}, which is valid in our hypotheses on $aA$ and $bB$,  both 
entries of the scalar product vanish as $N\to +\infty$. This fact implies that the integral itself vanishes as a consequence of Lebesgue's 
dominated convergence theorem, since the product measure is finite and the integrand is uniformly bounded in $(t,s)$ as the involved operators are unitary.
 In summary,
(\ref{thes2}) is valid and the proof of (a) ends since the last statement is an obvious consequence of $\overline{aA+bB}=\overline{bB+aA}$.

(b) Exploiting the structure of the proof of (a), the thesis is valid if, for every $t\in \bR$,
$$e^{i t(\overline{aA+bB})} = \left(e^{i \frac{t}{N}aA}e^{i \frac{t}{N}bB}\right)^N\:.$$
Let us prove that this identity is in fact true. Since $P^{(aA)}$ and $P^{(bB)}$ commute, referring to their joint PVM $P$, we have
$$e^{i\frac{t}{N}aA}= \int_{\bR^2} e^{i\frac{t}{N}\lambda}dP(\lambda,\mu)\:, \quad e^{i\frac{t}{N}bB}= \int_{\bR^2} e^{i\frac{t}{N}\mu}dP(\lambda,\mu)\:.$$
From the general properties of the integral of bounded functions  with respect to a given PVM,
$$\left(e^{i \frac{t}{N}aA}e^{i \frac{t}{N}bB}\right)^N = 
\int_{\bR^2}\left(e^{i \frac{t}{N}\lambda}e^{i \frac{t}{N}\mu}\right)^NdP(\lambda,\mu) 
= \int_{\bR^2} e^{i t \lambda}e^{i t \mu}dP(\lambda,\mu) = e^{i taA}e^{i tbB}\:.$$
Taking the strong limit as $N\to +\infty$, Trotter's formula yields
$$e^{i t(\overline{aA+bB})} =e^{i taA}e^{i tbB}\:,$$
so that, using again 
$\left(e^{i \frac{t}{N}aA}e^{i \frac{t}{N}bB}\right)^N =  e^{i taA}e^{i tbB}$,
we find
$$e^{i t(\overline{aA+bB})} = \left(e^{i \frac{t}{N}aA}e^{i \frac{t}{N}bB}\right)^N\:,$$
as wanted. 

(c)  is evident from (b),  the requirement that $Q_n \to I$ strongly, the definition of joint PVM, and its elementary properties.
\end{proof}

\section{Some applications}\label{Sec: some applications}

\subsection{The PVM of $\overline{aA+bB}$ from $P^{(A)}$ and $P^{(B)}$}\label{Rmk: PVM of aA+bB in terms of PVM of A,B}
As an application of Theorem \ref{teorapgen},  we prove  that knowledge of the PVMs $P^{(A)},P^{(B)}$ of $A$ and $B$ suffices to determine 
the PVM of $\overline{aA+bB}$ among all PVMs over $\mathbb{R}$.  
We take advantage of  the well-known result  that\footnote{From standard extension theorems of finite 
Borel measures as \cite[Corollary 1.6.2]{Cohn} (whose proof is also valid for complex measures as the reader proves easily), there is only one regular complex 
Borel measure $\mu$ over $\mathbb{R}$  determined by the values
 $\mu([\alpha, \beta))$. If two PVMs $P,Q$ satisfy $\langle \phi| Q_{[\alpha,\beta)}\psi \rangle = \mu_{\phi,\psi}([\alpha, \beta)) =\langle \phi| P_{[\alpha,\beta)}\psi \rangle$ for all $\phi,\psi \in \sH$ and $\alpha < \beta$ in $\mathbb{R}$, then it must also be
$\langle \phi| Q_{E}\psi \rangle = \mu^{(Q)}_{\phi,\psi}(E) =\mu^{(P)}_{\phi,\psi}(E) =\langle \phi| P_{E}\psi \rangle$ for every $E\in \cB(\mathbb{R})$ and $\phi,\psi \in \sH$, so that $P=Q$.}  $P^{(\overline{aA+bB})}$ is known once the projectors $P^{(\overline{aA+bB})}_{[\alpha,\beta)}$ are known for all $\alpha,\beta \in \bR$ with $\alpha < \beta$.

\begin{theorem}\label{teoPVM} If $A$ and $B$ are as in Theorem \ref{teorapgen} and $a,b \in \bR$, the following
 identity holds for every $\alpha,\beta \in \bR$ with $\alpha < \beta$,
\begin{equation}
		\label{PVMAB}
		P^{(aA+bB)}_{[\alpha,\beta)}
		=\mbox{s-}\lim_{\epsilon\to 0^+}\mbox{s-}\lim_{N\to +\infty}
		\int_{\bR^{2N}} 1^{(\epsilon)}_{[\alpha,\beta)}\left( \frac{1}{N}\sum^N_{n=1}  (a\lambda_{n} + b\mu_{n})\right)
		dP^{(A)}_{\lambda_1}dP^{(B)}_{\mu_1}\cdots dP^{(A)}_{\lambda_N}dP^{(B)}_{\mu_N}\,,
	\end{equation}
where the $\epsilon$-parametrized class of   $\cS(\mathbb{R})$ functions
\begin{equation}\label{chi}1^{(\epsilon)}_{[\alpha, \beta)}(x) :=
\exp\left\{-\frac{\epsilon}{(x-\beta)^2}  -\frac{\epsilon}{(x-\alpha)^2} \right\} 1_{(\alpha,\beta)}(x)+\exp\left\{-\frac{(x-\alpha)^2}{\epsilon}\right\}\:,\quad x \in \mathbb{R}
 \end{equation}
pointwise converges to $1_{[\alpha,\beta)}$ for $\epsilon \to 0^+$.
\end{theorem}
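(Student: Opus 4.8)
The plan is to read \eqref{PVMAB} as a composition of two strong limits and to dispose of them one at a time: the inner $N$-limit is handled by Theorem \ref{teorapgen}, and the outer $\epsilon$-limit by the dominated convergence theorem of the spectral calculus of the single self-adjoint operator $\overline{aA+bB}$. First I would check that, for each fixed $\epsilon>0$ and $\alpha<\beta$, the function $1^{(\epsilon)}_{[\alpha,\beta)}$ of \eqref{chi} lies in $\cS(\bR)$, hence in $\cF(\bR)$: its first summand is the restriction to $(\alpha,\beta)$ of a smooth function all of whose derivatives vanish as $x\to\alpha^+$ and as $x\to\beta^-$, so it extends by $0$ to an element of $\cD(\bR)$; the second summand is a Gaussian, hence in $\cS(\bR)$; the sum therefore belongs to $\cS(\bR)\subset\cF(\bR)$.

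With this membership secured, Theorem \ref{teorapgen}(a) applied with $f=1^{(\epsilon)}_{[\alpha,\beta)}$ evaluates the inner limit in closed form,
\begin{equation*}
\mbox{s-}\lim_{N\to+\infty}\int_{\bR^{2N}} 1^{(\epsilon)}_{[\alpha,\beta)}\!\left(\tfrac1N\sum_{n=1}^N(a\lambda_n+b\mu_n)\right) dP^{(A)}_{\lambda_1}dP^{(B)}_{\mu_1}\cdots dP^{(A)}_{\lambda_N}dP^{(B)}_{\mu_N}=1^{(\epsilon)}_{[\alpha,\beta)}\big(\overline{aA+bB}\big),
\end{equation*}
so the theorem is reduced to showing that $1^{(\epsilon)}_{[\alpha,\beta)}\big(\overline{aA+bB}\big)\to P^{(\overline{aA+bB})}_{[\alpha,\beta)}$ strongly as $\epsilon\to0^+$. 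Since $P^{(\overline{aA+bB})}_{[\alpha,\beta)}=1_{[\alpha,\beta)}\big(\overline{aA+bB}\big)$, this is now a statement purely about the functional calculus of $\overline{aA+bB}$.

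For that step I would verify the two ingredients of the dominated convergence argument in the spectral picture. The pointwise limit $1^{(\epsilon)}_{[\alpha,\beta)}(x)\to1_{[\alpha,\beta)}(x)$ holds for every $x\in\bR$: on $(\alpha,\beta)$ the first exponential tends to $1$ and the Gaussian to $0$; at $x=\alpha$ the first summand vanishes and the Gaussian equals $1$; at $x=\beta$ and for $x\notin[\alpha,\beta)$ the first summand vanishes and the Gaussian tends to $0$. Moreover each summand in \eqref{chi} is a negative exponential times an indicator, so $0\le 1^{(\epsilon)}_{[\alpha,\beta)}\le 2$ uniformly in $\epsilon$. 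Hence, writing $\mu_\psi$ for the finite positive spectral measure of $\psi\in\sH$ associated with $\overline{aA+bB}$ as in \eqref{Eq: function of self-adjoint operator}, the spectral calculus gives
\begin{equation*}
\big\|1^{(\epsilon)}_{[\alpha,\beta)}(\overline{aA+bB})\psi-P^{(\overline{aA+bB})}_{[\alpha,\beta)}\psi\big\|^2=\int_\bR\big|1^{(\epsilon)}_{[\alpha,\beta)}(\lambda)-1_{[\alpha,\beta)}(\lambda)\big|^2\,d\mu_\psi(\lambda),
\end{equation*}
where the integrand is bounded by $9$ and tends to $0$ pointwise; by dominated convergence the right-hand side vanishes as $\epsilon\to0^+$. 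Combining this with the inner limit yields \eqref{PVMAB}.

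I do not expect a genuine obstacle here: once Theorem \ref{teorapgen} is in hand the argument is essentially bookkeeping, and the two nested strong limits require no diagonal extraction because the inner one is computed exactly before the outer one is taken. The only points demanding a little care are the smoothness of $1^{(\epsilon)}_{[\alpha,\beta)}$ at the endpoints $\alpha,\beta$ (needed to land inside $\cS(\bR)$) and the value of the pointwise limit exactly at $x=\alpha$ and $x=\beta$ — which is precisely why the regularizer in \eqref{chi} is assembled from a compactly supported bump on $(\alpha,\beta)$ together with a Gaussian centred at $\alpha$, rather than from a single smoothed indicator.
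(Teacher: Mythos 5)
Your proposal is correct and follows essentially the same route as the paper: membership of $1^{(\epsilon)}_{[\alpha,\beta)}$ in $\cS(\bR)\subset\cF(\bR)$ so that Theorem \ref{teorapgen} disposes of the inner $N$-limit, followed by dominated convergence against the finite spectral measure $\mu^{(\overline{aA+bB})}_{\psi,\psi}$ together with the identity $\|f(\overline{aA+bB})\psi\|^2=\int|f|^2\,d\mu^{(\overline{aA+bB})}_{\psi,\psi}$ to handle the outer $\epsilon$-limit. The only difference is cosmetic (the paper establishes the $\epsilon$-limit first and then cites Theorem \ref{teorapgen}), and your endpoint checks at $x=\alpha,\beta$ are a welcome bit of extra care.
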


\begin{proof}  
	Observing that $\|1^{(\epsilon)}_{[\alpha,\beta)}\|_\infty\leq 2$ and 
$1^{(\epsilon)}_{[\alpha,\beta)} \to 1_{[a,b)}$ pointwise for $\epsilon \to 0^+$, the Lebesgue dominated convergence theorem
 and the finiteness of the positive spectral measure $\mu^{(\overline{aA+bB})}_{\psi,\psi}(E):=\langle\psi|P^{(\overline{aA+bB})}_E\psi\rangle$ 
imply that 
	\begin{align*}
		\int_{\mathbb R} |1^{(\epsilon)}_{[\alpha,\beta)}(\tau)- 1_{[\alpha, \beta)}(\tau)|^2d\mu^{(\overline{aA+bB})}_{\psi,\psi}(\tau)\stackrel{\epsilon\to 0^+}{\longrightarrow}
		0 \quad \mbox{for all $\psi\in\mathsf{H}$.}
	\end{align*}
Since $P^{(\overline{aA+bB})}_{[\alpha,\beta)}= 1_{[\alpha,\beta)}(\overline{aA+bB})$ and on account of the general identity  $\|f(\overline{aA+bB})\psi\|^2=\int_{\mathbb R}|f(\tau)|^2d\mu^{(\overline{aA+bB})}_{\psi,\psi}(\tau)$, we have
$P^{(\overline{aA+bB})}_{[\alpha,\beta)} = \mbox{s-}\lim_{\epsilon \to 0^+}1^{(\epsilon)}_{[\alpha,\beta)}(\overline{aA+bB})$.
Moreover, $1^{(\epsilon)}_{[\alpha,\beta)}\in \cS(\mathbb{\bR})$, so that it belongs to $\cF(\mathbb{R})$ and we can apply  Theorem \ref{teorapgen}
 to the right-hand side obtaining (\ref{PVMAB}).
\end{proof}

\begin{remark}\label{Rmk: FT of A+B-spectral measure in terms of FT of A,B-spectral measure}
	$\null$\\
{\em {\bf (1)} A statement analogous to that in 
Theorem \ref{teoPVM}  can be proved for the elements $P^{(\overline{aA+bB})}_{(\gamma,\alpha]}$ with $\gamma< \alpha$. Knowledge of 
$P^{(\overline{aA+bB})}_{(\gamma,\alpha]}$ and $P^{(\overline{aA+bB})}_{[\alpha, \beta)}$ permits one to construct the atom 
$P^{(\overline{aA+bB})}_{\{\alpha\}}$ since $P^{(\overline{aA+bB})}_{(\gamma,\alpha]} P^{(\overline{aA+bB})}_{[\alpha, \beta)} 
= P^{(\overline{aA+bB})}_{(\gamma,\alpha] \cap [\alpha, \beta)} = P^{(\overline{aA+bB})}_{\{\alpha\}}\:.$
Alternatively, $P^{(\overline{aA+bB})}_{\{\alpha\}}$ directly arises from
 (\ref{PVMAB}) when using $1^{(\epsilon)}_{\{\alpha\}}(x) := \exp{\{-(x-\alpha)^2/\epsilon\}}$
 in place of $1^{(\epsilon)}_{[\alpha, \beta)}(x) $.\\
{\bf (2)} If $\alpha$ and $\beta$ admit corresponding neighbourhoods  without  intersection with the difference 
$\sigma(aA+bB) \setminus [\alpha, \beta)$, 
 the $\epsilon$ limit can be removed from  the left-hand side of (\ref{PVMAB}) 
if replacing $1^{(\epsilon)}_{\{\alpha\}}$ with a map $1'_{[\alpha, \beta)} \in \cD(\mathbb{R})$
which constantly attains 
the value $1$ over $[\alpha, \beta)$ and rapidly vanishes on $\sigma(aA+bB) \setminus [\alpha, \beta)$.\\ 
	{\bf (3)} Theorem \ref{teoPVM} proves  that the PVM of $aA+bB$ can be determined from the PVMs of the operators $A$ and $B$.
	With a slightly different approach (assuming $a=b=1$ for the sake of simplicity) it is also possible to prove a more direct
 relation between the Fourier transform of the spectral measures $\mu^{(\overline{A+B})}_{\psi,\varphi}$ 
with those of the spectral measures $\mu^{(A)}_{\psi,\varphi}$ and $\mu^{(B)}_{\psi,\varphi}$.	
	For that, let $\psi,\varphi\in\mathsf{H}$ and let $\{\psi_k\}_{k\in\mathbb{N}}\subset\mathsf{H}$ be a complete orthonormal basis of $\mathsf{H}$.
	From Trotter product formula it follows that
	\begin{align}
		\nonumber
		\widehat{\mu}^{(\overline{A+B})}_{\psi,\varphi}(t)&=
		\langle\psi| e^{it(\overline{A+B})}\varphi\rangle\\
		\nonumber
		&=\lim_{N\to+\infty}
		\langle\psi|(e^{\frac{it}{N}A}e^{\frac{it}{N}B})^N\varphi\rangle\\
		\nonumber
		&=\lim_{N\to+\infty}\sum_{k_1,\ldots,k_N\in\mathbb{N}}
		\langle\psi|e^{\frac{it}{N}A}\psi_{k_1}\rangle
		\langle\psi_{k_1}|e^{\frac{it}{N}B}\psi_{k_2}\rangle\cdots
		\langle\psi_{k_{N-1}}|e^{\frac{it}{N}A}\psi_{k_N}\rangle
		\langle\psi_{k_N}|e^{\frac{it}{N}B}\varphi\rangle\\
		\label{Eqn: relation between FT of A+B and FT of A,B}
		&=\lim_{N\to+\infty}\sum_{k_1,\ldots,k_N\in\mathbb{N}}
		\widehat{\mu}^{(A)}_{\psi,\psi_{k_1}}\bigg(\frac{it}{N}\bigg)
		\widehat{\mu}^{(B)}_{\psi_{k_1},\psi_{k_2}}\bigg(\frac{it}{N}\bigg)\cdots
		\widehat{\mu}^{(A)}_{\psi_{k_{N-1}},\psi_{k_N}}\bigg(\frac{it}{N}\bigg)
		\widehat{\mu}^{(B)}_{\psi_{k_N},\varphi}\bigg(\frac{it}{N}\bigg)\,.
	\end{align}
	Together with the injectivity of the Fourier transform for finite measure -- \textit{cf.} equation
 \eqref{Eqn: Fourier transform for measure} -- equation \eqref{Eqn: relation between FT of A+B and FT of A,B} provides 
another (more indirect) relation between $P^{(\overline{A+B})}$ and $P^{(A)},P^{(B)}$.
	Notice that Equation \eqref{Eqn: relation between FT of A+B and FT of A,B} can be proved without invoking either 
Lemma \ref{proptec} or Theorem \ref{teorapgen} -- although it can be seen as an application of Theorem \ref{teorapgen}
 for the case $f=1_{\mathbb{R}}=f_\delta\in\mathscr{F}(\mathbb{R})$,
 $\delta$ denoting the Dirac measure 
centred at $0$.} \hfill 
$\blacksquare$ 
\end{remark}

\begin{example}\label{Ex: exactness result for position and momentum operator}
	{\em We provide here  an explicit computation of formula \eqref{mainid2} which is shown to match with the exact
 result -- \textit{cf.} Theorem \ref{teorapgen}
 and part (3) Remark \ref{Rmk: FT of A+B-spectral measure in terms of FT of A,B-spectral measure}.
	Let  $\mathsf{H}=L^2(\mathbb{R}, dx)$ and let $A:X$ be (with obvious domain)  the
 selfadjoint  {\em position operator}  $(X\psi)(x):=x\psi(x)$, and  $B:=P$ the
	selfadjoint {\em momentum operator} (the unique selfadjoint extension of  $(B\psi)(x):=i\frac{d\psi}{dx}(x)$ for $\psi \in \cS(\bR)$)\footnote{We 
recall that our convention for the Fourier
 transform is $\widehat{g}(k):=(2\pi)^{-1/2}\int_{\mathbb{R}}g(x)e^{ixk}dx$, $g(x)=(2\pi)^{-1/2}\int_{\mathbb{R}}\widehat{g}(k)e^{-ikx}dk$ for $g\in\mathscr{S}(\mathbb{R})$ -- \textit{cf.} equation \eqref{f-muf}.}.
	It turns out that $\cS(\bR) \subset D(A)\cap D(B)$ and $aA+bB$ is essentially selfadjoint thereon for every $a,b\in \bR$,  its closure
	being a  generator of the Weyl algebra \cite{Moretti2017} according to Stone's theorem. As a consequence, the
 previously developed theory  can be applied to the pair $A$,$B$.
	
	Taking advantage of {\em Weyl algebra commutation relations}, namely
	$$ e^{iaX}e^{ibP}=e^{-\frac{i}{2}ab}e^{iaX+ibP}, \qquad a, b \in \bR,$$
	we have that for all $\phi,\psi\in\mathsf{H}$
	\begin{align}\label{Eqn: exact result for X,P}
	\widehat{\mu}_{\phi,\psi}^{(\overline{A+B})}(t)=
	\langle\phi|e^{it(\overline{A+B})}\psi\rangle=
	e^{-\frac{i}{2}t^2}\langle\phi|e^{itX}e^{itP}\psi\rangle=
	e^{-\frac{i}{2}t^2}\int_{\mathbb{R}}\overline{\phi(x_1)}e^{itx_1}\psi(x_1-t)dx_1\,.
	\end{align}
	We argue that equation \eqref{mainid2} reproduces the exact result \eqref{Eqn: exact result for X,P}.
	A simple computation  shows that, if $E\in \cB(\bR)$,
	\begin{align}
	(P^{(A)}_E\psi)(x)=1_E(x)\psi(x)\,,\qquad
	\widehat{(P^{(B)}_E\psi)}(k)=1_E(k)\widehat{\psi}(k)\,,\label{eq-XP-2}
	\end{align}
	for, respectively, $\psi \in D(A)$ or $\psi\in D(B)$ and where $\hat{\cdot}$ denotes the Fourier-Plancherel transform.
	With this information we wish to compute the limit as $N\to+\infty$ of
	\begin{multline}\label{eq42}
	\left\langle\phi\left|
	\int_{\bR^{2N}} e^{\frac{it}{N}\sum_{\ell=1}^N(x_\ell+k_\ell)}dP^{(A)}_{x_1}dP^{(B)}_{k_1}\cdots dP^{(A)}_{x_N}dP^{(B)}_{k_N}
	\psi\right.\right\rangle \\ =
	\int_{\mathbb{R}^{2N}}e^{\frac{it}{N}\sum_{\ell=1}^N(x_\ell+k_\ell)}\rho_{\phi,\psi}(x_1,\ldots,k_N)dx_1 dk_1\cdots dx_N dk_N\, , 
	\end{multline}
	where $\rho_{\phi,\psi}(x_1,\ldots,k_N)=\frac{1}{(2\pi)^{N-\frac{1}{2}}}\overline{\phi(x_1)}\prod_{\ell=1}^{N-1}e^{ik_{\ell}(x_{\ell+1}-x_\ell)}e^{-ik_Nx_N}\widehat{\psi}(k_N)$. Equality \eqref{eq42} follows from \eqref{DEFint} and \eqref{eq-XP-2} by applying Fourier-Plancherel transform and its inverse $N$ times.
	Notice that we can avoid the regularization $Q_n$ by considering $\rho$ as a distribution -- there is no integration in $t$.
	A direct computation leads to
	\begin{align*}
	\int_{\mathbb{R}^{2N}}e^{\frac{it}{N}\sum_{\ell=1}^N(x_\ell+k_\ell)}&\rho_{\phi,\psi}(x_1,\ldots,k_N)dx_1\cdots dx_Ndk_1\cdots dk_N\\&=
	\int_{\mathbb{R}^{2N}}\frac{e^{\frac{it}{N}\sum_{\ell=1}^Nx_\ell}
	\overline{\phi(x_1)}}{(2\pi)^{N-1/2}}\prod_{\ell=1}^{N-1}e^{ik_{\ell}(x_{\ell+1}-x_\ell+\frac{t}{N})}e^{-ik_N(x_N-\frac{t}{N})}\widehat{\psi}(k_N)
	dx_1\cdots dx_Ndk_1\cdots dk_N\\&=
	\int_{\mathbb{R}^{N}}e^{\frac{it}{N}\sum_{\ell=1}^Nx_\ell}
	\overline{\phi(x_1)}\prod_{\ell=1}^{N-1}\delta(x_{\ell+1}-x_\ell+\frac{t}{N})\psi(x_N-\frac{t}{N})dx_1\cdots dx_N\\&=
	\int_{\mathbb{R}}e^{itx_1-\frac{i}{2}t^2\frac{N(N-1)}{N^2}}
	\overline{\phi(x_1)}\psi(x_1-t)dx_1\\&
	\xrightarrow{N\to+\infty}
	e^{-\frac{i}{2}t^2}\int_{\mathbb{R}}e^{itx_1}
	\overline{\phi(x_1)}\psi(x_1-t)dx_1\,,
	\end{align*}
	where in the Dirac delta distributions we are  ensured that $x_\ell=x_1-\frac{\ell}{N}t$ for $\ell=2,\ldots, N$.\hfill 
$\blacksquare$}
\end{example}

\subsection{Relation with Feynman integration}\label{Sec: relation with Feynman integration}
The next application proves the close relation between formula (\ref{mainid2}) and the notion of {\em Feynman integral} in {\em phase space}. 
Let us consider now a slight modification of example \ref{Ex: exactness result for position and momentum operator}. 
Let $\mathsf{H}:=L^2(\mathbb{R}, dx)$ and let $(A\psi)(x):=x^2\psi(x)$ and $B:=P^2$ be respectively  the square of the selfadjoint 
position and momentum operators defined in example \ref{Ex: exactness result for position and momentum operator}.  
Further let $a=\Omega ^2 /2$ and  $b=(2m)^{-1}$, with $m,\Omega >0$ constants. It is well known that $A+B$ is essentially selfadjoint over 
$\cS(\bR) \subset D(A)\cap D(B)$ and
 $\overline{A+B}$ coincides to  the harmonic oscillator Hamiltonian operator $H$   \begin{equation}\label{HOH}H=\overline{\frac{P^2}{2m}+\frac{\Omega ^2}{2}X^2}.\end{equation}  
Let us recall that if $T$ is a selfadjoint operator on $\sH$ with spectral measure $P^{(T)}$ and $h:\bR\to \bR$ is a Borel measurable map, the spectral measure 
$P^{(h(T))}$ of the selfadjoint operator $h(T)$ (defined via functional calculus out of  $T$) is related to $P^{(T)}$ by the formula
\begin{equation}\label{change of variales formula}
\int_\bR f(\lambda )dP^{(h(T))}_\lambda =\int_\bR f(h(\lambda ))dP^{(T)}_\lambda
\end{equation}
for any Borel-measurable function $f: \bR \to \bC$.\\
As a consequence, if  $f\in \cF(\bR)$ and using $h : \bR \ni r \mapsto r^2$,   \eqref{mainid2} implies 
$$
f(H)=s\mbox{-}\lim_{N\to \infty }\int_{\bR^{2N}}f\left(\frac{1}{N}\sum_{n=1}^N\left(\frac{k_n^2}{2m}+\frac{\Omega^2}{2}x_n^2\right)\right)dP^{(X)}_{x_1}
dP^{(P)}_{k_1}\cdots dP^{(X)}_{x_N}dP^{(P)}_{k_N}.
$$
If we now choose  the function $f\in \cF(\bR)$ of the form $f(\lambda )=e^{-it\lambda}$, with $t\in \bR$, then for any $\phi,\psi \in \sH$ we have the
 following representation for the matrix elements of  the unitary
 evolution operator $U(t)=e^{-itH}$ 
$$\langle \phi | e^{-itH}\psi \rangle=\lim_{N\to \infty }\int_{\bR^{2N}}e^{-i\sum_{n=1}^N\left(\frac{k_n^2}{2m}+\frac{\Omega^2}{2}x_n^2\right)\frac{t}{N}
}\langle \phi | dP^{(X)}_{x_1}dP^{(P)}_{k_1}\cdots dP^{(X)}_{x_N}dP^{(P)}_{k_N}\psi\rangle.$$
Analogously to the derivation of equality \eqref{eq42},
iterated applications of Fourier transform and its inverse yield
\begin{multline}\label{Feynman1}\langle \phi | e^{-itH}\psi \rangle=\lim_{N\to \infty }\int_{\bR^{2N+1}}
\frac{e^{-i\sum_{n=1}^N\left(\frac{k_n^2}{2m}+\frac{\Omega^2}{2}x_n^2\right)\frac{t}{N}}}{(2\pi)^N} \overline{\phi (x_1)}\psi (x_{N+1})e^{i\sum_{\ell=1}^Nk_\ell(x_{\ell+1}-x_{\ell})}\\
dx_1\cdots dx_{N+1} dk_1 \cdots dk_N, 
\end{multline}
 where we recall that the integrals over $\bR^{2N}$ are just a formal symbol standing for the limit of the regularizing procedure described in
 \eqref{DEFint}, since in this case the distribution \eqref{dist1} is not  associated to a complex measure on $\cB(\bR^{2N})$.
Formula \eqref{Feynman1} admits an interpretation in terms of a {\em phase space Feynman path integral } representation for the time 
evolution operator $U(t)$, as discussed, e.g., in \cite{AlGuMa02,KuGo11}. Indeed, it is possible to look at the exponent appearing in the integral 
on the right-hand side of  \eqref{Feynman1}, namely the function 
 $$S(x_0,\dots, x_N, k_1,\dots , k_N)=\sum_{l=1}^Nk_l(x_l-x_{l-1})-\sum_{n=1}^N\left(\frac{k_n^2}{2m}+\frac{\Omega^2}{2}x_n^2\right)\frac{t}{N},$$
 as the Riemann sum approximation of the classical {\em action functional} in the {\em Hamiltonian formulation}:
 $$S[q,p]=\int_0^t p(s)dq(s)-\int_0^th(q(s),p(s))ds,$$
 where $h:\bR^2\to \bR$ is the classical Hamiltonian of the harmonic oscillator, i.e. 
$$h(x,k)=\frac{k^2}{2m}+\frac{\Omega^2}{2}x^2\:, \quad (x,k)\in \bR^2\:.$$

A completely similar discussion can be repeated in the case the harmonic oscillator potential is replaced with a more general potential, 
namely a measurable map  $V:\bR\to \bR$ such that  the sum of the associated multiplication operator $M_V:D(M_V)\subset \sH \to \sH$ 
and $B=P^2$ is essentially selfadjoint.
 We stress that the limit as $N\to \infty$ on the right-hand side of \eqref{Feynman1} cannot be interpreted in terms of a well-defined
 integral over an infinite dimensional space of paths $(q,p):[0,t]\to \bR^2 $ , formally written (in Feynman path integral notation) as 
 $$\langle \phi | e^{-itH}\psi \rangle= \int_{\{(q,p):[0,t]\to \bR^2\}}  e^{iS[q,p]}\bar \phi 
(q(t))\psi (q(0)) dqdp$$ since it is impossible to construct a corresponding   complex measure, as 
extensively discussed in \cite{ AlMa16} and  in section \ref{Sec: The sequence of measures nu^N_phi,psi  generally does not admit (projective) limit}.
\subsection{(Semi)bounded  $A,B$, Jordan product, Polynomials, Feynman-Kac formula} If the selfadjoint 
operators $A$ and $B$ are bounded then $aA+bB \in \gB(\sH)$ is defined for all $a,b \in \mathbb{R}$ and (\ref{supp-reg-measures}) 
 permits us to extend (\ref{mainid2}) to  the case of $f\in C^\infty(\bR)$. A similar result is valid when $A$ and $B$ are bounded from below.

Let us start with $A,B \in \gB(\sH)$.  For any choice of $a,b\in \bR$ we then have 
\begin{align}
\int _{\bR^{2N}} f\bigg(\frac{1}{N}\sum_{l=1}^{N}(a\lambda _l &+b\mu_l)\bigg)d\nu^{(N,n)}_{\phi,\psi,Q}(\lambda_1, \mu_1,
 \dots , \lambda _N,\mu_N) \nonumber \\
&= \int _{\bR^{2N}} \tilde{f}\left(\frac{1}{N}\sum_{l=1}^{N}(a\lambda _l+b\mu_l)\right)d\nu^{(N,n)}_{\phi,\psi,Q}(\lambda_1, \mu_1,
 \dots , \lambda _N,\mu_N) \label{limnuNnZ0} 
\end{align}
for every $n\in \bN$, if  $\tilde{f}\in \cS(\bR)$ is such that $\tilde{f}(x)=f(x)$ if $x \in [-\|aA\|-\|bB\|, \|aA\|+\|b\|]$. As a consequence,
(\ref{limnuNn}) is still valid with $\tilde{f}$ in place of $f$ on the right-hand side, and the limit does not depend neither on the choice of the
 regularizing sequence $Q:=\{Q_n\}_{n\in \bN}$ nor on the choice of  $\tilde{f}$:
\begin{multline}\label{limnuNnZ}
	\lim_{n\to \infty } \int _{\bR^{2N}} f\bigg(\frac{1}{N}\sum_{l=1}^{N}(a\lambda _l+b\mu_l)\bigg)d\nu^{(N,n)}_{\phi,\psi,Q}(\lambda_1,\mu_1,\dots ,\lambda _N,\mu_N)\\
	=\int_\bR\left\langle \phi \left|  (e^{i \frac{t}{N}aA}e^{i \frac{t}{N}bB})^N\psi\right.\right\rangle d\nu _{\tilde{f}}(t)\,.
	\end{multline}
A similar result is found  when $A$ and $B$ are (unbounded) selfadjoint operators which are bounded from below and $a,b \geq 0$
are such that $aA+bB$ is essentially selfadjoint.
In this case  $\overline{aA+bB}$ is bounded from below as well with
	$\inf \sigma(\overline{aA+bB}) \geq a\inf \sigma(A) + b \inf \sigma(B)\,.$
With the said $A,B,a,b$, both (\ref{limnuNnZ0}) and (\ref{limnuNnZ}) are true provided
$f \in C^\infty(\bR)$, $\tilde{f} \in \cS(\bR)$, and $f(x) =\tilde{f}(x)$ for $x \geq a\inf \sigma(A) + b \inf \sigma(B)$.

\noindent Identity (\ref{limnuNnZ}) is  sufficient in both cases for giving the following definition  based on  Riesz' lemma.
\begin{definition}\label{defext} {\em Assume that one of the following two cases holds
\begin{itemize}
\item[{\bf (a)}] $A,B \in \gB(\sH)$ are selfadjoint,  $a,b \in \bR$, and  $f \in C^\infty(\bR)$;
\item[{\bf (b)}]  $A,B$ are selfadjoint and bounded from below,  $a,b\geq 0$,  $aA+bB$ is essentially selfadjoint, and $f\in C^\infty(\mathbb{R})$ is 
such that there exists $\tilde{f}\in\mathscr{S}(\mathbb{R})$ for which $f(x)=\tilde{f}(x)$ for $x\geq a\inf\sigma(A)+b\inf\sigma(B)$.
\end{itemize}
 Then 
$$\int_{\bR^{2N}} f\left( \frac{1}{N}\sum^N_{n=1}  (a\lambda_{n} +b \mu_{n})\right) dP^{(A)}_{\lambda_1}dP^{(B)}_{\mu_1}\cdots dP^{(A)}_{\lambda_N}dP^{(B)}_{\mu_N} \in \gB(\sH)$$
is defined as the unique operator satisfying 
(\ref{DEFint}).} \hfill 
$\blacksquare$ 
\end{definition}
 \noindent This definition is an evident extension of  the definition for $f\in \cF(\mathbb{R})$ appearing in Theorem \ref{defint} 
  because,  if  $\tilde{f}\in \cS(\mathbb{R})\subset \cF(\mathbb{R})$ is defined as above with respect to $f$, then
\begin{align} \int_{\bR^{2N}} f\bigg( \frac{1}{N}\sum^N_{n=1}  (a\lambda_{n} &+b \mu_{n})\bigg) dP^{(A)}_{\lambda_1}dP^{(B)}_{\mu_1}\cdots dP^{(A)}_{\lambda_N}dP^{(B)}_{\mu_N} \nonumber \\=
&\int_{\bR^{2N}} \tilde{f}\left( \frac{1}{N}\sum^N_{n=1}  (a\lambda_{n} +b \mu_{n})\right) dP^{(A)}_{\lambda_1}dP^{(B)}_{\mu_1}\cdots dP^{(A)}_{\lambda_N}dP^{(B)}_{\mu_N} \end{align}
The remaining properties (i)-(iv) of Theorem \ref{defint} hold as well
provided $\|f\|_\cF$ is replaced for $\|\tilde{f}\|_\cF$ on the right-hand side of (\ref{ineq2}) in (ii).\\
With this extended definition,  Theorem \ref{teorapgen} is still valid as the reader immediately proves noticing in
 particular that from (\ref{Eq: function of self-adjoint operator}) and (\ref{spectral2}), we have
$$f(aA+bB) = \tilde{f}(aA+bB)\:.$$
\begin{proposition}\label{proplast}
 If either (a) or (b) in Definition \ref{defext}  is valid,  then 
\begin{equation}\label{mainid25}
f\left(\overline{aA+bB}\right)  = \mbox{s-}\lim_{N\to +\infty} \int_{\bR^{2N}} f\left( \frac{1}{N}\sum^N_{n=1}  (a\lambda_{n} + b\mu_{n})\right) dP^{(A)}_{\lambda_1}dP^{(B)}_{\mu_1}\cdots dP^{(A)}_{\lambda_N}dP^{(B)}_{\mu_N}\:.
\end{equation}
\end{proposition}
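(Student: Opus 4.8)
The plan is to reduce \eqref{mainid25} to Theorem \ref{teorapgen} by passing from $f$ to the Schwartz regularization $\tilde f\in\cS(\bR)\subset\cF(\bR)$ provided by Definition \ref{defext}, and checking that neither side of \eqref{mainid25} is affected by the substitution $f\mapsto\tilde f$.

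First I would treat the left-hand side, i.e.\ the identity $f(\overline{aA+bB})=\tilde f(\overline{aA+bB})$ already noted just before the statement. In case (a), $aA+bB$ is bounded (hence closed), with $\|aA+bB\|\le\|aA\|+\|bB\|$, so $\sigma(\overline{aA+bB})\subset[-\|aA\|-\|bB\|,\|aA\|+\|bB\|]$; in case (b), the semiboundedness estimate $\inf\sigma(\overline{aA+bB})\ge a\inf\sigma(A)+b\inf\sigma(B)$ recalled above gives $\sigma(\overline{aA+bB})\subset[a\inf\sigma(A)+b\inf\sigma(B),+\infty)$. In either case $f$ and $\tilde f$ agree on the spectrum, so the spectral integrals \eqref{Eq: function of self-adjoint operator} defining the two operators coincide.

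Next I would treat the right-hand side using the support property \eqref{supp-reg-measures}: for every $N$ and $n$ the measure $\nu^{(N,n)}_{\phi,\psi,Q}$ is concentrated on $(\sigma(A)\times\sigma(B))^{N}$, and on that set the averaged argument $\frac1N\sum_{l=1}^N(a\lambda_l+b\mu_l)$ stays in the region where $f=\tilde f$ — in case (a) because $|\lambda_l|\le\|A\|$, $|\mu_l|\le\|B\|$ force it into $[-\|aA\|-\|bB\|,\|aA\|+\|bB\|]$, and in case (b) because $\lambda_l\ge\inf\sigma(A)$, $\mu_l\ge\inf\sigma(B)$ together with $a,b\ge0$ force it to be $\ge a\inf\sigma(A)+b\inf\sigma(B)$ (only convexity of the interval, resp.\ of the half-line, enters). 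Hence $f$ and $\tilde f$ produce the same integrand $\nu^{(N,n)}_{\phi,\psi,Q}$-almost everywhere, so the right-hand side of \eqref{DEFint} is the same whether computed with $f$ or with $\tilde f$; letting $n\to\infty$ and invoking Riesz' lemma as in Definition \ref{defext} yields
\[
\int_{\bR^{2N}} f\!\left(\frac1N\sum_{n=1}^N(a\lambda_n+b\mu_n)\right)\!dP^{(A)}_{\lambda_1}dP^{(B)}_{\mu_1}\cdots dP^{(A)}_{\lambda_N}dP^{(B)}_{\mu_N}
=\int_{\bR^{2N}} \tilde f\!\left(\frac1N\sum_{n=1}^N(a\lambda_n+b\mu_n)\right)\!dP^{(A)}_{\lambda_1}dP^{(B)}_{\mu_1}\cdots dP^{(A)}_{\lambda_N}dP^{(B)}_{\mu_N}.
\]

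Finally, since $\tilde f\in\cF(\bR)$ and $aA+bB$ is essentially selfadjoint by hypothesis, Theorem \ref{teorapgen}(a) applies verbatim to $\tilde f$ and gives $\tilde f(\overline{aA+bB})=\mbox{s-}\lim_{N\to+\infty}\int_{\bR^{2N}}\tilde f(\cdots)\,dP^{(A)}_{\lambda_1}\cdots dP^{(B)}_{\mu_N}$; chaining this with the two identifications above proves \eqref{mainid25}. I do not expect a genuine obstacle here: the argument is a bookkeeping reduction to the already-established Theorem \ref{teorapgen}, and the only point deserving care is the uniform-in-$N$ localization of the averaged spectral parameter to the set on which $f$ and $\tilde f$ coincide — which is exactly why Definition \ref{defext} imposes a two-sided size bound in case (a) and only a one-sided bound (together with $a,b\ge0$) in case (b).
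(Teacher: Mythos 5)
Your proposal is correct and follows essentially the same route as the paper: the paper's own (largely implicit) argument consists precisely of noting $f(\overline{aA+bB})=\tilde f(\overline{aA+bB})$ via spectral localization, the identity (\ref{limnuNnZ0}) showing the operator integrals with $f$ and $\tilde f$ coincide thanks to the support property (\ref{supp-reg-measures}), and then invoking Theorem \ref{teorapgen} for $\tilde f\in\cS(\bR)\subset\cF(\bR)$. Your explicit remark on the convexity of the interval (resp.\ half-line) under averaging is exactly the point the paper leaves to the reader.
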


\noindent  If with $A,B\in \gB(\sH)$, definition \ref{defext} applies in particular to the operator $aA+bB$ as in  (\ref{Jordan1}) below,
  when referring to  the smooth map $f: \bR \ni x \mapsto x \in \bR$.
That  identity shows how the real linear space structure of the algebra of bounded observables can be  constructed 
employing the PVMs of the involved observables only. The whole algebra of bounded observables $\gB(\sH)$ has another natural
 operation that is the (non-associative)  {\em Jordan product}:
$$A \circ B := \frac{1}{2}(AB+BA)\:, \quad\mbox{for $A=A^*$, $B=B^*$, $A,B\in \gB(\sH)$. }$$
The Jordan product of $A$ and $B$ admits an expansion similar to (\ref{Jordan1}) where, again, only the PVMs of $A$ and $B$ is used. 
These  results can be further generalized as stated in the last item of the following theorem which concerns polynomials.
\begin{theorem} \label{teopolynomials}
	Let $A, B\in \gB(\sH)$ be selfadjoint operators, with $\sH$ separable, whose PVMs are respectively denoted
 by $P^{(A)}$ and $P^{(B)}$. 
 Let  $\nu^{(N,n)}_{\phi,\psi,Q} :\cB(\bR^{2N})\to \bC$ be the complex measures   defined as in (\ref{reg-measures}).
 The following facts are true.
\begin{itemize}
\item[{\bf(a)}] If $a,b \in \bR$,
\begin{equation}\label{Jordan1}
aA+bB  = \mbox{s-}\lim_{N\to +\infty} \int_{\bR^{2N}} \left(a\sum^N_{n=1} \frac{\lambda_n}{N} + b\sum^N_{n=1} \frac{\mu_n}{N} \right)\: dP^{(A)}_{\lambda_1}dP^{(B)}_{\mu_1}\cdots dP^{(A)}_{\lambda_N}dP^{(B)}_{\mu_N}\:.
\end{equation}
\item[{\bf(b)}]  The Jordan product of $A$ and $B$ satisfies
\begin{equation}\label{Jordan2}
A\circ B = \mbox{s-}\lim_{N\to +\infty} \int_{\bR^{2N}} \left(\sum^N_{n=1} \frac{\lambda_{n}}{N}\right)
\left( \sum^N_{n=1}  \frac{\mu_{n}}{N}\right) \:\: dP^{(A)}_{\lambda_1}dP^{(B)}_{\mu_1}\cdots dP^{(A)}_{\lambda_N}dP^{(B)}_{\mu_N}\:,
\end{equation}
where the operator after the symbol of limit  is the unique operator such that
\begin{multline}
\left\langle \phi \left| \int_{\bR^{2N}}\left(\sum^N_{n=l} \frac{\lambda_{l}}{N}\right)  
\left(\sum^N_{n=l} \frac{\mu_{l}}{N}\right)  \:\:dP^{(A)}_{\lambda_1}dP^{(B)}_{\mu_1}\cdots dP^{(A)}_{\lambda_N}dP^{(B)}_{\mu_N}\psi \right. \right\rangle \\
=\lim_{n\to \infty } \int _{\bR^{2N}} \left(\sum_{l=1}^{N}\frac{\lambda _l}{N}\right)\left(\sum_{l=1}^{N}\frac{\mu _l}{N}\right)\:\: d\nu^{(N,n)}_{\phi,\psi,Q}(\lambda_1, \mu_1, \dots , \lambda _N,\mu_N)\:,\label{DEFintJ}
\end{multline}
for all $\phi, \psi\in \sH$.
\item[{\bf(c)}]  Assume  $p,q=0,1,2,\ldots$ and let $S_{p+q}(A^p B^q)$ denote the selfadjoint operator consisting of the symmetrized product of $p$ copies of $A$ and $q$ copies of $B$,
$$S_{p+q}(A^p B^q) := {p+q\choose q}^{-1}(A^pB^q+ A^{p-1}BAB^{q-1}+ \cdots + B^qA^p)\:.$$
If  only a finite number of  coefficients $c_{p,q}\in \bR$ do not vanish, then
\begin{align}\label{Jordan2n}
&\sum_{p,q} c_{p,q}S_{p+q}(A^p B^q) \nonumber \\ &= \mbox{s-}\lim_{N\to +\infty} \int_{\bR^{2N}} \sum_{p,q} c_{p,q}\left(\sum^N_{n=1} \frac{\lambda_{n}}{N}\right)^p
\left( \sum^N_{n=1}  \frac{\mu_{n}}{N}\right)^q \:\: dP^{(A)}_{\lambda_1}dP^{(B)}_{\mu_1}\cdots dP^{(A)}_{\lambda_N}dP^{(B)}_{\mu_N}\:,
\end{align}
where the operator after the symbol of limit  is the unique operator such that
\begin{multline}
\left\langle \phi \left| \int_{\bR^{2N}}\sum_{p,q} c_{p,q}\left(\sum^N_{l=1} \frac{\lambda_{l}}{N}\right)^p
\left(\sum^N_{l=1} \frac{\mu_{l}}{N}\right)^q  \:\:dP^{(A)}_{\lambda_1}dP^{(B)}_{\mu_1}\cdots dP^{(A)}_{\lambda_N}dP^{(B)}_{\mu_N}\psi \right. \right\rangle \\=\lim_{n\to \infty } \int _{\bR^{2N}} \sum_{p,q} c_{p,q}\left(\sum_{l=1}^{N}\frac{\lambda _l}{N}\right)^p\left(\sum_{l=1}^{N}\frac{\mu _l}{N}\right)^q\:\: d\nu^{(N,n)}_{\phi,\psi,Q}(\lambda_1, \mu_1, \dots , \lambda _N,\mu_N)\:,\label{DEFintJn}
\end{multline}
for all $\phi, \psi\in \sH$.
\end{itemize}
The limit in the right-hand side of (\ref{DEFintJ}) and (\ref{DEFintJn}) does not depend on the choice of the regularizing sequence $Q$.
\end{theorem}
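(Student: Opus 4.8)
The plan is to reduce all three items to facts already established for integrands of the form $f\!\left(\frac1N\sum_n(a\lambda_n+b\mu_n)\right)$, using only two ingredients: the linearity of the operator-valued integral in its integrand together with the bound of Theorem \ref{defint}(ii) (both in the extended version of Definition \ref{defext}, i.e.\ with $\|f\|_\cF$ replaced by $\|\tilde{f}\|_\cF$) and the limit formula (\ref{limnuNnZ}) of Lemma \ref{proptec}(b); and Proposition \ref{proplast}, which controls the passage to the strong limit $N\to+\infty$. The first pair of ingredients guarantees that, whenever the polynomial integrand in (\ref{DEFintJ}) or (\ref{DEFintJn}) is rewritten as a finite linear combination of such $f$-type integrands, the $Q$-regularised sesquilinear form on the corresponding right-hand side converges, is bounded in $\phi,\psi$, and is $Q$-independent, so that Riesz' lemma does produce the claimed operator; Proposition \ref{proplast} then identifies its strong $N$-limit. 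Item (a) is immediate: apply Proposition \ref{proplast} to $f\in C^\infty(\bR)$, $f(x)=x$, noting $f(\overline{aA+bB})=aA+bB$ (the operator is bounded) and $f\!\left(\frac1N\sum_n(a\lambda_n+b\mu_n)\right)=a\sum_n\frac{\lambda_n}{N}+b\sum_n\frac{\mu_n}{N}$.

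For item (b) I would use the elementary scalar identity
\[
\Big(\sum_{l=1}^N\tfrac{\lambda_l}{N}\Big)\Big(\sum_{l=1}^N\tfrac{\mu_l}{N}\Big)=\tfrac12\Big[\Big(\tfrac1N\sum_{l=1}^N(\lambda_l+\mu_l)\Big)^2-\Big(\tfrac1N\sum_{l=1}^N\lambda_l\Big)^2-\Big(\tfrac1N\sum_{l=1}^N\mu_l\Big)^2\Big],
\]
which expresses the integrand in (\ref{DEFintJ}) as a linear combination of three integrands $f_2\!\left(\frac1N\sum_l(a\lambda_l+b\mu_l)\right)$, with $f_2(x)=x^2\in C^\infty(\bR)$ and $(a,b)$ equal to $(1,1)$, $(1,0)$, $(0,1)$. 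Linearity of the integral and (\ref{limnuNnZ}) then show that the operator in (\ref{DEFintJ}) is well defined and $Q$-independent, and equals $\tfrac12$ times the corresponding combination of the operators $\int_{\bR^{2N}}f_2(\cdots)\,dP^{(A)}_{\lambda_1}dP^{(B)}_{\mu_1}\cdots dP^{(A)}_{\lambda_N}dP^{(B)}_{\mu_N}$. Passing to the strong limit $N\to+\infty$ via Proposition \ref{proplast}, and using $f_2(\overline{aA+bB})=(aA+bB)^2$ for bounded $A,B$, yields $\tfrac12[(A+B)^2-A^2-B^2]=A\circ B$.

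Item (c) runs on the same scheme once the degree-two polarisation above is replaced by a general algebraic fact: every polynomial $\sum_{p,q}c_{p,q}\Lambda^p M^q$ in two commuting indeterminates (finite sum) can be written as $\sum_i\gamma_i(a_i\Lambda+b_iM)^{d_i}$ for finitely many reals $\gamma_i,a_i,b_i$ and integers $d_i$. Working one homogeneous degree $d$ at a time, this follows from invertibility of the matrix $\big(\binom{d}{q}t_j^{\,q}\big)_{j,q=0,\dots,d}$ for distinct $t_0,\dots,t_d$ (a Vandermonde matrix right-multiplied by the invertible diagonal matrix $\mathrm{diag}\big(\binom{d}{0},\dots,\binom{d}{d}\big)$), which lets one solve for each monomial $\Lambda^{d-q}M^q$ in terms of the powers $(\Lambda+t_jM)^d$. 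Substituting $\frac1N\sum_l\lambda_l$ for $\Lambda$ and $\frac1N\sum_l\mu_l$ for $M$, the integrand in (\ref{DEFintJn}) becomes a fixed linear combination of integrands $f_{d_i}\!\left(\frac1N\sum_l(a_i\lambda_l+b_i\mu_l)\right)$ with $f_{d_i}(x)=x^{d_i}\in C^\infty(\bR)$; exactly as in (b) this proves that the operator in (\ref{DEFintJn}) is well defined, $Q$-independent, and, after the strong limit $N\to+\infty$ via Proposition \ref{proplast}, equal to $\sum_i\gamma_i(a_iA+b_iB)^{d_i}$. It remains to recognise this sum as $\sum_{p,q}c_{p,q}S_{p+q}(A^pB^q)$: introduce the linear map $\Phi\colon\bR[\Lambda,M]\to\gB(\sH)$ determined on the monomial basis by $\Phi(\Lambda^p M^q):=S_{p+q}(A^pB^q)$, and verify the bookkeeping identity $\Phi\big((a\Lambda+bM)^d\big)=(aA+bB)^d$. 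Applying $\Phi$ to both sides of $\sum_{p,q}c_{p,q}\Lambda^p M^q=\sum_i\gamma_i(a_i\Lambda+b_iM)^{d_i}$ then gives $\sum_{p,q}c_{p,q}S_{p+q}(A^pB^q)=\sum_i\gamma_i(a_iA+b_iB)^{d_i}$, which is the asserted equality; item (b) is the special case $p=q=1$, $c_{1,1}=1$.

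The routine parts are the convergence and $Q$-independence of the regularised forms, inherited term by term from Lemma \ref{proptec}(b)/(\ref{limnuNnZ}), and the boundedness needed for Riesz' lemma, which follows from Theorem \ref{defint}(ii) with $\|f\|_\cF$ replaced by $\sum_i|\gamma_i|\,\|\tilde{f}_{d_i}\|_\cF$. The one step that requires genuine care is the combinatorial identity $\Phi\big((a\Lambda+bM)^d\big)=(aA+bB)^d$: on the scalar side the coefficient of $\Lambda^p M^q$ in $(a\Lambda+bM)^d$ is $\binom{d}{q}a^p b^q$, while $(aA+bB)^d=\sum_{p+q=d}a^p b^q\big(\sum_{w}w(A,B)\big)$ where the inner sum runs over the $\binom{d}{q}$ words with $p$ factors $A$ and $q$ factors $B$, i.e.\ $\sum_{p+q=d}a^p b^q\binom{d}{q}S_{p+q}(A^pB^q)$ — so the normalisation $\binom{p+q}{q}^{-1}$ in the definition of $S_{p+q}$ is precisely what makes the two coefficients coincide. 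Together with the Vandermonde argument underpinning the algebraic decomposition, this is the crux; everything else is a direct reduction to Theorem \ref{teorapgen}/Proposition \ref{proplast}.
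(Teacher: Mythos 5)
Your proposal is correct and follows essentially the same route as the paper: item (a) via Proposition \ref{proplast} with $f(x)=x$, item (b) via the polarisation $xy=\tfrac12[(x+y)^2-x^2-y^2]$ combined with (\ref{limnuNnZ}), Riesz' lemma and Proposition \ref{proplast}, and item (c) via the decomposition of each monomial into powers of linear forms together with the identity $\Phi\big((a\Lambda+bM)^d\big)=(aA+bB)^d$. The only cosmetic difference is that you prove the polynomial decomposition explicitly by a Vandermonde argument (with linear forms $\Lambda+t_jM$), whereas the paper simply cites Sylvester's theorem.
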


\begin{proof} We only have to prove (b) and (c). Regarding (b), 
it essentially relies on the elementary decomposition $xy = \frac{1}{2}(x+y)^2-\frac{1}{2}x^2-\frac{1}{2}y^2$, which implies
\begin{align}&2\int _{\bR^{2N}}  \left(\sum_{l=1}^{N}\frac{\lambda _l}{N}\right)\left(\sum_{l=1}^{N}\frac{\mu _l}{N}\right)\:\: d\nu^{(N,n)}_{\phi,\psi,Q}
(\lambda_1, \mu_1, \dots , \lambda _N,\mu_N)\nonumber \\
&= \int _{\bR^{2N}}\left( \frac{1}{N}\sum_{l=1}^{N}(\lambda _l +\mu_l)\right)^2\:\: d\nu^{(N,n)}_{\phi,\psi,Q}(\lambda_1, \mu_1, \dots , \lambda _N,\mu_N)\nonumber \\
& -\int _{\bR^{2N}} \left(\frac{1}{N}\sum_{l=1}^{N} \lambda_l\right)^2 \:\: d\nu^{(N,n)}_{\phi,\psi,Q}(\lambda_1, \mu_1, \dots , \lambda _N,\mu_N)
 -\int _{\bR^{2N}} \left(\frac{1}{N}\sum_{l=1}^{N} \mu_l\right)^2 \:\: d\nu^{(N,n)}_{\phi,\psi,Q}(\lambda_1, \mu_1, \dots , \lambda _N,\mu_N)\:.
\end{align}
As discussed above, the limits for $n\to +\infty$ of the final three addends exist and do not depend on the used sequence $Q$.
We conclude that the right-hand side of (\ref{DEFintJ}) exists and does not depend on $Q$. More precisely, from (\ref{limnuNnZ}),
\begin{align}\label{NIFE}& \lim_{n\to +\infty} \int _{\bR^{2N}}  \left(\sum_{l=1}^{N}\frac{\lambda _l}{N}\right)\left(\sum_{l=1}^{N}\frac{\mu _l}{N}\right) \:\: d\nu^{(N,n)}_{\phi,\psi,Q}(\lambda_1, \mu_1, \dots , \lambda _N,\mu_N) \nonumber\\
 &=\frac{1}{2}\int_\bR\left\langle \phi \left|  (e^{i \frac{t}{N}A}e^{i \frac{t}{N}B})^N\psi\right.\right\rangle d\nu _{\tilde{f}}(t)
- \frac{1}{2}\int_\bR\left\langle \phi \left|  e^{i tA}\psi\right.\right\rangle d\nu _{\tilde{f}}(t)- \frac{1}{2}\int_\bR\left\langle \phi \left|  e^{i tB}\psi\right.\right\rangle d\nu _{\tilde{f}}(t)\end{align}
where $\tilde{f}:\bR \to \bR$ is such that $\tilde{f}\in \cS(\bR)$ such that 
$\tilde{f}(x) = x^2$ for the values $x \in [-\|A\|-\|B\|, \|A\|+\|B\|]$.  At this point Riesz' lemma proves that there is a unique operator 
satisfying (\ref{DEFintJ}).  Decomposing
$$ \left(\sum_{l=1}^{N}\frac{\lambda _l}{N}\right)\left(\sum_{l=1}^{N}\frac{\mu _l}{N}\right) = \frac{1}{2}\left(\frac{1}{N}\sum_{l=1}^{N}(\lambda_\ell+
 \mu_\ell)\right )^2 - \frac{1}{2}\left(\frac{1}{N}\sum_{l=1}^{N}\lambda_\ell
 \right )^2 -\frac{1}{2}
 \left(\frac{1}{N}\sum_{l=1}^{N}
 \mu_\ell\right )^2$$
on the right-hand side of (\ref{Jordan2}),
and noticing that analogously to $xy = \frac{1}{2}(x+y)^2-\frac{1}{2}x^2-\frac{1}{2}y^2$,
$$A \circ B = \frac{1}{2}(A+B)^2 -  \frac{1}{2}A^2- \frac{1}{2} B^2\:,$$
 (\ref{Jordan2}) is an immediate consequence of (\ref{mainid25}) applied to the three addends on the right-hand sides of the two identies above. The proof of (c) is a straightforward generalization the proof   of (b). It relies on the well-known Sylvester's result (see, e.g., the review part of \cite{Rez13}), that every real (or complex)  homogeneous polynomial in $x$ and $y$ of degree $m$, in particular the monomial $x^py^q$ (with $m:=p+q)$, can be decomposed as $x^py^q = \sum_{k} (a_kx+b_ky)^m$ for a finite number ($\leq m+1$) of constants  $a_k,b_k \in \bR$ (resp. $\bC$), and the fact that this result easily implies
$S_{p+q}(A^pB^q) = \sum_k (a_kA+b_kB)^m$.
\end{proof}

\begin{remark} {\em Item (c) proves the existence of a linear map associating each real  polynomial $p$ on $\bR^2$ to a {\em suitably symmetrized} (as in the left-hand side of (\ref{Jordan2n})) polynomial of the operators $A$ and $B$. One naturally wonders if that 
map is a {\em quantization map} \cite{landsman}. It is not possible to answer the question without further information on the $^*$-algebra generated by $A$ and $B$.
 As a matter of fact, a necessary condition is that the aforementioned map transforms a Poisson structure on the space of polynomials to a real Lie-algebra structure on the space of the said polynomials of $A$ and $B$,  whose Lie bracket is the commutator of operators (more precisely $i[\cdot, \cdot]$). In the general case, the space of those polynomials of operators is not a Lie algebra, unless $A$ and $B$ satisfy some further relations,  in particular, it must be possible to expand $i[A,B]$ as a real polynomial of $A$ and $B$ which is symmetrized as said.} \hfill $\blacksquare$
\end{remark}

\noindent Case (b) in Proposition  \ref{proplast}  has  an interesting application which proves the interplay of formula \eqref{mainid2} and  the so-called 
 {\em Feynman-Kac formula}.
\begin{example}{\em
Let us consider again the setting of section \ref{Sec: relation with Feynman integration}. 
Since in this case both operators $A$ and $B$ have positive spectrum, we can apply formula \eqref{mainid25} also for $f$ with 
exponential growth. In particular we can choose $f:\bR\to \bR$ of the form $f(\lambda)=e^{-\tau\lambda}$, with $\tau > 0$, obtaining the following representation for the matrix elements of the semigroup generated by the selfadjoint operator $H$ (see Eq. \eqref{HOH}):

\begin{multline}\label{heat1}
	\langle \phi | e^{-\tau H}\psi \rangle=
	\lim_{N\to \infty }\frac{1}{(2\pi)^N}\int_{\bR^{2N+1}}
	e^{-\sum_{n=1}^N\left(\frac{k_n^2}{2m}+\frac{\Omega^2}{2}x_n^2\right)\frac{\tau}{N}}
	\overline{\phi (x_1)}\psi (x_{N+1})e^{i\sum_{\ell=1}^Nk_\ell(x_{\ell+1}-x_\ell)} \\
	dx_1\cdots dx_{N+1} dk_1 \cdots dk_N\,.
\end{multline}
If $\psi \in L^1(\bR,dx)$, the integrals on $\bR^{2N+1}$ appearing on the right-hand side are  absolutely convergent thanks
 to the fast decaying properties of the exponentials appearing in the integrand. In particular, in this case, by applying Fubini 
Theorem and integrating with respect to $k_1,\ldots, k_N$, we get
\begin{equation}\label{heat2}\langle \phi | e^{-\tau H}\psi \rangle=\lim_{N\to \infty }
\left( \frac{mN}{2\pi \tau }\right)^{N/2}\int_{\bR^{N+1}}
e^{-\sum_{n=1}^N\frac{\Omega^2}{2}x_n^2\frac{\tau }{N}}\overline{\phi (x_1)}\psi (x_{N+1})e^{-\frac{m}{2}\sum_{\ell=1}^N\frac{(x_{\ell+1}-x_{\ell})^2}{\tau/N}} 
dx_1\cdots dx_{N+1}\,.
\end{equation}
In this case, thanks to the presence of the Gaussian density $$\rho _G(x_1,...,x_N)=\left( \frac{mN}{2\pi \tau }\right)^{N/2} e^{-\frac{m}{2}\sum_{l=1}^N\frac{(x_l-x_{l-1})^2}{\tau/N}} $$ in the integrand,  the limit for $N\to\infty$ of the finite dimensional integrals appearing in the right-hand side of \eqref{heat2} can be interpreted in terms of an integral over the Banach space $C_\tau(\bR)$ of continuous paths $\omega :[0,\tau]\to \bR$ endowed with the uniform norm, the corresponding Borel $\sigma$-algebra and the Wiener Gaussian measure $W$.  By a standard argument (see, e.g.,  \cite{Sim}) formula \eqref{heat2} yields the celebrated {\em Feynman-Kac formula}:
$$\langle \phi | e^{-\tau H}\psi \rangle=\int_\bR\overline{\phi (x)}\int_{C_\tau (\bR)}\psi(x+\omega (\tau )) e^{-\frac{\Omega^2}{2}\int_0^\tau (\omega (s)+x)^2ds} dW(\omega)dx.$$

As remarked at the end of section \ref{Sec: relation with Feynman integration}, the discussion above generalizes to the case the 
harmonic oscillator potential is replaced with a Borel measurable non-negative map $V:\bR\to \bR$ such that the sum of
 the associated multiplication operator $M_V:D(M_V)\subset\sH \to \sH$ and $B=P^2$ is essentially selfadjoint. Related ideas can be also found in \cite{JL,JoLaNi,Nie} as well as in Lapidus'  papers from the late 80's \cite{Lap1,Lap2,Lap3}.
}\hfill $\blacksquare$\end{example}

\section{Counterexamples}\label{Sec: counterexamples}

In this section, we provide a series of counterexamples which make evident  the rigidity of the structures involved in the rigorous generalization of formula
 \eqref{Eq: f(aA+bB) in the finite dimensional case} to infinite dimensional case  -- \textit{cf.} Theorem \ref{teorapgen}.

In section \ref{Sec: extension to infinite dimensional Hilbert spaces}, we introduced a regularization procedure through the projectors $Q_n$  to safely
 define a family of measures $\nu^{(N,n)}_{\phi,\psi}$ whose distributional limit $n\to+\infty$ permitted us  to define the operator 
integral in equation \eqref{Eq: operator integral w.r.t. product of 2N-projectors} -- \textit{cf.} Theorem \ref{defint}.
This may appear artificial at first glance and one may wonder if  the regularization $Q_n$ can be removed. This is not the
 case -- \textit{cf.} section \ref{Sec: The Q_n-regularization  is generally necessary}.

Moreover, even in the finite dimensional case -- where $Q_n$ can be chosen to be the identity and $\nu^{(N)}_{\phi,\psi}$ does 
exists as a complex measure on $\mathbb{R}^{2N}$ -- the properties of the sequence of measures $\lbrace\nu^{(N)}_{\phi,\psi}\rbrace_N$ 
are however not sufficient  to define a suitable projective limit $N\to +\infty$ -- \textit{cf.} section \ref{Sec: The sequence of measures nu^N_phi,psi  generally does not admit (projective) limit}.

Finally,  the convergence properties of $\mu^{(N,n)}_{\phi,\psi}$ as $n\to+\infty$ are considered in the best case scenario, namely when
 $P^{(A)}$, $P^{(B)}$ commute.
Even in this, in a sense tamed situation,  simple cases pop out where $\mu^{(N,n)}_{\phi,\psi}$
 does not converge in the space of complex measure $\mathscr{M}(\mathbb{R}^{2N})$ -- \textit{cf.}
 section \ref{Sec: Convergence of N,n-sequence of measure in the space of complex measure for commuting PVMs: a counterexample}.

\subsection{The $Q_n$-regularization  is generally necessary}\label{Sec: The Q_n-regularization  is generally necessary}
 Lemma \ref{proptec} introduces the regularized measure $\nu^{(N,n)}_{\phi,\psi}$ which is built out of the PVMs $P^{(A)}$, $P^{(B)}$ 
associated with the operators $A$, $B$ together with a suitable ``regularizing" sequence of projectors $\{Q_n\}$.
	One may wonder whether it is possible to avoid the regularization $\{Q_n\}$, thus considering a complex measure 
$\nu^{(N)}_{\phi,\psi}$ defined on products of Borel set $I_\ell,J_\ell\in\mathscr{B}(\mathbb{R})$ as
	\begin{align*}
		\nu^{(N)}_{\phi,\psi}(\times_{\ell=1}^NI_\ell\times J_\ell):=
		\langle\phi | P^{(A)}_{I_1}P^{(B)}_{J_1}\cdots P^{(A)}_{I_N}P^{(B)}_{J_N}\psi\rangle\,.
	\end{align*}
	Unfortunately $\nu^{(N)}_{\phi,\psi}$ generally  does not extend to a complex measure on $\mathbb{R}^{2N}$
 as is in particular  shown by the following physically relevant example.

\begin{example}\label{Ex: counterexample to well-definiteness of complex measure}
{\em 	With reference to example \ref{Ex: exactness result for position and momentum operator},
 let us check if, in this concrete case, $\nu^{(N)}_{\phi,\psi}$ extends to a complex measure on $\mathbb{R}^{2N}$.
	
	For all $I_1,J_1,\ldots,I_N,J_N\in\mathscr{B}(\mathbb{R})$ we have
	\begin{align*}
	\nu^N_{\phi,\psi}(\times_{\ell=1}^N I_\ell\times J_\ell)&=
	\int1_{I_1}(x_1)1_{J_1}(k_1)\cdots 1_{J_N}(k_N)\rho_{\phi,\psi}(x_1,k_1,\ldots,k_N)\mathrm{d}x_1\cdots\mathrm{d}k_N\,,\\
	\rho_{\phi,\psi}(x_1,\ldots,k_N)&:=(2\pi)^{-N+1/2}\overline{\phi(x_1)}\prod_{\ell=1}^{N-1}e^{ik_{\ell}(x_{\ell+1}-x_\ell)}e^{-ik_Nx_N}\widehat{\psi}(k_N)\,.
	\end{align*}
	Since $\rho_{\phi,\psi}\in L^1_{\mathrm{loc}}(\mathbb{R}^{2N},d^Nxd^Nk)\setminus
 L^1(\mathbb{R}^{2N},d^Nxd^Nk)$ is it evident  that $\nu^N_{\phi,\psi}$ does not extend to a
 well-defined complex measure on $\mathbb{R}^{2N}$ because the total variation of such complex measure 
would be $||\rho_{\phi,\psi}||_{L^1(\mathbb{R}^{2N},d^Nxd^Nk)}$ which is not finite. \hfill 
$\blacksquare$ }
\end{example}

\subsection{The sequence of measures $\nu^{(N)}_{\phi,\psi}$  generally does not admit a (projective) limit}
\label{Sec: The sequence of measures nu^N_phi,psi  generally does not admit (projective) limit}

Let us consider the elementary case  of a  finite-dimensional Hilbert space, where the regularization 
driven by $Q$ is not necessary. In fact,  we can simply fix $Q_n = I$. 

 	A natural issue  is whether or not it is possible to interpret the limit for $N\to \infty$ on the right-hand side of  \eqref{Eq: f(aA+bB) in the finite
 dimensional case} in terms of a integral on an infinite dimensional space.
 	Indeed, in  the particular case where $\sH$ is finite dimensional, for any $\phi,\psi\in \sH$ it is possible to define 
a family of complex Borel measures $\{\nu^{(N)}_{\phi,\psi}\}_{N\in \mathbb{N}}$, with $\nu^{(N)}_{\phi,\psi}:\mathscr{B}(\bR^{2N})\to \bC$.
 	By introducing for any $M,N\in \bN$ with $M\leq N$, the  projection maps $\pi^N_M:\bR^{2N}\to \bR^{2M}$, it is simple to see that the measures $\{\nu^{(N)}_{\phi,\psi}\}_{N\in \mathbb{N}}$  fulfil the  consistency condition 
 	$$ \nu^{(M)}_{\phi,\psi}=(\pi^N_M)_*\nu^{(N)}_{\phi,\psi}, \qquad M\leq N,$$
 	that is for all $I_1,...,I_M, J_1,...,J_M\in \mathscr{B}(\bR)$ 
 	$$\nu^{(M)}_{\phi,\psi}(\times_{\ell=1}^M I_\ell\times J_\ell)=\nu^{(N)}_{\phi,\psi}(\times_{\ell=1}^MI_\ell\times J_\ell     \   \times   \   (\bR \times \bR)^{N-M}).$$ 
  	It is then meaningful to investigate whether the projective family of complex measures $\{\nu^{(N)}_{\phi,\psi}\}_{N\in \mathbb{N}}$  
admits a projective limit.
  	For that, let us consider the infinite dimensional space $(\bR\times \bR)^\bN$ of all sequences $\{(\lambda _n,\mu_n)\}_{n \in \mathbb{N}}$.
  	Moreover, let $\Sigma$ be the $\sigma$-algebra on $(\bR\times \bR)^\bN$ generated by the cylinder sets $\pi_N^{-1}(E)$, 
where $N\in\mathbb{N}$, $E\in  \mathscr{B}(\bR^{2N})$ and $\pi_N:(\bR\times \bR)^\bN\to (\bR\times \bR)^N$ is the canonical projection map.
	Within this setting,  it is meaningful to investigate  the existence of a complex bounded measure $\nu_{\phi,\psi}$ 
on $( (\bR\times \bR)^\bN, \Sigma)$ such that $\nu^{(N)}_{\phi,\psi}=(\pi_N)_*\nu_{\phi,\psi} $ for all $n\in \bN$ (see appendix 
\ref{App: projective limit of complex measure} for further details).
 However, as proved in Appendix \ref{App: projective limit of complex measure}, a necessary condition for the existence
 of $\nu_{\phi, \psi}$ is a uniform bound on the total variation $\|\nu^{(N)}_{\phi,\psi}\|=|\nu^{(N)}_{\phi,\psi}(\mathbb{R}^{2N})|$
 of the measures $\{\nu^{(N)}_{\phi,\psi}\}_{N\in \mathbb{N}}$.
Unfortunately in general this does not hold when $A$ and $B$ do not commute as the following example shows.

\begin{example}\label{Ex: counterexample Pauli matrices}
{\em Let us consider the special  case  $\sH=\bC^2$, where   $A:=\sigma _z$ and $B:=\sigma _x$, $\sigma _z, \sigma _x$ denoting two {\em Pauli matrices}. 
In this case,  the Hilbert space is finite dimensional and for any $N\geq 1$ and $\phi, \psi \in \bC^2$, it is possible to define the complex Borel
 measure $\nu^N_{\phi, \psi}$ on $\{\pm 1\}^{2N}$ (the topology being the discrete one) as:
$$\nu^N_{\phi, \psi}(\{(a_1,b_1,...,a_N,b_N)\}):=\langle \phi | P^{(A)}_{a_1}P^{(B)}_{b_1}\cdots P^{(A)}_{a_N}P^{(B)}_{b_N}\psi \rangle$$
where $a_1,...,a_N, b_1,..., b_N\in \{\pm1\}$ and $P^{(A)}_{\pm}, P^{(B)}_{\pm}$ are defined as
$$P^{(A)}_{\pm}:=|z_\pm\rangle \langle z_\pm|, \qquad P^{(B)}_{\pm}:=|x_\pm\rangle \langle x_\pm|,$$
where $|z_+\rangle, |z_-\rangle$, resp.$|x_+\rangle, |x_-\rangle$, are the normalized eigenvectors of $\sigma_z$, resp. $\sigma _x$.

The {\em total variation} of $\nu^{(N)}_{\phi, \psi}$ can be computed explicitly here. 
Indeed, by writing the vectors $\phi, \psi $ as linear combinations of the vectors of the orthonormal basis $ \{|z_+\rangle, |z_-\rangle\}$ and $ \{|x_+\rangle, |x_-\rangle\}$, namely:
$$\phi=\alpha_+|z_+\rangle+\alpha_- |z_-\rangle, \qquad \psi =\beta_+|x_+\rangle+\beta_- |x_-\rangle$$
we get for any $(a_1,b_1,...,a_N,b_N)\in \{\pm 1\}^{2N}$:
$$\left|  \nu^{(N)}_{\phi, \psi}(\{(a_1,b_1,...,a_N,b_N)\})\right|=2^{-N+1/2}(|\alpha _+|+|\alpha _-|)(|\beta _+|+|\beta _-|)\:.$$
Hence we obtain 
$$\left|  \nu^{(N)}_{\phi, \psi}\right|=\sqrt 2\ 2^{N-2}(|\alpha _+|+|\alpha _-|)(|\beta _+|+|\beta _-|)\:.$$
The sequence of mesures $\nu^{(N)}_{\phi,\psi}$ {\em cannot} admit a projective limit, since $\sup_N \left|  \nu^{(N)}_{\phi, \psi}\right|=+\infty$.} $\hfill \blacksquare$ 
\end{example}

\noindent The illustrated  no-go result is strictly related to the non-commutativity of the {\em spectral projectors} 
$P^{(A)}$ and $P^{(B)}$.
 In fact, in the general case of a separable Hilbert space $\sH$, if the spectral measures $P^{(A)}$ and $P^{(B)}$
 commute then,  for any couple of normalized vectors
 $\phi, \psi\in \sH$ and for any $N\geq 1$, the formula  
$$\nu^{(N)}_{\phi, \psi}(\times_{\ell=1}^N I_\ell\times J_\ell)=\langle\phi | P^{(A)}_{I_1}P^{(B)}_{J_1}\cdots P^{(A)}_{I_N}P^{(B)}_{J_N}\psi\rangle$$
defines a complex measure with finite total variation.
In particular, since 
$$   \langle\phi, P^{(A)}_{I_1}P^{(B)}_{J_1}\cdots P^{(A)}_{I_N}P^{(B)}_{J_N}\psi\rangle= \langle\phi| P^{(A)}_{\cap_{l=1}^NI_l}P^{(B)}_{{\cap_{l=1}^NJ_l}}\psi\rangle,$$
the following identity  holds for any set $E\in \cB(\bR^{2N})$,
\begin{equation}\label{inequality-totalvariation}
\nu^{(N)}_{\phi, \psi}(E)=\int_{\bR^2}1_{E}(\lambda_1, \mu_1,\dots \lambda_1, \mu_1)d\nu^1_{\phi, \psi}(\lambda_1, \mu_1),
\end{equation}
where $\nu^{(1)}_{\phi, \psi}(I\times J)=\langle\phi| P^{(A)}_{I}P^{(B)}_{J}\psi\rangle$
 is associated to the joint spectral measure of $A $ and $B$. By \eqref{inequality-totalvariation} it is simple to obtain the following
 inequality for any $N\geq 1$
\begin{equation}\label{un-bound}|\nu^{(N)}_{\phi, \psi}|\leq |\nu^{(1)}_{\phi, \psi}|\leq \|\phi\|  \| \psi\|  .\end{equation}
In addition, in  the particular case where   $\phi=\psi$ and $\|\phi\|=1$ formula \eqref{Eq: f(aA+bB) in the finite dimensional case}
 admits a simple probabilistic interpretation. Indeed, in this case the measures $\{\nu^{(N)}_{\phi, \psi}\}_N$
 form a consistent family of probability measures that can be interpreted as the joint distributions of a sequence 
$\{\xi_n,\eta_n\}_n$ of random variables with values in $\bR^2$ which are  in fact trivially
 completely correlated, i.e. $(\xi_n,\eta_n)=(\xi_1,\eta_1)$ for all $n\in \bN$, and the distribution of the random
 vector $(\xi_1,\eta_1)$ is given by the joint spectral measure of $A$ and $B$. 

In the case the commutativity condition is {\em not} fulfilled, then a uniform bound as \eqref{un-bound} cannot be obtained in general as the folloing example shows.

\begin{example}\label{Ex: counterexample Pauli matrices2}
{\em Let us consider again $\sH:=\bC^2$ as before, and the two spectral measures $P^{(A)}$ and $P^{(B)}$ on $\{\pm 1\}$ defined as
$$P^{(A)}_{\pm}:=|z_\pm\rangle \langle z_\pm|, \qquad P^{(B)}_{\pm}:=|v_\pm\rangle \langle v_\pm|,$$
where $$|v_+\rangle=(\cos \epsilon ) |z_+\rangle+(\sin \epsilon ) |z_-\rangle, \quad |v_-\rangle=-(\sin\epsilon ) |z_+\rangle+(\cos \epsilon )|z_-\rangle\,,\qquad\epsilon>0\,.$$
Let us choose $\phi:=\frac{\sqrt 2}{2} |z_+\rangle+\frac{\sqrt 2}{2} |z_-\rangle$ and $\psi:=\frac{\sqrt 2}{2} |v_+\rangle+\frac{\sqrt 2}{2} |v_-\rangle$
 and consider 
again the complex measure   $\nu^{(N)}_{\phi, \psi}$ on $\{\pm 1\}^{2N}$ defined  for any $(a_1,b_1,...,a_N,b_N)\in \{\pm 1\}^{2N}$ by
$$\nu^{(N)}_{\phi, \psi}(\{(a_1,b_1,...,a_N,b_N)\}):=\langle \phi| P^{(A)}_{a_1}P^{(B)}_{b_1}\cdots P^{(A)}_{a_N}P^{(B)}_{b_N}\psi \rangle. $$
It is simple to verify that 
$$|\nu^{(N)}_{\phi, \psi}|=\sum_{(a_1,b_1,...,a_N,b_N)\in \{\pm 1\}^{2N}}|\nu^{(N)}_{\phi, \psi}(\{(a_1,b_1,...,a_N,b_N)\})|=\frac{1}{2}(|\cos \epsilon|+|\sin \epsilon |)^{2N-1}$$
which shows that $\sup_N||\nu^{(N)}_{\phi, \psi}||=+\infty$, unless the spectral measures commute.} \hfill 
$\blacksquare$ 
\end{example}

\subsection{Convergence of $\{\mu^{(N,n)}_{\phi,\psi}\}_{n \in \mathbb{N}}$ in $\mathscr{M}(\mathbb{R}^{2N})$ for commuting PVMs: a counterexample}
\label{Sec: Convergence of N,n-sequence of measure in the space of complex measure for commuting PVMs: a counterexample}

Let us consider once again the sequence of finite complex measure $\nu^{(N,n)}_{\phi,\psi}$ on $\mathbb{R}^{2N}$.
Since $Q_n\to I$ strongly it follows that, for all $I_\ell,J_\ell\in\mathscr{B}(\mathbb{R}^N)$,
\begin{align*}
	\lim_{n\to+\infty}\nu^{(N,n)}_{\phi,\psi}(\times_{\ell=1}^NI_\ell\times J_\ell)=
	\langle\phi|P^{(A)}_{I_1}P^{(B)}_{J_1}\cdots P^{(A)}_{I_N}P^{(B)}_{J_N}\psi\rangle=:
	\nu^{(N)}_{\phi,\psi}(\times_{\ell=1}^NI_\ell\times J_\ell)\,.
\end{align*}
Unfortunately, as shown in Example \ref{Ex: counterexample to well-definiteness of complex measure}, $\nu^{(N)}_{\phi,\psi}$ does 
not define a measure on $\mathbb{R}^{2N}$.
However, whenever $[P^{(A)},P^{(B)}]=0$, $\nu^{(N)}_{\phi,\psi}$ does define a complex measure on $\mathbb{R}^{2N}$ -- \textit{cf.} 
section \ref{Sec: The sequence of measures nu^N_phi,psi  generally does not admit (projective) limit}.
Therefore one may wonder whether the convergence $\nu^{(N,n)}_{\phi,\psi}\to\nu^{(N)}_{\phi,\psi}$ holds true in the space of
 complex measure $\mathscr{M}(\mathbb{R}^{2N})$, namely if
\begin{align*}
	\lim_{n\to +\infty}\|\nu^{(N,n)}_{\phi,\psi}-\nu^{(N)}_{\phi,\psi}\|=
	\lim_{n\to +\infty}|\nu^{(N,n)}_{\phi,\psi}-\nu^{(N)}_{\phi,\psi}|(\mathbb{R}^{2N})=0\,.
\end{align*}
Unfortunately this is not the case as shown in particular by the following example.
\begin{example}
	{\em Let us consider $\mathsf{H}=L^2([0,1],dx)$ and let $A=B=X$ be the usual {\em position operator} over $[0,1]$ -- \textit{cf.} example 
\ref{Ex: counterexample to well-definiteness of complex measure}.
	We consider the complete orthonormal basis $\{e_k\}_{k\in\mathbb{Z}}$ of $L^2([0,1],dx)$ made by exponentials
 $e_k(x):=e^{2\pi ikx}$ and we set $Q_n:=\sum_{|k|\leq n} |e_k\rangle\langle e_k|$ for $n\in\mathbb{N}$.
	Moreover we choose $\psi=\phi=1_{[0,1]}\in L^2([0,1],dx)$.
	It follows that, for all $I,J\in\mathscr{B}([0,1])$ (where the Borel algebra is referred to the topology on $[0,1]$ form $\mathbb{R}$),
	\begin{align*}
		\nu^{(1)}_{\phi,\psi}(I\times J)&=
		\langle\phi| P^{(X)}_{I\cap J}\psi\rangle=
		\int_0^1 1_{I\cap J}(x)dx\,,\\
		\nu^{(1,n)}_{\phi,\psi}(I\times J)&=
		\langle\phi|P^{(X)}_IQ_nP^{(X)}_J\psi\rangle=
		\int_0^1\int_0^11_I(x)1_J(y)\rho_n(x-y)dxdy\,,
	\end{align*}
	where $	\rho_n(x-y):=\sum_{|k|\leq n}e_k(x-y)$.
	A direct computation yields $\|\nu^{(1)}_{\phi,\psi}\|=|\nu^{(1)}_{\phi,\psi}|([0,1]^2)=1$ while $\|\nu^{(1,n)}_{\phi,\psi}\|=\|\rho_n\|_{L^1([0,1],dx)}$.
	Since the linear map $L^1([0,1],dx)\ni f\mapsto m_f\in\mathscr{M}([0,1]^2)$ -- defined by $m_f(E):=\int_E f(x-y)dxdy$ --
 is a (non-surjective) isometry and both spaces are complete, it follows that convergence of $\nu^{(1,N)}_{\psi,\psi}$ in $\mathscr{M}([0,1]^2)$
 would imply convergence of $\rho_n$ in $L^1([0,1],dx)$ to some function therein.
	However this is impossible since $\rho_n$ weakly converges to the Dirac delta $\delta$ centred at $0$ which 
cannot be represented by a function of $L^1([0,1],dx)$.}\hfill 
$\blacksquare$ 
\end{example}

\section{Towards a physical interpretation}\label{Sec: physical interpretation}
In the particular   case where the Hilbert space $\sH$ is finite dimensional and no technical issues must be addressed, it is interesting to discuss the physical interpretations of formula \eqref{Eq: f(aA+bB) in the finite dimensional case}. 

To this end we recall  that if, as assumed, the Hilbert space is separable with   $\dim(\sH) \neq 2$, the  {\em Gleason theorem} \cite{Gle57,gleasonbook} establishes that  
 quantum states viewed as $\sigma$-additive measures  $\mu$ on $\mathcal{L}(\mathsf{H})$ are one-to-one with  positive, trace-class, unit-trace operators $\rho_\mu : \sH \to \sH$.
{\em Pure states} are the extremal elements $\nu$ of the convex body of states 
 and they are in a  one-to-one correspondence with unit vectors $\psi_\nu$ up to phases and $\nu(P)=||P\psi_\nu||^2$ is the probability that $P\in \cL(\sH)$ is true if the pure  state is  represented by $\psi_\nu$.

\subsection{Noncommutativity and quantum interference terms}
We start the analysis with the following elementary observation.
Since the identity operator can be decomposed either as $I=\sum _{\lambda \in \sigma (A)}P^{(A)}_\lambda$ or
 as $I=\sum _{\mu \in \sigma (B)}P^{(B)}_\mu$,  for any $N\in \bN$ and $\psi \in \sH$ we have 
\begin{equation} \psi =\sum_{\substack{\lambda_1, \ldots ,\lambda _N \in \sigma(A)\\\mu_1,\ldots , \mu_N\in \sigma(B)}} P^{(A)}_{\lambda_1}  P^{(B)}_{\mu_1}\cdots P^{(A)}_{\lambda_N}  P^{(B)}_{\mu_N}\psi. \label{coherent}\end{equation}
Above, the state vector  $\psi$ is decomposed into a {\em coherent superposition} of generally {\em non-mutually orthogonal} vector states.
These vectors would be mutually orthogonal if the PVMs commuted. The presence in (\ref{Eq: f(aA+bB) in the finite dimensional case})
 of these non-orthogonal vectors is the source 
of a {\em quantum interference phenomenon} we are going to illustrate.

The weak version of formula \eqref{Eq: f(aA+bB) in the finite dimensional case} reads as follows
\begin{equation}\label{prototype1}
	\langle \phi| f(aA+bB)\psi \rangle
	=\lim_{N\to +\infty}\sum_{\substack{\lambda_1, \ldots ,\lambda _N \in \sigma(A)\\\mu_1,\ldots ,  \mu_N\in \sigma(B)}}
	f\left( \frac{1}{N}\sum_{j=1}^N (a\lambda_j+ b\mu_j)\right) \langle \phi| P^{(A)}_{\lambda_1}P^{(B)}_{\mu_1}\cdots P^{(A)}_{\lambda_N}  
	P^{(B)}_{\mu_N}\psi \rangle\:,
\end{equation}
for any $\phi, \psi \in \sH$. In particular when $\phi=\psi$ we get
\begin{equation}\label{prototype2}\langle \psi|f(aA+bB)\psi \rangle=\lim_{N\to +\infty}
\sum_{\substack{\lambda_1, \ldots, \lambda _N \in \sigma(A)\\\mu_1,\ldots,  \mu_N\in \sigma(B)}} f\left( \frac{1}{N}\sum_{j=1}^N (a\lambda_j+ b\mu_j)\right) \langle \psi| P^{(A)}_{\lambda_1} 
 P^{(B)}_{\mu_1}\cdots P^{(A)}_{\lambda_N}  
P^{(B)}_{\mu_N}\psi \rangle\:,\end{equation}
As usual, the left-hand side coincides to the {\em expectation value} of the observable $f(aA+bB)$ when the state is 
pure and represented by the unit vector $\psi$.
In particular, if $f=1_{[\alpha, \beta)}$ (in this case 
 (\ref{prototype2}) should be replaced by \eqref{PVMAB3}),
the left-hand side is nothing but  the {\em probability of obtaining an outcome in $[\alpha,\beta)$
when measuring $aA+bB$ over the pure state represented by the unit vector $\psi$}.  

Exploiting (\ref{coherent}) in the expansion the left entry of the inner products, formula \eqref{prototype2} can be rephrased to
\begin{multline}\label{prototype3}
\langle \psi|f(aA+bB)\psi \rangle
=\lim_{N\to +\infty}\sum_{\substack{\lambda_1, \ldots, \lambda _N \in \sigma(A)\\\mu_1,\ldots , \mu_N\in \sigma(B)}} 
f\left( \frac{1}{N} \sum_{j=1}^N (a\lambda_j+ b\mu_j)\right)\\
\sum_{\substack{\tilde{\lambda}_1,\ldots,\tilde{\lambda}_N\in\sigma(A)\\\tilde{\mu}_1,\ldots,\tilde{\mu}_N\in\sigma(B)}}
\langle P^{(A)}_{\tilde\lambda_1} 
 P^{(B)}_{\tilde\mu_1}\cdots P^{(A)}_{\tilde\lambda_N}  
P^{(B)}_{\tilde\mu_N}\psi| P^{(A)}_{\lambda_1} 
 P^{(B)}_{\mu_1}\cdots P^{(A)}_{\lambda_N}  
P^{(B)}_{\mu_N}\psi \rangle\:,\\
=\lim_{N\to +\infty}\sum_{\substack{\lambda_1, \ldots ,\lambda _N \in \sigma(A)\\\mu_1,\ldots , \mu_N\in \sigma(B)}}
 f\left( \frac{1}{N} \sum_{j=1}^N (a\lambda_j+ b\mu_j)\right)\|P^{(A)}_{\lambda_1} 
 P^{(B)}_{\mu_1}\cdots P^{(A)}_{\lambda_N}  
P^{(B)}_{\mu_N}\psi \|^2+\\
+\lim_{N\to +\infty}\sum_{\substack{\lambda_1, \ldots ,\lambda _N \in \sigma(A)\\\mu_1,\ldots , \mu_N\in \sigma(B)}}
 f\left( \frac{1}{N} \sum_{j=1}^N (a\lambda_j+ b\mu_j)\right)
\\ \sum_{\substack{\tilde{\lambda}_1,\ldots ,\tilde{\lambda}_N\in \sigma(A)\\
		\tilde{\mu_1},\ldots , \tilde{\mu}_N\in \sigma(B),\\
		(\overline{\tilde \lambda},\overline{\tilde \mu})\neq (\bar \lambda, \bar \mu)}}\langle P^{(A)}_{\tilde\lambda_1} 
 P^{(B)}_{\tilde\mu_1}\cdots P^{(A)}_{\tilde\lambda_N}  
P^{(B)}_{\tilde\mu_N}\psi| P^{(A)}_{\lambda_1} 
 P^{(B)}_{\mu_1}\cdots P^{(A)}_{\lambda_N}  
P^{(B)}_{\mu_N}\psi \rangle\:,
\end{multline}
where we defined  $(\bar \lambda, \bar \mu):=(\lambda _1,..., \lambda _N, \mu _1,..., \mu _N)$ and $ (\overline{\tilde \lambda},\overline{\tilde \mu}):=(\tilde\lambda _1,..., \tilde\lambda _N, \tilde\mu _1,..., \tilde\mu _N)$.\\
The considered expectation value is therefore tantamount  to  the sum  two distinct  kinds of terms. 
\begin{itemize} 
\item[(1)] The first type admits   to some extent (see (1) Remark \ref{noclassical} below)   a ``classical''  probabilistic interpretation relying on a recursive use
of  {\em Born's rule} and  the
 standard 
 {\em L\"uders-von Neumann's  post-measurement state postulate} (see, e.g., \cite{BC, landsman, Moretti2017}). As is well-known, 
the former postulate states that, if the initial pure states is represented by the unit vector $\psi\in \sH$, the probability that 
the outcome of an elementary YES-NO observable $P\in \cL(\sH)$ is $1$ (YES) is $||P\psi||^2$. The latter postulate says that, if  the outcome  resulted to be 
$1$, then the post measurement state is represented by the normalized vector $||P\psi||^{-1}P\psi$.
As a matter of fact,   the positive number $$\|P^{(A)}_{\lambda_1} 
 P^{(B)}_{\mu_1}\cdots P^{(A)}_{\lambda_N}  
P^{(B)}_{\mu_N}\psi \|^2\in [0,1]$$ can be interpreted as a {\em conditional probability}\footnote{The general notion of {\em quantum conditional probability} is still an open and  controversial topic  of the quantum theory \cite{Redei,QP,GB13}. Our discussion  will however involve very  basic  and straightforward arguments.}: the probability that, given the initial pure state of
 the system described by  the normalized vector $\psi \in \sH$, in a sequence of $2N$ measurements of  
observables $O_j$, $J=1,...,2N$, with  $O_j=B$ for $j$ odd and $O_j=A$ for $j$ even, the 
sequence of outcomes is exactly  $(\mu_N, \lambda _N,..., \mu _1, \lambda _1)$. Let us illustrate this point in some detail.
Given a general observable $O$ in our finite dimensional Hilbert space with (necessarily discrete) spectrum $\sigma(O)$, and a
 normalized vector $\phi\in \sH$, 
we shall denote with the symbol $\bP(O=o|\phi) $ the probability that, given that the (pure) state of the system is represented
 by $\phi$,  the outcome of the measurement of $O$ is the real number $o\in \sigma (O)$, namely
$$ \bP(O=o|\phi)=|| P^{(O)}_o\phi||^2,$$
where $P^{(O)}_o$ is the spectral projector associated to the eigenvalue $o\in \sigma(O)$. Let us denote by $\psi _{\lambda _j, \mu_j, \lambda_{j+1},\dots , \lambda _N, \mu _N}$ and $\psi _{ \mu_j, \lambda_{j+1},\mu_{j+1},\dots , \lambda _N, \mu _N}$ the post-measurement normalized vectors
$$  \psi _{\lambda _j, \mu_j, \lambda_{j+1},\dots , \lambda _N, \mu _N}
:=\frac{P^{(A)}_{\lambda_j}P^{(B)}_{ \mu_j}P^{(A)}_{ \lambda_{j+1}}\cdots P^{(A)}_{ \lambda _N}
P^{(B)}_{ \mu _N}\psi}{\|P^{(A)}_{\lambda _j}P^{(B)}_{ \mu_j}P^{(A)}_{ \lambda_{j+1}}\cdots
 P^{(A)}_{ \lambda _N}P^{(B)}_{ \mu _N}\psi\|},$$ $$ \psi _{ \mu_j, \lambda_{j+1},\mu_{j+1},\dots , \lambda _N, \mu _N}
:=\frac{P^{(B)}_{ \mu_j}P^{(A)}_{ \lambda_{j+1}}P^{(A)}_{ \mu_{j+1}}\cdots P^{(A)}_{ \lambda _N}P^{(B)}_{ \mu _N}\psi}{\|P^{(B)}_{ \mu_j}
P^{(A)}_{ \lambda_{j+1}}P^{(A)}_{ \mu_{j+1}}\cdots P^{(A)}_{ \lambda _N}P^{(B)}_{ \mu _N}\psi\|} ,$$
with $j=1,...,N$.
We then have:
\begin{multline*}
	\|P^{(A)}_{\lambda_1}P^{(B)}_{\mu_1}\cdots P^{(A)}_{\lambda_N}  P^{(B)}_{\mu_N}\psi \|^2
	=\bP(A=\lambda _1|\psi_{ \mu_1, \lambda_{2},\dots , \lambda _N, \mu _N})
	\|P^{(B)}_{\mu_1}\cdots P^{(A)}_{\lambda_N}  P^{(B)}_{\mu_N}\psi \|^2\\
	=\bP(A=\lambda _1|\psi_{ \mu_1, \lambda_{2},\dots , \lambda _N, \mu _N})
	\bP(B=\mu _1|\psi_{ \lambda_{2},\mu_2 ,\dots , \lambda _N, \mu _N})
	\|P^{(A)}_{\lambda_2}P^{(B)}_{\mu_2}\cdots P^{(A)}_{\lambda_N}  P^{(B)}_{\mu_N}\psi \|^2\,.
\end{multline*}
By induction over $N$ we get the wanted result:
\begin{multline}
\|P^{(A)}_{\lambda_1} 
 P^{(B)}_{\mu_1}\cdots P^{(A)}_{\lambda_N}  
P^{(B)}_{\mu_N}\psi \|^2\\
=\Pi_{j=1}^{N-1}\bP(A=\lambda _j|\psi_{ \mu_j, \lambda_{j+1},\dots , \lambda _N, \mu _N})\bP(B=\mu _j|\psi_{ \lambda_{j+1},\mu_{j+1} ,\dots , \lambda _N, \mu _N})\\\cdot
\bP(A=\lambda_N|\psi_{\mu_N})\bP(B=\mu_N|\psi)\:.
\end{multline}
It is now evident that the right-hand side  has the natural interpretation as the probability that, given the initial pure state of
 the system described by  the normalized vector $\psi \in \sH$, in a sequence of $2N$ measurements of  
observables $O_j$, $J=1,...,2N$, with  $O_j=B$ for $j$ odd and $O_j=A$ for $j$ even, the 
sequence of outcomes is exactly  $(\mu_N, \lambda _N,..., \mu _1, \lambda _1)$. \\
\item[(2)]
The remaining  terms appearing in the last line of equation  \eqref{prototype3} are of the form \begin{equation}\label{interferenceterms}
\langle P^{(A)}_{\tilde\lambda_1} 
 P^{(B)}_{\tilde\mu_1}\cdots P^{(A)}_{\tilde\lambda_N}  
P^{(B)}_{\tilde\mu_N}\psi | P^{(A)}_{\lambda_1} 
 P^{(B)}_{\mu_1}\cdots P^{(A)}_{\lambda_N}  
P^{(B)}_{\mu_N}\psi \rangle,\end{equation}with $ (\tilde \lambda_1,\ldots, \tilde \mu_{N})\neq (\lambda_1,\ldots, \mu_N)$,
  are complex numbers which can be interpreted as {\em quantum amplitudes}  or {\em interference} terms arising from the coherent
 decomposition (\ref{coherent}) into non-orthogonal addends.
   In the trivial case where the spectral measures $\{P^{(A)}_\lambda\}_{\lambda \in \sigma(A)}$ and 
$\{P^{(B)}_\mu\}_{\mu \in \sigma(B)}$ commute, it is easy to check that all the interference terms \eqref{interferenceterms} 
vanish as already remarked. In this sense, their presence has an evident  quantum nature.
\end{itemize}
\begin{remark} \label{noclassical} {\em $\null$\\
{\bf (1)} It is worth stressing that also the {\em quantum conditional probability} discussed in (1) 
enjoys different properties than those of the {\em classical conditional probabilty} when the PVMs do not commute. 
 This is because we are here dealing with the quantum notion of probability characterized by {\em Gleason's theorem},
 defined on a orthomodular lattice rather than a $\sigma$-algebra. In particular the classical identity $\bP(U|V)\bP(V) = \bP(V|U)\bP(U)$ has 
not a corresponding  version at quantum level when $U$ and $V$ are replaced for quantum incompatible events. This is reflected from the fact 
that the order in the sequence of measurements $P^{(B)}_{\mu_1}\cdots P^{(A)}_{\lambda_N}  
P^{(B)}_{\mu_N}$ cannot be ignored.\\
{\bf (2)} 
If $\sH$ is finite dimensional, we have the following limit in the operator norm
\begin{align*}
	f(A+B)=
	\lim_{N\to +\infty }
	f_N(A,B)=
	\lim_{N\to +\infty }
	\sum_{\substack{
		\lambda_1,\ldots,\lambda_N\in\sigma(A)\\
		\mu_1,\ldots,\mu_N\in\sigma(B)}}
	f\bigg(\frac{1}{N}\sum_{j=1}^N(\lambda_j+\mu_j)\bigg)
	P^{(A)}_{\lambda_1}P^{(B)}_{\mu_1}\cdots P^{(A)}_{\lambda_N}P^{(B)}_{\mu_N}\,.
\end{align*}
We now provide an expansion of $f_N(A,B)$ in terms of spectral commutators of $A$ and $B$ which highlights the link between 
non-commutativity of spectral measures and appearance of terms of second kind in the right-hand side of (\ref{prototype3}).
For that let $C_{\lambda,\mu}:=[P^{(A)}_\lambda,P^{(B)}_\mu]$ and set $\wp_N(\lambda_1,\ldots,\lambda_N;\mu_1,\ldots,\mu_N):=P^{(A)}_{\lambda_1}P^{(B)}_{\mu_1}\cdots P^{(A)}_{\lambda_N}P^{(B)}_{\mu_N}$.
A direct computation leads to
\begin{align*}
	\wp_N(\lambda_1,\ldots,\lambda_N;\mu_1,\ldots,\mu_N)&:=
	P^{(A)}_{\lambda_1}P^{(B)}_{\mu_1}\cdots P^{(A)}_{\lambda_N}P^{(B)}_{\mu_N}\\&=
	(P^{(B)}_{\mu_1}\delta_{\lambda_1,\lambda_2}+C_{\lambda_1,\mu_1})
	P^{(A)}_{\lambda_2}P^{(B)}_{\mu_2}\cdots P^{(A)}_{\lambda_N}P^{(B)}_{\mu_N}\\&=
	(P^{(B)}_{\mu_1}\delta_{\lambda_1,\lambda_2}+C_{\lambda_1,\mu_1})
	\wp_{N-1}(\lambda_2,\ldots,\lambda_N;\mu_2,\ldots,\mu_N)\\&=
	\prod_{\ell=1}^N(P^{(B)}_{\mu_\ell}\Delta_{\lambda_\ell,\lambda_{\ell+1}}+C_{\lambda_\ell,\mu_\ell})\,,
\end{align*}
where we set $\Delta_{\lambda_\ell,\lambda_{\ell+1}}=\delta_{\lambda_\ell,\lambda_{\ell+1}}$ for $1\leq \ell<N$, $\delta _{i,j}$ denoting the Kronecker symbol, while $\Delta_{\lambda_N,\lambda_{N+1}}=P^{(A)}_{\lambda_N}$.
Notice that the product of the terms $P^{(B)}_{\mu_\ell}\Delta_{\lambda_\ell,\lambda_{\ell+1}}$ provides the contribution arising 
from the classical convolution formula, namely $\prod_{\ell=1}^NP^{(B)}_{\mu_\ell}\Delta_{\lambda_\ell,\lambda_{\ell+1}}=\prod_{\ell=1}^{N-1}\delta_{\lambda_\ell,\lambda_{\ell+1}}\delta_{\mu_\ell,\mu_{\ell+1}}P^{(A)}_{\lambda_1}P^{(B)}_{\mu_1}$.
Products containing a commutator factor $C_{\lambda,\mu}$ represent a ``perturbative" quantum correction to the commutative formula.
Writing the product of spectral projectors as a sum over insertion of commutators we obtain a ``perturbative"
 expansion of $f(A+B)$ whose first terms are given by
\begin{align*}
	&f(A+B)=
	\sum_{\substack{\lambda\in\sigma(A)\\\mu\in\sigma(B)}}
	f(\lambda+\mu)P^{(B)}_\lambda P^{(A)}_\mu
	\\&+
	\lim_{N\to +\infty }
	\sum_{\substack{\lambda_1,\lambda_2\in\sigma(A)\\\mu_1,\mu_2\in\sigma(B)}}
	f\bigg[\frac{1}{N}\bigg(\lambda _1+(N-1)\lambda_2+\mu_1+(N-1)\mu_2\bigg)\bigg]
	C_{\lambda_1,\mu_1}P^{(B)}_{\mu_2}P^{(A)}_{\lambda_2}+
	\\&+
	\lim_{N\to +\infty }
	\sum_{\substack{\lambda_1,\lambda_2\in\sigma(A)\\\mu_1,\mu_2\in\sigma(B)}}
	f\bigg[\frac{1}{N}\bigg((N-1)\lambda_1+\lambda_2+(N-1)\mu_1+\mu_2\bigg)\bigg]
	P^{(B)}_{\mu_1}C_{\lambda_2,\mu_2}+
	\\&+
	\lim_{N\to +\infty }
	\sum_{\substack{\lambda_1,\lambda_2\in\sigma(A)\\\mu_1,\mu_2,\mu_3\in\sigma(B)}}
	\sum_{k=2}^{N-1}f\bigg[\frac{1}{N}\bigg((k-1)\mu_1+\mu_2+(N-k)\mu_3+k\lambda_1+(N-k)\lambda_2\bigg)\bigg]\\&\cdot
	P^{(B)}_{\mu_1}C_{\lambda_1,\mu_2}P^{(B)}_{\mu_3}P^{(A)}_{\lambda_2}+
	\cdots
\end{align*}
}\hfill $\blacksquare$
\end{remark}

 \subsection{Sum over all possible histories}
From now on, we assume $a=b=1$ for the sake of simplicity,  still sticking to the case of a finite dimensional Hilbert space $\sH$.
 Formula (\ref{Eq: f(aA+bB) in the finite dimensional case})
allows us to construct (weakly) the spectral measure of $A+B$ in terms of a particular limiting procedure. 
Let us consider, for any $N\in \bN$ the finite sequence of $2N$ observables $O_1,...,O_{2N}$, where for $j=1,...,N$ 
$O_{2j}=A$ and $O_{2j-1}=B$. It corresponds to a sequence of  alternating measurements of the couple of observables $A$ and $B$.
As discussed above the statistics of these sequences of measurements is related to the complex measures $\nu_{\phi,\psi}^{(A,B,\dots,A,B)}$ on 
$\sigma (A)\times\dots \times\sigma (B)$  defined as 
$$\nu_{\phi,\psi}^{(A,B,\dots,A,B)}(I_1\times J_1\times \cdots I_N\times J_N)=\langle\phi |P^{(A)}_{I_1}P^{(B)}_{J_1} \cdots P^{(A)}_{I_N}P^{(B)}_{J_N} \psi \rangle.$$
We can in this way  associate to any observable $O_n$ a map  $T(O_n)$ on $\Omega :=(\sigma (A)\times\sigma (B))^\bN$ as follows,
$$T(O_{2j})(\lambda _1, \mu_1,\dots, \lambda _n, \mu_n,\dots)=\lambda _j, \qquad  T(O_{2j-1})(\lambda _1, \mu_1,\dots, \lambda _n, \mu_n,\dots)=\mu _j,\qquad j\geq 1.$$
We shall henceforth set $T(O_{2j})\equiv \xi_j$ and $T(O_{2j-1})\equiv \eta_j$, $j\geq 1$  to shorten the notation. 
Formally we have for any $N\geq 1$ and $f_1,..., f_N, g_1, ..., g_N\in C_b(\bR)$
\begin{equation}\label{QMe}\langle\phi | \Pi_{j=1}^Nf_j(A)g_j(B)\psi \rangle=\bE[\Pi_{j=1}^Nf_j(\xi_j)g_j(\eta_j)]\end{equation}
where the {\em expectation value} on the right-hand side  of \eqref{QMe} stands for the integral
\begin{equation}\label{INT-N}\int_{\sigma (A)\times\dots \times\sigma (B)}\Pi_{j=1}^Nf_j(\lambda_j)g_j(\mu_j)d\nu^{(A,B,\dots,A,B)}_{\phi,\psi}(\lambda_1,...,\mu_N).\end{equation}
Referring to a  measure space  $(\Omega, \cF, \nu) $, with $\Omega = (\sigma (A)\times\sigma (B))^\bN$, $\cF $ the $\sigma $ algebra 
generated by the cylinder sets (see appendix \ref{App: projective limit of complex measure}) and $\nu$
 a sigma additive measure on $\cF$ with finite total variation (see  section 3.2),  the sets $\{\xi_n\}_{n\in \mathbb{N}}$ and
  $\{\eta_n\}_{n \in \mathbb{N}}$ cannot in
 general be regarded as sequences of {\em random variables} (i.e. measurable functions) thereon, where $\nu$ extends $\nu^{(A,B,\dots,A,B)}_{\phi,\psi}$.
However, for any finite $N$  the integral \eqref{INT-N} is well defined and  $\{\xi_n\}_{n\in \mathbb{N}}$ and  $\{\eta_n\}_{n \in \mathbb{N}}$
 are called {\em pseudo stochastic process}. They find applications, e.g., in the
 mathematical definition of {\em Feynman path integrals} as well as in the construction of {\em Feynman-Kac type} formulae for
 PDEs that do not satisfy maximum principle \cite{AlMa16}.

In this context, the construction so far defined  describes the  spectral measure of $A+B$ in terms of a particular limiting procedure.
Referring to the sequence of pseudo-random variables $\{\zeta_n\}_{n \in \mathbb{N}}$ defined as $\zeta_n:=\xi_n+\eta_n$,  
for $N\geq 1$ let $S_N$ denote the map 
$S_N:=\sum_{n=1}^N\zeta_n$. Then, according to formula \eqref{Eq: f(aA+bB) in the finite dimensional case}, for any
 function $f\in \cF(\bR)$  (or also $f \in C^\infty(\bR)$  since here
 $A,B \in \gB(\sH)$ automatically because $\dim \sH < +\infty$), it holds
$$\langle \phi | f(A+B)\psi\rangle =\lim_{N\to +\infty}\bE\left[f\left(\frac{S_N}{N}\right)\right]\:,$$
which can be interpreted as a weak convergence of the (complex) distribution of the arithmetic mean of the 
first $N$ pseudorandom variables $\zeta_n$ 
to the measure $\cB(\bR) \ni E \mapsto \langle \phi | P^{(A+B)}_{E}\psi\rangle$. Since we are dealing with  discrete spectra, the 
formula above can be rephrased to
$$\langle \phi | f(A+B)\psi\rangle =\lim_{N\to +\infty}\sum_{\substack{\lambda_1,\ldots,\lambda_N\in\sigma(A)\\\mu_1,\ldots,\mu_N\in\sigma(B)}}
f\left(\frac{1}{N}\sum_{n=1}^N(\lambda _n+\mu_n)\right)\nu_{\phi,\psi}^{(A,B,...,A,B)}(\{\lambda _1, \mu_1,\dots, \lambda_N, \mu_N\})$$
and the right-hand side can be interpreted as a ``sum over all possible histories''. In other words that is an integral over 
the set of all possible outcomes of a sequence of consecutive measurements of the observables $A$ and $B$.
It is worth stressing that the same result can be obtained by means of a different 
construction, by considering the family of complex measures $\nu_{\phi,\psi}^{(B,A,...,B,A)}$
 instead of $\nu_{\phi,\psi}^{(A,B,...,A,B)}$, describing a sequence of consecutive
 measurement of the observables $A$ and $B$ which starts with the measurement of $A$.

\section{Summary} 
This work tackles the issue of connecting the spectral measure $P^{(aA+bB)}$ of a linear combination of two  observables $A,B$
with the spectral measures $P^{(A)},P^{(B)}$ of $A$ and $B$,  in the general  case of a pair of non-compatible observables. This problem is  relevant from the physical  point of view  as it helps in justifying  the the  linear structure of the space of observables,  
which is assumed in the standard formulation of Quantum Theory.   As a matter of fact, in Section \ref{Sec: extension to infinite dimensional Hilbert spaces} we established  a new  formula (\ref{mainid2}), relating suitable  functions $f(aA+bB)$ of two generally non-commuting and unbounded  observables $A$ and $B$ with their PVMs  $P^{(A)}$ and $P^{(B)}$. That formula  is the central achievement of this paper 
expressed by Theorem \ref{teorapgen}, whose proof relies  essentially  on Trotter formula.
This constrains the class of functions $f$ admitted by (\ref{mainid2}) -- in Section \ref{Sec: counterexamples} we show that the hypothesis necessary to establish our main result  cannot be relaxed -- however, this class of functions is nevertheless sufficiently large to allow the  application of  (\ref{mainid2}) to various cases of physical relevance discussed in Section \ref{Sec: some applications}.
In fact, our formula  has a nice interplay with the structure of Jordan algebra of observables and also with  the notion of Feynman integral and the Feynman-Kac formula since, essentially,  it specializes to them when $A$ and $B$ are suitably chosen.
In addition, identity  (\ref{mainid2}) has an appealing aspect from the physical viewpoint  as it contains some terms which are suitable for some operational physical intepretation, as discussed in 
Section 
\ref{Sec: physical interpretation}.
   \\

\noindent {\bf Acknowledgments}.  The authors are grateful to C. Fewster, R. Ghiloni, and I. Khavkine for useful discussions.
We also thank an anonymous reviewer for his/her detailed and careful  suggestions to improve the text and for having pointed out  further relevant bibliography.

\appendix

\section{Proof of some propositions}\label{App: proof of some propositions}

{\bf Proof of Lemma \ref{proptec}}.  
(a) First of all, let us prove that  there is a complex Borel measure on $\bR^{2N}$ (with finite total variation according to  Section \ref{Sec: introduction}) 
satisfying  \eqref{reg-measures}. Notice that, if such a measure exists, it must be unique  in view of  \cite[Theorem 3.3]{Bill}.
Let $\{h_k\}_{k=1,..,d_n}$ be an orthonormal basis of the finite-dimensional subspace $Q_n(\sH)$, where $d_n=\mbox{dim } (Q_n(\sH))$.  Then 
\begin{align*}
	\nu_{\phi,\psi,Q}^{(N,n)}(\times_{\ell=1}^nI_\ell\times J_\ell)&:=
	\langle\phi |P^{(A)}_{I_1}Q_{n}P^{(B)}_{J_1}\cdots P^{(A)}_{I_N}Q_{n}P^{(B)}_{J_N}\psi\rangle
	\\&=
	\sum_{k_1,\ldots,k_{2N-1}\leq d_n}\langle\phi | P^{(A)}_{I_1}h_{k_1}\rangle \langle h_{k_1}|P^{(B)}_{J_1}h_{k_2}\rangle \cdots  \langle h_{k_{2N-1}}|P^{(B)}_{J_N}\psi\rangle 
\end{align*}
 Let us consider the unitary maps $U^{(A)}\colon\sH\to L^2(\bN \times \bR,\nu^{(A)})$, $U^{(B)}\colon\sH\to L^2(\bN \times \bR,\nu^{(B)})$ defined by the spectral representation of the self-adjoint operators $A,B$ as in Theorem \ref{teorap}.
Setting $h^{(A)}_j(r):=(U^{(A)}h_j)(r)$ and $h^{(B)}_j(s):=(U^{(B)}h_j)(s)$, where $r,s\in \Gamma := \bN \times \bR$ the latter equipped with the 
product measure when the topology on $\bN$ is the discrete one,  we have, 
\begin{align*}
	&\nu^{(N,n)}_{\phi,\psi,Q}(\times_{\ell=1}^nI_\ell\times J_\ell)=
	\sum_{k_1,\ldots,k_{2N-1}\leq d_n}\langle \phi^{(A)}|1_{I_1}h^{(A)}_{k_1}\rangle \langle h^{(B)}_{k_1}|1_{J_1}h^{(B)}_{k_2}\rangle \cdots\langle h^{(B)}_{k_{2N-1}}|1_{J_N}\psi^{(B)}\rangle\\&=
	\int _{(\bN \times \bR)^{2N}} 1_{\bN \times I_1 \times \cdots \times \bN \times J_N}(r_1, \ldots, s_N)
\rho^{(n)}_{\phi,\psi,Q}(r_1, \ldots ,s_N)\mathrm{d}\nu^{(A)}(r_1)\cdots\mathrm{d}\nu^{(B)}(s_N)\,,
\end{align*}
where we defined $\phi^{(A)}:=U^{(A)}\phi$, $\psi^{(B)}:=U^{(B)}\psi$ and
\begin{align*}
	&\rho^{(n)}_{\phi,\psi,Q}(r_1, \ldots , s_N):=\sum_{k_1,\ldots,k_{2N-1}\leq d_n}
	\overline{\phi^{(A)}(r_1)}h^{(A)}_{k_1}(r_1)\overline{h^{(B)}_{k_1}(s_1)}\cdots\overline{h^{(B)}_{k_{2N-1}}(s_N)}\psi^{(B)}(s_N)\,.
\end{align*}
Each factor on the right-hand side (like 
$\overline{\phi^{(A)}(r_1)}h^{(A)}_{k_1}(r_1)$) is  made of a product of a pair of $L^2$ functions with the same argument ($r_1$), therefore it is an element of $L^1$ with respect to the corresponding measure $\nu^{(A)}$.
Therefore  $\rho^{(n)}_{\phi,\psi,Q}\in L^1((\bN \times \bR)^{2N}, (\nu^{(A)}\otimes\nu^{(B)})^{\otimes N})$ because it is a finite
sum of $L^1$ functions in the product measure. Hence, the set function $\mu^{(N,n)}_{\phi,\psi,Q}:\cB (\Gamma^{2N})\to \bC$ defined as
\begin{equation}\mu^{(N,n)}_{\phi,\psi,Q}(F):=\int _F \rho^{(n)}_{\phi,\psi,Q}(r_1, \ldots , s_N)\mathrm{d}\nu^{(A)}(r_1)\cdots\mathrm{d}\nu^{(B)}(s_N)\:, \quad F\in \cB(\Gamma^{2N})\:,\label{mu}\end{equation}
is a $\sigma$-additive complex measure with total variation
$|\mu^{(N,n)}_{\phi,\psi,Q}|(\Gamma^{2N}) \leq \|\rho^{(n)}_{\phi,\psi,Q}\|_{L^1}$. Let  $\Pi:\Gamma ^{2N}\to \bR^{2N}$ be the surjective  map defined by
$$\Pi(m_1, \lambda_1, \ldots, m_N ,\lambda_N, l_1, \mu_1, \ldots,  l_N, \mu_N):=( \lambda_1, \ldots,  \lambda_N,  \mu_1, \ldots,  \mu_N),$$
where $m_1,\ldots,m_N,l_1,\ldots,l_N\in\mathbb{N}$ while $\lambda_1,\ldots,\lambda_N,\mu_1,\ldots,\mu_N\in\mathbb{R}$.
Clearly $\Pi$ is measurable because it is continuous with respect to the natural product topologies.
We then define the complex measure $\nu^{(N,n)}_{\phi,\psi,Q} $ 
on $\cB(\bR^{2N})$ extending the values $\nu_{\phi,\psi,Q}^{(N,n)}(\times_{\ell=1}^nI_\ell\times J_\ell)$
as the {\em image measure} (also called {\em pushforward measure}) of $\mu^{(N,n)}_{\phi,\psi,Q}$, namely $\nu^{(N,n)}_{\phi,\psi,Q}=(\Pi)_*(\mu^{(N,n)}_{\phi,\psi,Q}) := 
\mu^{(N,n)}_{\phi,\psi,Q} \circ \Pi^{-1}$. With the said definition, if $O \subset \bR^{2N} \setminus \sigma(A) \times \cdots \times \sigma(B)$
 is open,  then $\Pi^{-1}(O)$ is open and stays in the complement to  the set $\bN\times \sigma(A) \times \cdots \times \bN\times \sigma(B)$. 
This set includes $\mbox{supp}((\nu^{(A)}\otimes\nu^{(B)})^{\otimes N})$ according to (c) Theorem \ref{teorap}. 
Hence 
$(\nu^{(A)}\otimes\nu^{(B)})^{\otimes N}(\Pi^{-1}(O)) =0$. Using (\ref{mu}) observing that $(\nu^{(A)}\otimes\nu^{(B)})^{\otimes N}$ is a 
positive measure and $\rho^{(n)}_{\phi,\psi,Q}$ is $L^1$ with respect to that measure, we conclude that $|\mu^{(N,n)}_{\phi,\psi,Q}|(\Pi^{-1}(O)) =0$ 
which implies  $|\nu^{(N,n)}_{\phi,\psi,Q}|(O)=0$. In other words, (\ref{supp-reg-measures}) is true.

(b) Let us define
\begin{equation} I^{(N)}_{\phi,\psi }(f):=\int_\bR\langle \phi |  (e^{i \frac{t}{N}aA}e^{i \frac{t}{N}bB})^N\psi\rangle d\nu _f(t)\:,\label{IN}\end{equation}
where $\nu_f = F^{-1}(f)$ referring to (\ref{Eqn: Fourier transform for measure}).
An elementary inductive argument  establishes that $$S_1Q_nS_2Q_n \cdots Q_n S_N \to S_1S_2\cdots S_N\quad \mbox{strongly as $n\to +\infty$,}$$  
 if (1) the orthogonal projectors $Q_n$ satisfy $Q_n \to I$ in the strong sense and (2) $S_1, \ldots, S_N \in \gB(\sH)$ satisfy $\|S_j\| \leq c <+\infty$ for $j=1,\ldots,N$.
Taking advantage of this fact (noticing that $e^{i s C}$ is unitary if $s\in \bR$ and $C$ is selfadjoint, so that $\|e^{i s C}\|=1$),
Lebesgue's dominated convergence theorem and  the Fubini theorem allow us to write
 \begin{align}
I^{(N)}_{\phi,\psi }(f)&=\int_\bR \langle \phi|  e^{i \frac{t}{N}aA}e^{i \frac{t}{N}bB}\cdots e^{i \frac{t}{N}aA}e^{i \frac{t}{N}bB} \psi\rangle d\nu _f(t)\nonumber \\
&=\int_\bR \langle \phi|  \lim_{n\to \infty} e^{i \frac{t}{N}aA}Q_ne^{i \frac{t}{N}bB}\cdots e^{i \frac{t}{N}aA}Q_ne^{i \frac{t}{N}bB} \psi\rangle d\nu _f(t)\nonumber \\
&=\lim_{n\to \infty}  \int_\bR\langle \phi| e^{i \frac{t}{N}aA}Q_ne^{i \frac{t}{N}bB}\cdots e^{i \frac{t}{N}aA}Q_ne^{i \frac{t}{N}bB} \psi\rangle d\nu _f(t)\nonumber \\
&=\lim_{n\to \infty}  \int_\bR\int_{\bR^{2N}}e^{i\frac{t}{N}\sum_{l=1}^N(a\lambda _l+b\mu_l)}d\nu^{N,n}_{\phi,\psi,Q}(\lambda _1, \mu_1,\dots , \lambda _N, \mu_N) d\nu _f(t)\nonumber \\
    &=\lim_{n\to \infty} \int_{\bR^{2N}}f\left(\frac{1}{N}\sum_{l=1}^N(a\lambda _l+b\mu_l)\right)d\nu^{(N,n)}_{\phi,\psi,Q}(\lambda _1, \mu_1,\dots , \lambda _N, \mu_N) \label{limIN}\:.
\end{align}
Due to (\ref{IN}), the limit does not depend on the chosen sequence $Q$ as declared in the hypothesis. 
\hfill $\Box$

\section{Spectral representation theorem}\label{App: spectral representation theorem}
We state and prove here a version of  the spectral representation theorem for selfadjoint operators we exploited  in the proof of Lemma \ref{proptec}. This is nothing but a refinement of part of some results
which can be found in \cite{RS1,Moretti2017}.

\begin{theorem}\label{teorap}
	Let $T: D(T) \to \sH$ a selfadjoint operator in the  separable   Hilbert space $\sH$.
	The following facts are true.
	\begin{itemize}
		\item[(a)]
		There is a  measure space $(\Gamma, \Sigma, \nu)$, 
		where 
		\begin{itemize} \item[(i)] $\Gamma := \bN \times \bR$,
			\item[(ii)] $\Sigma:= \cB(\Gamma)$ referred to the product topology of $\bN \times \bR$, equipping $\bN$ with the discrete topology,
			\item[(iii)] $\nu$ is a positive $\sigma$-additive Borel measure over $\Gamma$ which  can be chosen to be  finite.
		\end{itemize}
		\item[(b)] 
		There is a unitary map $U: \sH \to L^2(\Gamma, \Sigma, \nu)$,
		such that the PVM $P^{(T)}$ of $T$ satisfies
		\begin{equation} \left(U P^{(T)}_E \psi\right)(n,x) =1_E(x)  (U\psi)(n,x)\quad \mbox{for every $E\in \cB(\bR)$, $(n,x)\in \Gamma$.}\label{fine}\end{equation}
		\item[(c)]
		It holds $\mbox{supp}(\nu) \subset \bN \times \sigma(T)$.
	\end{itemize}
\end{theorem}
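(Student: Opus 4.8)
The plan is to deduce the statement from the classical multiplication form of the spectral theorem (as in \cite{RS1,Moretti2017}), the only extra work being to arrange the underlying measure space to be $\bN\times\bR$, the measure to be finite, and its support to sit inside $\bN\times\sigma(T)$. The first step is a \emph{cyclic decomposition}: using separability of $\sH$, I would produce an at most countable family of mutually orthogonal closed subspaces $\{\sH_n\}_{n\in J}$, $J\subseteq\bN$, each invariant under every $P^{(T)}_E$, with $\sH=\bigoplus_{n\in J}\sH_n$ and each $\sH_n$ cyclic, i.e.\ $\sH_n=\overline{\mathrm{span}}\{P^{(T)}_E\psi_n\ :\ E\in\cB(\bR)\}$ for a unit vector $\psi_n$. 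This is the usual Gram--Schmidt-type induction: run through a countable dense subset of $\sH$, at each stage take the first vector not lying in the span of the previously constructed cyclic subspaces, project it onto the orthogonal complement of that span, normalise, and let $\sH_n$ be the cyclic subspace it generates; invariance of the $\sH_m$ under the PVM forces pairwise orthogonality and, by maximality, $\bigoplus_n\sH_n=\sH$. If $\sH$ is finite dimensional the index set $J$ is finite; to have the index set equal to $\bN$ exactly, one appends trivial summands $\sH_n=\{0\}$, which contribute nothing in what follows.

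On each cyclic piece one has the scalar picture. Set $\mu_n(E):=\langle\psi_n|P^{(T)}_E\psi_n\rangle=\|P^{(T)}_E\psi_n\|^2$; this is a finite positive regular Borel measure on $\bR$ with $\mu_n(\bR)=\|\psi_n\|^2=1$, supported in $\sigma(T)$ because $\mbox{supp}(P^{(T)})=\sigma(T)$ and $\bR\setminus\sigma(T)$ is a countable union of open intervals $O$ with $P^{(T)}_O=0$. The assignment $1_E\mapsto P^{(T)}_E\psi_n$ is isometric from simple functions in $L^2(\bR,\mu_n)$ into $\sH_n$, since $\|P^{(T)}_E\psi_n\|^2=\int 1_E\,d\mu_n$, and extends by density of simple functions to a unitary $U_n\colon L^2(\bR,\mu_n)\to\sH_n$ intertwining multiplication by $1_E$ with $P^{(T)}_E|_{\sH_n}$; write $V_n:=U_n^{-1}$.

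It then remains to glue everything into a single finite measure. Fix weights $c_n>0$ with $\sum_n c_n<\infty$, say $c_n=2^{-n}$, and define $\nu$ on $\Gamma=\bN\times\bR$ with $\Sigma=\cB(\Gamma)$ (product topology, $\bN$ discrete) by $\nu(\{n\}\times E):=c_n\mu_n(E)$, extended $\sigma$-additively; then $\nu$ is positive and $\nu(\Gamma)=\sum_n c_n<\infty$, giving (a). There is the canonical unitary $L^2(\Gamma,\Sigma,\nu)\cong\bigoplus_n L^2(\bR,c_n\mu_n)$, and $f\mapsto c_n^{-1/2}f$ is a unitary $L^2(\bR,\mu_n)\to L^2(\bR,c_n\mu_n)$ commuting with multiplication by $1_E$. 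Composing, $U:=\bigoplus_n\big(c_n^{-1/2}V_n\big)\colon\sH\to L^2(\Gamma,\Sigma,\nu)$ is unitary and, by construction, $(UP^{(T)}_E\psi)(n,x)=1_E(x)(U\psi)(n,x)$ for all $E\in\cB(\bR)$, which is (b). Finally, each $\mu_n$ vanishes on $\bR\setminus\sigma(T)$, so $\nu(\bN\times(\bR\setminus\sigma(T)))=0$ and $\mbox{supp}(\nu)\subseteq\bN\times\sigma(T)$, which is (c).

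The only genuinely substantive ingredients are the orthogonal cyclic decomposition in the first step and the identification $U_n$ in the cyclic case; both are classical but require some care (closures, invariance of the cyclic subspaces under all $P^{(T)}_E$, and density of simple functions in $L^2(\mu_n)$). Everything afterwards — choosing summable weights, the rescaling unitaries, and tracking the support — is routine bookkeeping, so I expect the cyclic decomposition to be the main, and essentially the only, obstacle.
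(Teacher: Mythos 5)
Your proposal is correct and follows essentially the same route as the paper's proof: an orthogonal decomposition of $\sH$ into cyclic subspaces for the PVM, the scalar multiplication model $L^2(\bR,\mu_n)$ on each piece, and a gluing into a single finite measure on $\bN\times\bR$ with the support statement coming from $\mbox{supp}(P^{(T)})=\sigma(T)$. The only (immaterial) difference is bookkeeping: you keep the cyclic vectors normalised and insert rescaling unitaries $c_n^{-1/2}$, whereas the paper simply renormalises the cyclic vectors to $\|\psi_j\|^2=2^{-j}$ so that the measures $\nu_{\psi_j}$ are already summable.
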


\begin{proof} (a) and (b). If $\sH=\{0\}$ everything is trivial, so that we assume this is not the case. Take $\psi_0 \in \sH\setminus \{0\}$ and
 define the  subspace $\sH_0 := \{f(T)\psi_0 \:|\: f \in L^2(\bR, \nu_{\psi_0})\}$ where henceforth $\nu_\psi(E):= \langle \psi|P^{(T)}_E\psi\rangle = \|P^{(T)}_E\psi\|^2$.  The linear map 
$$V_0 :  L^2(\bR, \nu_{\psi_0}) \ni f \mapsto f(T)\psi_0\in  \sH_0$$ is evidently isometric because $\|f(T) \psi_0\|^2 = \|f\|^2_{L^2(\bR, \nu_{\psi_0})}$
 from the elementary spectral theory \cite{Moretti2017}, and surjective by construction. Furthermore $\sH_0$ is closed:
  if $\sH \ni \psi = \lim_{n\to +\infty}f_n(T)\psi_0$, then $\{f_n\}_{n\in \bN}$ is Cauchy in $L^2(\bR, \nu_{\psi_0})$ and 
its limit $f\in L^2(\bR, \nu_{\psi_0})$ satisfies $f(T)\psi_0 = \psi$.
Now take $\psi_1 \in \sH_0^\perp$ with $\psi_1 \neq 0$, 
if there is such vector, and construct an analogous isometric map $$V_1 :  L^2(\bR, \nu_{\psi_1}) \ni f \mapsto f(T)\psi_1\in  \sH_1$$
observing that $\sH_1 \perp \sH_0$, because $\langle f(T)\psi_1|g(T)\psi_0 \rangle = \int_{\bR} g(\lambda) d\nu_{f(T)\psi_1,\psi_0}(\lambda)$, 
but again  from the elementary spectral theory,
 $$\nu_{f(T)\psi_1,\psi_0}(E) := \langle f(T) \psi_1|P_E \psi_0\rangle =  \langle f(T) \psi_1|1_E(T) \psi_0\rangle =\langle \psi_1|(\overline{f}\cdot1_E)(T) \psi_0\rangle = 0\:.$$
We can iterate the procedure finding a class of mutually orthogonal closed subspaces $\{\sH_j\}_{j\in \bN}$ and a corresponding
 sequence of isometric surjective maps $V_j :  L^2(\bR, \nu_{\psi_j}) \to \sH_j$, where possibly $\sH_{j} = \{0\}$ and $\nu_{\psi_{j}}=0$  from some $j_0$ on.
We assume that $\psi_j \neq 0$ for all $j\in \bN$ since the other case is trivial.
Since $\sH$ is separable, with a  straightforward application of Zorn's lemma,  we have that the countable  Hilbert orthogonal decomposition holds $\sH = \oplus_{j=0}^{+\infty} \sH_j$.  As a consequence, $\sH$ turns out to be isomorphic to 
$\oplus_{j=0}^{+\infty} L^2(\bR, d\nu_{\psi_j})$ and the isomorphism is nothing but $U' := \oplus_{j=0}^{+\infty} V_j^{-1}:
\oplus_{j=0}^{+\infty} \sH_j \to \oplus_{j=0}^{+\infty}L^2(\bR, d\nu_{\psi_j})\:.$
Notice that,  taking advantage of the spectral identity $P_E^{(T)} = 1_E(T)$,
we have
\begin{equation}  U'P_E^{(T)}\psi = \oplus_{j=0}^{+\infty} 1_E\cdot  (V_j^{-1} Q_j\psi) \label{fine2}\:.
\end{equation}
where $Q_j : \sH \to \sH$ is the orthogonal projector onto $\sH_j$.
At this juncture, we are free to change the normalization of the vectors $\psi_j$ without affecting the found subspaces $\sH_j$, 
requiring that  $\|\psi_j\|^2 = 2^{-j}$. With the given definitions, if $\Gamma := \bN \times \bR$,  $\Sigma: = \cB(\bN \times \bR)$,
 and  $\bR_n := \{n\}\times \bR$, we construct  the Borel measure over $\Gamma$
$$\nu(\cup_n E_n):= \sum_{n=0}^{+\infty}\nu_{\psi_n}(E_n)\:, \quad \mbox{for $E_n \in \cB(\bR_n)\equiv \cB(\bR)$}$$
Observe thar every $E \in \cB(\bN \times \bR)$ uniquely decomposes $E=\cup_n E_n$ with $E_n \in \cB(\bR_n)$, since $\bR_n \cap \bR_m =\emptyset$ for $n \neq m$. Noticing that all measures $\nu_{\psi_n}$ are positive and $\sigma$-additive, it is not difficult to establish that $\nu$ is such. In particular $\nu$ is finite if we choose $\|\psi_j\|^2 = 2^{-j}$,  because
$\nu(\Gamma) = \sum_{n=0}^{+\infty} 2^{-n} = 2< +\infty$. Finally, it is evident that 
$f \in L^2(\Gamma, \Sigma, \nu)$ if and only if $f(n, \cdot ) \in L^2(\bR, \nu_{\psi_n})$ for all $n \in \bN$ and
$\sum_{n=0}^{+\infty}\|f(n, \cdot)\|^2_{L^2(\bR, \nu_{\psi_n})} <+\infty$.
In particular 
$$\|f\|^2_{L^2(\Gamma, \Sigma, \nu)} = \sum_{n=0}^{+\infty}\|f(n, \cdot)\|^2_{L^2(\bR, \nu_{\psi_n})} \:,$$
so that there is a unitary map 
$U'' : \oplus_{j=0}^{+\infty} L^2(\bR, d\nu_{\psi_j}) \to L^2(\Gamma, \Sigma, \nu)\:,$
obviously constructed. The wanted Hilbert-space isomorphism $U$ is therefore $$U := U''U' : \sH \to L^2(\Gamma, \Sigma, \nu)\:.$$
With this definition, (\ref{fine}) is a direct rephrasing   of (\ref{fine2}).  \\
(c) It easily arises from the standard result $\mbox{supp}(P^{(T)})= \sigma(T)$, using $\mbox{supp}(\nu_\psi) \subset \mbox{supp}(P^{(T)})$ and the disjoint 
decomposition of $\Gamma$ into open-closed subsets $\bR_n$.
\end{proof}
\section{Projective limit of complex measures}\label{App: projective limit of complex measure}
We state here a particular case of a general result extensively discussed in \cite{Tho,AlMa16}, concerning an extension to complex
 measures of Kolmogorov  theorem for the existence of a so-called {\em consistent family} of probability measures (see, e.g., \cite{Pro,Xia,Yam,Yeh}).
Let $\bN$ be the index set and let us denote with $\cF_\bN$ the collection of subsets $J\subset \bN$ with a finite number of elements 
and with $|J|$ the cardinality of $J\in \cF_\bN$.  \\
$\cF_\bN$ inherits naturally the structure of a directed set, where the partial order relation $\leq$ is defined as $J\leq K$ iff $J\subset K$,  $J,K \in \cF_\bN$. Let us associate  to  any $J\in \cF_\bN$ a measure space $\bR^J$, naturally isomorphic to $\bR^{|J|}$ and let $\Sigma_J$ denote its  Borel $\sigma$-algebra. For any pair $J,K \in \cF_\bN$ with $J\leq K$ let $\pi^K_J:\bR^K\to \bR^J$ be the projection map, which is continuous hence Borel measurable.  \\
A collections $\{\mu_J\}_{J\in  \cF_\bN}$ of complex measures $\mu_J:\Sigma _J\to \bC$ is said to be {\bf consistent } or 
{\bf projective} if the following compatibility condition is fulfilled for any $J,K \in \cF_\bN$, $J\leq K$:
\begin{equation}\label{consistent}
\mu_J=(\pi^K_J)_*\mu_K,
\end{equation}
where $(\pi^K_J)_*\mu_K$ denotes the image (pushforward) measure of $\mu_K$ under the action of the map $\pi^K_J$, namely
$(\pi^K_J)_*\mu_K (E):=\mu_K\left( (\pi^K_J)^{-1}(E)\right)$, for all $E\in \Sigma_J$.\\
Let $\bR^\bN$ be the space of all real-valued sequences and for any $J\in \cF_\bN$ let $\pi_J:\bR^\bN\to \bR^J$  indicate  the projection map. It is 
simple to check that  for any $J,K \in \cF_\bN$, with $J\leq K$ the following composition property holds:
\begin{equation}\label{composition}
\pi_J=\pi^K_J\circ \pi_K.
\end{equation}
We shall consider the $\sigma $-algebra $\Sigma$ on $\bR^\bN$ generated by the {\bf  cylinder sets}, i.e., the subsets of the form $(\pi_J)^{-1}(E)$
 for some $J\in \cF_\bN$ and $E\in \Sigma_J$.  It is simple to see that, for  every  complex measure $\mu:\Sigma\to \bC$,  it is possible to define a collections $\{\mu_J\}_{J\in  \cF_\bN}$ of complex measures $\mu_J:\Sigma _J\to \bC$ as the pushforward of $\mu$ under the action of the projections $\pi_j$, i.e. 
\begin{equation}\label{existencemu} \mu_J:=(\pi_J)_*\mu.\end{equation}
 Further, by the composition property \eqref{composition}, the resulting collection of measures $\{\mu_J\}_{J\in  \cF_\bN}$ is 
consistent, i.e. satisfies condition \eqref{consistent} which is indeed a necessary condition for the existence of a measure $\mu:\Sigma \to \bC$
 generating the family  $\{\mu_J\}_{J\in  \cF_\bN}$ through \eqref{existencemu}. Remarkably, in the case where the measures $\{\mu_J\}_{J\in  \cF_\bN}$ are probability measures, according to Kolmogorov existence theorem 
\cite{Pro,Xia,Yam,Yeh} the consistency condition \eqref{consistent} is also sufficient for the existence of a probability measure $\mu :\Sigma \to \bR$ such that \eqref{existencemu} holds. This result is a cornerstone of probability theory, actually providing the existence of {\em stochastic processes}. 
 In the case where the measures $\{\mu_J\}_{J\in  \cF_\bN}$ belonging to the projective family are complex (or signed) measures,
 the classical Kolmogorov theorem can no longer be directly applied. Indeed, in this case,  additional conditions have to be required and we 
refer to \cite{Tho} for details. In particular, a fundamental condition is the following upper bound on the total variation of the measures 
belonging to the family $\{\mu_J\}_{J\in  \cF_\bN}$:
 \begin{equation}\label{upperbound}
 \sup_J\|\mu_J\|<\infty\,,\qquad\|\mu_J\|:=|\mu_J|(\mathbb{R}^J)\,,
 \end{equation}
 Indeed, the necessity of condition \eqref{upperbound} can be simply proved by observing that whenever $h:X\to Y$
 is a measurable map between two  measurable spaces $(X, \Sigma _X)$ and $(Y, \Sigma _Y)$ and for any complex measure $\mu $ 
on $(X, \Sigma _X)$, the total variation of  the pushforward measure $(h)_*\mu$ on $(Y, \Sigma _Y)$ satisfies the following inequality \footnote{ 
 For any partition $\{E_j\}_j\subset \Sigma _Y$ of  $Y$, we have that the collection of sets  $\{h^{-1}(E_j)\}_j\subset \Sigma _X$ is a partition of $X$. Hence 
 $$\sum_j|(h)_*\mu (E_j)|=\sum_j|\mu (h^{-1}(E_j))|\leq \|\mu\|.$$}
 \begin{equation}\label{inequalitytotalvariation}
 \|(h)_*\mu\|\leq \|\mu\|
 \end{equation}
 Hence, by equation \eqref{existencemu}, for any $J\in \cF_\bN$ we obtain $$\|\mu_J\|=\|(\pi_J)_*\mu\|\leq \|\mu\|\,,$$ which yields \eqref{upperbound}.

\end{document}